\documentclass[a4paper,12pt]{article}
\usepackage{xcolor}
\usepackage{xifthen}

\def\fontsettingup{2} 

\usepackage{amsmath, amssymb}
\usepackage[T1]{fontenc}
\ifthenelse{\fontsettingup = 1}{ \usepackage{eulerpx, eucal, tgpagella}   }{}
\ifthenelse{\fontsettingup = 2}{ \usepackage{mathpazo, tgpagella} }{}

\usepackage{hyperref}
\hypersetup{
  colorlinks=true,
  linkcolor=blue,
  citecolor=blue,
  filecolor=magenta,
  urlcolor=cyan,
}

\usepackage{algorithm2e}
\usepackage{tikz}
\usepackage{crossreftools} 

\usepackage[a4paper,margin=1in]{geometry}
\usepackage{microtype}
\usepackage{enumitem}
\usepackage{needspace}
\setlist[itemize]{leftmargin=1.5em,itemsep=0.3em,topsep=0.3em}
\allowdisplaybreaks[2]

\usepackage{amsthm, thmtools}
\usepackage[capitalise,nameinlink]{cleveref}

\newcommand{\addsharedenv}[1]{\declaretheorem[sibling=theorem]{#1}}

\theoremstyle{plain} 
\declaretheorem{theorem} 
\forcsvlist{\addsharedenv}{observation, claim, fact, lemma, proposition, corollary}

\theoremstyle{definition} 
\forcsvlist{\addsharedenv}{definition, remark, condition, assumption, example}

\ifthenelse{\fontsettingup = 1}{
  \def\*#1{\mathbf{#1}} 
  \def\+#1{\mathcal{#1}} 
  \def\-#1{\mathrm{#1}} 
  \def\^#1{\mathbb{#1}} 
  \def\!#1{\mathfrak{#1}} 
}{}

\ifthenelse{\fontsettingup = 2}{
  \def\*#1{\boldsymbol{#1}} 
  \def\+#1{\mathcal{#1}} 
  \def\-#1{\mathrm{#1}} 
  \def\=#1{\mathbb{#1}} 
  \def\!#1{\mathfrak{#1}} 
}{}

\def\oPr{\mathbf{Pr}}
\renewcommand{\Pr}[2][]{ \ifthenelse{\isempty{#1}}
  {\oPr\left[#2\right]}
  {\oPr_{#1}\left[#2\right]} } 

\def\oE{\mathbb{E}}
\newcommand{\E}[2][]{ \ifthenelse{\isempty{#1}}
  {\oE\left[#2\right]}
  {\oE_{#1}\left[#2\right]} }

\usepackage{xparse}

\def\oVar{\mathbf{Var}}
\NewDocumentCommand{\Var}{ O{} O{} m }{
  \ifthenelse{\isempty{#1}} {
    \ifthenelse{\isempty{#2}} {
      \oVar\left[#3\right]
    } {
      \oVar^{#2}\left[#3\right]
    }
  } {
    \ifthenelse{\isempty{#2}} {
      \oVar_{#1}\left[#3\right]
    } {
      \oVar_{#1}^{#2}\left[#3\right]
    }
  }
}

\def\oEnt{\mathbf{Ent}}
\NewDocumentCommand{\Ent}{ O{} O{} m }{
  \ifthenelse{\isempty{#1}} {
    \ifthenelse{\isempty{#2}} {
      \oEnt\left[#3\right]
    } {
      \oEnt^{#2}\left[#3\right]
    }
  } {
    \ifthenelse{\isempty{#2}} {
      \oEnt_{#1}\left[#3\right]
    } {
      \oEnt_{#1}^{#2}\left[#3\right]
    }
  }
}

\newcommand{\DTV}[2]{\-D_{\mathrm{TV}}\left({#1},{#2}\right)}
\newcommand{\dist}{\mathrm{dist}}
\newcommand{\e}{\mathrm{e}}
\renewcommand{\epsilon}{\varepsilon}
\renewcommand{\emptyset}{\varnothing}
\newcommand{\norm}[1]{\left\Vert#1\right\Vert}
\newcommand{\set}[1]{\left\{#1\right\}}
\newcommand{\tuple}[1]{\left(#1\right)}

\newcommand{\inner}[2]{\left\langle #1,#2\right\rangle}
\newcommand{\tp}{\tuple}

\newcommand{\abs}[1]{\left\vert#1\right\vert}

\newcommand{\Ker}{\mathsf{Ker}\;}
\newcommand{\Ran}{\mathsf{Ran}\;}
\newcommand{\indep}{\mathrel{\perp\mkern-10mu\perp}}

\title{A Spectral Local-to-Global Principle for Spin Systems on Graphs with Girth At Least Five}
\author{Xiaoyu Chen\thanks{Email: \texttt{xiaoyu@mit.edu}, 
Massachusetts Institute of Technology. Supported by the NSF CAREER grant CCF-2443045, and the Reed Fund at MIT.} \and Kuikui Liu\thanks{Email: \texttt{liukui@mit.edu}, 
Massachusetts Institute of Technology. Supported by the NSF CAREER grant CCF-2443045, and the Reed Fund at MIT.}}
\date{}

\begin{document}
\maketitle

\begin{abstract}

It is proved that, for every $\delta \in (0,1)$, the Glauber dynamics for
the uniform distribution on proper $q$-colorings is rapidly mixing when
$q \geq (1+\delta)\Delta$ and the underlying graph has girth at least $5$
and maximum degree $\Delta = \Omega_\delta(1)$. 
This result also extends to general multi-spin systems satisfying a \emph{local spectral contraction} condition, including the anti-ferromagnetic Potts model with $q\geq (1+\delta)(1-\beta)\Delta$.
These results are achieved by a new spectral local-to-global principle on graphs with girth at least five for general multi-spin systems, and a novel Fourier analysis for Glauber dynamics on a star.
The main ideas behind all the proofs were developed through several rounds of interaction with GPT-5.6 Sol Ultra.
\end{abstract}

\clearpage
\tableofcontents
\clearpage

\section{Introduction}

Sampling from the uniform distribution $\mu$ on proper $q$-colorings is
a fundamental problem in randomized algorithms, probability, and
statistical physics.  Despite decades of progress, efficient sampling
when $q$ is close to the maximum degree $\Delta$ remains a major open problem. One of the simplest and most well-studied samplers is the \emph{(heat-bath) Glauber dynamics}, which proceeds by iterating the following two-step procedure: choose a
vertex uniformly and recolor it uniformly from the colors absent from its
neighborhood. This Markov chain is reversible with stationary distribution $\mu$, and ergodic when $q \geq \Delta + 2$.

A long-standing folklore conjecture states that
Glauber dynamics mixes rapidly throughout the range $q \geq \Delta + 2$ \cite{Jerrum1995}. For general graphs of maximum degree $\Delta$, Jerrum proved optimal
$O(n\log n)$ mixing for Glauber dynamics when $q>2\Delta$ \cite{Jerrum1995}. The current state-of-the-art result is due to Carlson--Vigoda \cite{CarlsonVigoda2025}, who established rapid mixing of a Markov chain known as the flip dynamics when $q > C\Delta$ for $C \approx 1.809$, building upon prior advances by Vigoda \cite{Vigoda2000} and Chen--Delcourt--Moitra--Perarnau--Postle \cite{ChenEtAl2019}.

Stronger results are known when the graph is locally sparse.  Let
$\alpha^*\simeq1.763$ be the positive solution of
$\alpha^*=\exp(1/\alpha^*)$.  Dyer and Frieze, Hayes, and Molloy developed
burn-in and local-uniformity methods for large-girth graphs
\cite{DyerFrieze2003,Hayes2003,Molloy2004}.  Dyer, Frieze, Hayes, and Vigoda
subsequently proved optimal $O(n\log n)$ mixing of the Glauber dynamics,
for every fixed $\varepsilon>0$ and all sufficiently large $\Delta$, when either
$q>(\alpha^*+\varepsilon)\Delta$ and the girth is at least five, or
$q>(1.489+\varepsilon)\Delta$ and the girth is at least seven
\cite{DFHV2013}.  Spectral-independence arguments later reached the
$\alpha^*$ threshold on all triangle-free graphs
\cite{CGSV2021,FGYZ2022}; subsequent work obtained an almost-optimal
spectral-gap bound \cite{JPV2022} and then an optimal order-$1/n$ spectral
gap \cite{ChenFeng2025} in this regime.

The regime closest to the conjectured threshold has so far required
additional girth.  Hayes and Vigoda introduced a non-Markovian coupling
targeting the range $q\geq(1+\delta)\Delta$ at girth at least eleven in the
large-degree setting $\Delta=\Omega(\log n)$
\cite{HayesVigoda2003}.  Jain, Mizgerd, and Vigoda recently resolved the
technical obstacles in this framework and extended it to all sufficiently
large constant degrees, proving optimal $O_\delta(n\log n)$ mixing of the
Metropolis chain \cite{JMV2026}.  In a different direction, Chen, Liu, Mani,
and Moitra
proved optimal Glauber mixing for fixed $\Delta$ and $q\geq\Delta+3$ when
the girth is sufficiently large as a function of $\Delta$
\cite{CLMM2023}.


\begin{theorem}
\label{thm:main}
Fix $\delta\in(0,1)$.  Let $G=(V,E)$ be a finite simple graph on $n$
vertices with maximum degree $\Delta$ and girth at least $5$, and let $q$
be an integer.  If $\Delta$ is sufficiently large in terms of $\delta$ and
\[
  q\geq (1+\delta)\Delta,
\]
then the Glauber dynamics on proper vertex $q$-colorings is irreducible,
has spectral gap
$\Omega_\delta(1/n)$, and satisfies
\[
  t_{\mathrm{mix}}(\varepsilon)
  =O_\delta\left(n^2\log q+n\log\frac1\varepsilon\right),
  \qquad \varepsilon\in(0,1).
\]
\end{theorem}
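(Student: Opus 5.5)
The plan is to derive \cref{thm:main} from a spectral local-to-global principle. The gap of the global Glauber dynamics will be controlled by the gaps of Glauber dynamics on stars: a vertex $v$ together with its neighborhood $N(v)$, under arbitrary pinnings of the outside. Girth at least $5$ is exactly what makes this work. Fix $v$ and pin every vertex at distance $\geq 2$. Then the neighbors of $v$ are pairwise nonadjacent, since there are no triangles. Two neighbors also share no common neighbor other than $v$, since there are no $4$-cycles. Hence, conditional on the outside, the conditional law on $\{v\}\cup N(v)$ is an honest star: each leaf $u$ has a list $L_u$ of colors not blocked by its outside neighbors, and the center has a list $L_v$.

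First, formulate the local condition, the \emph{local spectral contraction}. For every pinned star, the Glauber dynamics on the star restricted to functions of the center's spin contracts in variance by a factor $1-\Omega_\delta(1)$ per leaf-and-center update round. Equivalently, we need a bound on the influence matrix of the leaves on the center in the $\ell^2$ sense, not merely $\ell^\infty$. Second, prove a local-to-global theorem. Decompose the Dirichlet form of global Glauber dynamics into star contributions $\sum_v$. Use a variance-decomposition (approximate tensorization) argument: $\mathrm{Var}_\mu(f)\le C\sum_v \mu[\mathrm{Var}_v f]$, with $C$ depending only on the contraction constant. The approach is to write $\mathrm{Var}$ via a two-step Markov chain (a down-up walk through pinnings at distance two). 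We then show that the associated operator has norm $\le 1-\Omega(1/n)$ by a Garland-type or trickle-down argument. There, the link of a vertex is its star, and the girth assumption ensures that links decouple into products over leaves. This yields gap $\Omega_\delta(1/n)$.

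Third, verify the local condition for colorings with $q\ge(1+\delta)\Delta$ via Fourier analysis on the star. Expand functions of the center's color in a basis on $[q]$. After resampling all leaves given the center, the center's conditional law is proportional to $\prod_{u}(1-\mathbf 1[c\in L_u]/|L_u\setminus\{\cdot\}|\cdots)$. Because $q\ge(1+\delta)\Delta$, a typical leaf has $|L_u|\ge\delta\Delta$ available colors with high probability under the pinning. This is where we need local uniformity: for large $\Delta$, outside pinnings sampled from $\mu$ leave $\Omega(\delta q)$ colors per leaf except with small probability. Each leaf then perturbs the center's distribution by $O(1/(\delta\Delta))$ in a mean-zero way. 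The product of $\Delta$ such nearly independent perturbations has $\ell^2$ correlation $\exp(-\Omega(\delta))$ bounded away from $1$, not merely $\ell^\infty$, and this gives the contraction.

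The main obstacle is this step. Worst-case pinnings can make leaves have tiny lists, so the contraction must hold on average over pinnings drawn from $\mu$. The local-to-global theorem must therefore accept an averaged, not worst-case, local hypothesis. I would first try to prove the star estimate in the $\ell^2(\mu)$ norm weighted by the pinning distribution, absorbing bad pinnings with a Dyer--Frieze-type uniformity bound. Finally, irreducibility for $q\ge\Delta+2$ is standard. The mixing bound follows from $t_{\mathrm{mix}}(\varepsilon)\le \mathrm{gap}^{-1}(\log(1/\mu_{\min})+\log(1/\varepsilon))$ with $\log(1/\mu_{\min})\le n\log q$.
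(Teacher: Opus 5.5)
Your architecture matches the paper's: closed stars as the local gadgets, conditionally independent leaves from girth $5$, a Fourier analysis on the star, and the final mixing bound via $\log(1/\mu_{\min})\le n\log q$. But the two load-bearing steps are, respectively, missing and mis-specified.

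\textbf{The local-to-global step is not supplied.} In the simplicial-complex picture of a spin system, the link of a vertex is the conditional law of the remaining $n-1$ spins, not a star. Garland/trickle-down needs expansion of those links, so ``a down-up walk through pinnings at distance two whose links are stars'' is not a theorem you can invoke. Building exactly this reduction is the main point of the paper, and it does so with the Bochner identity: the gap is at least $\gamma$ iff $\|\mathcal{L}f\|^2\ge\gamma\langle f,\mathcal{L}f\rangle$ for all $f\perp\mathbf{1}$.
\begin{itemize}
\item In the expansion of $\|\mathcal{L}f\|^2$, cross terms at distance $\ge3$ equal $\|\mathcal{L}_u\mathcal{L}_wf\|^2\ge0$ and are dropped.
\item Each edge term is split between the stars of its two endpoints.
\item Each distance-$2$ term goes to the star of the unique common neighbor, which is unique because there are no $4$-cycles.
\item Each star is then handled, pinning by pinning, by a local Bochner inequality for the weighted dynamics $\mathcal{L}_v+\frac{1+\eta_v}{2}\mathcal{L}_{N(v)}$ with gap $(1-\eta_v)/2$.
\end{itemize}
This yields $\|\mathcal{L}f\|^2\ge\frac12\sum_v\bigl(1-\sum_{u\in N(v)}\eta_u\bigr)\langle f,\mathcal{L}_vf\rangle$.

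\textbf{This computation dictates the local condition, and yours is not it.} The center$\to$leaves$\to$center chain on functions of $X_v$ contracts by $\|(P_v)_{00}\|=O_\delta(1/\Delta)$ in this regime. So a $1-\Omega_\delta(1)$ contraction holds with huge room and carries no information. What is needed is $\eta_v\approx(1-\epsilon)/\Delta$, i.e.\ the sharp bound $\|(P_v)_{10}\|^2\le(1-\epsilon+o(1))/\Delta$ on the degree-one Hoeffding part of $P_vh$. Your ``$\ell^2$ influence'' rephrasing points the right way, but this $1/\Delta$-scale normalization with constant below one is the whole content. For colorings it reduces to $\max_{\mathfrak{a}}\nu_v(\mathfrak{a})\sum_{i\in N(v)}p_i(\mathfrak{a})\le(1-\epsilon)/\Delta$.

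\textbf{Your ``main obstacle'' is a misdiagnosis, and your proposed remedy would fail if it were needed.}
\begin{itemize}
\item Under every feasible pinning, a leaf $u\in N(v)$ has at most $\Delta-1$ neighbors outside $N[v]$. Its list therefore has size at least $q-\Delta+1\ge\delta\Delta+1$ deterministically, and $p_u(\mathfrak{a})\le1/(\delta\Delta)$. There are no tiny lists.
\item If a color $\mathfrak{a}$ is heavily demanded, i.e.\ $T_{\mathfrak{a}}:=\sum_u p_u(\mathfrak{a})$ is large, then $\nu_v(\mathfrak{a})=w_{\mathfrak{a}}/Z$ with $w_{\mathfrak{a}}\le e^{-T_{\mathfrak{a}}}$ is correspondingly small.
\item Combining this with the AM--GM bound $Z\ge q\exp\bigl(-(1+\frac{1}{\delta\Delta})\frac{d}{q}\bigr)$ and $Te^{-T}\le e^{-1}$ gives $\nu_v(\mathfrak{a})T_{\mathfrak{a}}\le1/q\le1/((1+\delta)\Delta)$ for every pinning.
\end{itemize}
So the local hypothesis holds uniformly over pinnings, which is exactly what the spectral argument requires.

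By contrast, an averaged hypothesis backed by Dyer--Frieze local uniformity would not close the argument. The Poincar\'e inequality is worst case over $f$, and $f$ may concentrate its variance on the rare bad pinnings, so their small $\mu$-probability cannot be absorbed. Local uniformity is a burn-in/coupling tool, not a spectral one.
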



Compared with the recent result of \cite{JMV2026}, our result improves the girth requirement from $11$ to $5$.
\begin{remark}
One advantage of our approach compared to those based on spectral independence (see e.g. \cite{CLMM2023}) is that the dependence of our mixing time on $1/\delta$ is polynomial, not exponential.

If the maximum degree is also bounded in terms of $\delta$, that is,
$\Delta=O_\delta(1)$, then combining our result with the universality of
spectral independence~\cite{AnariEtAl2024} yields an $O_\delta(1)$ bound on
spectral independence and, consequently, the optimal $O_\delta(n\log n)$
mixing time~\cite{ChenLiuVigoda2021}.
However, this will introduce an exponential dependence on $\delta$ for the mixing time.
\end{remark}

\paragraph{Anti-ferromagnetic Potts model.}
The techniques also generalize to the anti-ferromagnetic Potts model.
Let $\beta \in [0,1]$ be the edge-activity and $q \geq 2$ be an integer.
The anti-ferromagnetic Potts model defines a Gibbs distribution $\mu$ on $G = (V,E)$ as
\begin{align*}
  \mu(\sigma) \propto \beta^{\abs{\set{uv\in E \mid \sigma_u = \sigma_v}}}, \quad \sigma \in [q]^V.
\end{align*}
In particular, when $\beta = 1$, $\mu$ becomes a product distribution; and when $\beta = 0$, $\mu$ is the uniform distribution on proper $q$-colorings.
The following generalization of \Cref{thm:main} is proved for the Potts model.

\begin{theorem}
\label{thm:Potts}
Fix $\delta\in(0,1)$.  Let $G=(V,E)$ be a finite simple graph on $n$
vertices with maximum degree $\Delta$ and girth at least $5$, let
$\beta\in[0,1]$, and let $q$ be an integer.  If $\Delta$ is
sufficiently large in terms of $\delta$ and
\begin{align*}
  q\geq(1+\delta)(1-\beta)\Delta,
\end{align*}
then the Glauber dynamics for the anti-ferromagnetic $q$-state Potts
model with interaction parameter $\beta$ is irreducible, has spectral
gap $\Omega_\delta(1/n)$, and satisfies
\begin{align} \label{eq:spectral-gap-mixing-bound}
  t_{\mathrm{mix}}(\varepsilon)
  =
  O_\delta\left(
    n^2\log q+n\log\frac1\varepsilon
  \right),
  \qquad
  \varepsilon\in(0,1),
\end{align}
when $\beta=0$.  If $0<\beta\leq1$, then
\begin{align*}
  t_{\mathrm{mix}}(\varepsilon)
  =
  O_\delta\left(
    n^2\log q+n^2\Delta\log\frac1\beta
    +n\log\frac1\varepsilon
  \right),
  \qquad
  \varepsilon\in(0,1).
\end{align*}
\end{theorem}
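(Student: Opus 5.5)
We prove \Cref{thm:Potts} by the same route as \Cref{thm:main}. We feed the Potts Gibbs distribution into the general spectral local-to-global principle for girth-five graphs announced in the abstract, and we verify its hypothesis, the local spectral contraction condition, by a Fourier analysis of Glauber dynamics on a star. Concretely, the plan is as follows. First, we reduce the global spectral gap to a family of local inequalities, one for each vertex $v$ and each boundary condition $\tau$ on the vertices at distance $2$ from $v$. Girth at least five makes the ball of radius one around $v$ an induced star $K_{1,\deg(v)}$ whose leaves are pairwise non-adjacent and have no common neighbour other than $v$. Under a pinning, the leaves therefore see independent external fields: leaf $u$ takes color $c$ with weight proportional to $\beta^{m_u(c)}$, where $m_u(c)$ is the number of outside neighbours of $u$ with color $c$. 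The local-to-global principle then needs, for each such pinned star, a spectral contraction bound of the form
\[
  \lambda_2(\text{local operator on the star}) \le \frac{1-\eta}{\Delta}
\]
for some $\eta=\eta(\delta)>0$. Summing these bounds gives a global gap of $\Omega_\delta(1/n)$.

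Second, we establish this local bound for the Potts star. We expand functions on $[q]^{\{v\}\cup N(v)}$ in the basis $\bigotimes_u L^2(\nu_u)$, where $\nu_u$ is the external-field marginal of leaf $u$. The star interaction $\prod_u \beta^{\mathbf 1[\sigma_u=\sigma_v]}$ couples the center to each leaf only through the rank-structured matrix $J-(1-\beta)I$. Conditioned on the center color $c$, the leaves are independent, and leaf $u$ avoids $c$ with probability boosted by the factor $(1-(1-\beta)\nu_u(c))^{-1}$. The key quantity is therefore $\sum_u (1-\beta)\nu_u(c)$ compared with the effective number of colors. Each leaf has at most $\Delta-1$ outside neighbours, so a typical color has $\nu_u(c)\lesssim 1/(q-(1-\beta)\Delta)$. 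With this estimate, the influence of the center on a leaf, and of a leaf back on the center, has total $\ell_2$ operator weight at most about $(1-\beta)\Delta/q\le 1/(1+\delta)$, which is exactly where $q\ge(1+\delta)(1-\beta)\Delta$ enters. The Fourier computation then bounds the nontrivial eigenvalues of the two-level (center/leaves) correlation operator by $(1+o_\Delta(1))/(1+\delta)$. The $o_\Delta(1)$ terms come from second-order interactions among leaves through the center, and these are controlled once $\Delta$ is large in terms of $\delta$.

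Third, we convert the gap into mixing times. Irreducibility for $\beta>0$ is immediate. For $\beta=0$ we argue that $q\ge(1+\delta)\Delta\ge\Delta+2$ for large $\Delta$. The mixing time follows from $t_{\mathrm{mix}}(\varepsilon)\le \mathrm{gap}^{-1}\bigl(\tfrac12\log(1/\mu_{\min})+\log(1/\varepsilon)\bigr)$. For colorings, $\log(1/\mu_{\min})\le n\log q$. For $\beta>0$, $\mu_{\min}\ge q^{-n}\beta^{|E|}$ with $|E|\le n\Delta/2$, which yields the extra $n^2\Delta\log(1/\beta)$ term.

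The main obstacle is the second step. The external fields $\nu_u$ can be far from uniform: with $\beta=0$ some leaves may have very few available colors, and the boost factors $(1-(1-\beta)\nu_u(c))^{-1}$ can blow up for individual colors. We would first try to handle this by a worst-case averaging over $c$ against the center marginal. The aim is to show that the center marginal itself down-weights colors that are heavily favoured by many leaves, so that the quantity $\sum_u\nu_u(c)$ in the spectral bound is effectively replaced by an average bounded by $\Delta/(q-(1-\beta)\Delta+O(1))$, uniformly over pinnings.
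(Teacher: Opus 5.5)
There is a genuine gap, and it is in the one step that makes up the paper's proof. The star Fourier analysis and the local-to-global argument are already packaged in \Cref{thm:general}, so they need not be redone for Potts. The only model-specific task is \Cref{cond-local-spectral-contraction}. Write $p_i$ for the external-field marginal of leaf $i$ (your $\nu_u$), $\nu_v$ for the center marginal, and $\vartheta=1-\beta$, so that $B=\vartheta\mathbb I$. The condition then reads $\vartheta p_i(\mathfrak a),\vartheta\nu_v(\mathfrak a)\le\alpha/\Delta$ together with the diagonal contraction bound
\[
  \Delta\,\vartheta^{2}\max_{\mathfrak a\in[q]}\nu_v(\mathfrak a)\sum_{i\in N(v)}p_i(\mathfrak a)\le 1-\epsilon,
\]
uniformly over pinnings. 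Your per-leaf estimate handles the marginal bounds. It does not give the total weight ``about $(1-\beta)\Delta/q$'' that you assert. Paired with the analogous center bound $\vartheta\nu_v(\mathfrak c)\le\vartheta/(q-\vartheta d)$, it only gives roughly $\Delta^2\vartheta^2/(q-\vartheta\Delta)^2\approx\delta^{-2}$, which is below $1$ only when roughly $q>2(1-\beta)\Delta$ (Jerrum's regime).

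Your proposed repair does not close this. The target $\Delta/(q-(1-\beta)\Delta+O(1))$ is the same worst-case scale. Also, averaging over $\mathfrak c$ against the center marginal is not what the condition requires: it needs a bound at every color. The obstacle is misdiagnosed as well. Bernoulli's inequality gives $\sum_{\mathfrak b}\beta^{k_{i,\mathfrak b}}\ge q-\vartheta(\Delta-1)$, hence $\vartheta p_i(\mathfrak c)\le 1/(\delta\Delta+1)$ for every leaf and color. So no individual boost factor blows up; for $\beta=0$ every leaf has at least $\delta\Delta+1$ available colors. The real danger is the aggregate $T_{\mathfrak c}:=\vartheta\sum_i p_i(\mathfrak c)$, which can be of order $1/\delta$ for a single color.

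The missing idea is a pointwise bound in which the same $T_{\mathfrak c}$ appears in both factors. Write $\nu_v(\mathfrak c)=w_{\mathfrak c}/Z$ with $w_{\mathfrak c}=\prod_i(1-\vartheta p_i(\mathfrak c))\le \mathrm{e}^{-T_{\mathfrak c}}$. Using $\log(1-u)\ge-u/(1-u)$, AM--GM, $\sum_{\mathfrak c}T_{\mathfrak c}=\vartheta d$ and $\vartheta d/q\le 1/(1+\delta)$, for $\Delta\ge\delta^{-2}$ one gets
\[
  Z=\sum_{\mathfrak b}w_{\mathfrak b}\ \ge\ q\exp\Bigl(-\bigl(1+\tfrac{1}{\delta\Delta}\bigr)\tfrac{\vartheta d}{q}\Bigr)\ \ge\ \frac{q}{\mathrm{e}} .
\]
Hence, for every color $\mathfrak c$,
\[
  \Delta\vartheta^{2}\nu_v(\mathfrak c)\sum_{i}p_i(\mathfrak c)
  =\frac{\Delta\vartheta}{Z}\,T_{\mathfrak c}w_{\mathfrak c}
  \le\frac{\Delta\vartheta}{q}\cdot\mathrm{e}\,T_{\mathfrak c}\mathrm{e}^{-T_{\mathfrak c}}
  \le\frac{(1-\beta)\Delta}{q}\le\frac{1}{1+\delta}.
\]
This is the quantitative form of your intuition that the center marginal down-weights heavily favoured colors, and it is what produces the sharp threshold. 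It gives local $(1/\delta,\delta/(1+\delta))$-spectral contraction, after which \Cref{thm:general} yields the gap. Your third step (irreducibility, the $\mu_{\min}$ bounds, and the worst-start mixing estimate) matches the paper.
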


We defer the proof of \Cref{thm:Potts} to \Cref{sec:Potts}.
Note that \Cref{thm:main} is a special case of \Cref{thm:Potts} and \Cref{thm:Potts} falls into a general framework that we will introduce below.

\subsection{General multi-spin systems}
Let $G = (V,E)$ be a graph with maximum degree $\Delta$.
Let $q \geq 2$ be an integer.
Let $A \in [0,1]^{q\times q}$ be a symmetric matrix.
Let $\*\lambda \in \mathbb{R}_{\geq 0}^q$.
Assume that at least one configuration in $[q]^V$ has positive weight.
The Gibbs distribution $\mu_G$ with \emph{interaction matrix} $A$ and \emph{external field} $\*\lambda$ on $G$ is defined by
\begin{align*}
  \mu_G(\sigma) \propto \prod_{uv \in E} A_{\sigma_u,\sigma_v} \prod_{v\in V} \lambda_{\sigma_v}, \quad \forall \sigma \in [q]^V.
\end{align*}

\begin{condition}\label{cond-local-spectral-contraction}
  Let $\epsilon \in (0,1]$ and $\alpha \geq 1$.
  We say $\mu_G$ has \emph{local $(\alpha,\epsilon)$-spectral contraction} if, with $B := \*1\*1^\intercal - A$, the following conditions hold for every $v \in V$ and every feasible pinning $\tau \in [q]^{V\setminus N[v]}$:
  \begin{itemize}
  \item it has \emph{$\alpha$-bounded marginals}:
    \begin{align*}
      \norm{B\mu^\tau_{G,v}}_{\infty} \leq \alpha/\Delta,
      \quad \text{and} \quad \norm{B\mu^\tau_{G-v,i}}_\infty \leq \alpha/\Delta, \quad \forall i \in N(v);
    \end{align*}
  \item it has \emph{$\epsilon$-contraction}:
    \begin{align*}
        \norm{
        \-{diag}\tp{\sum_{i\in N(v)} \mu^\tau_{G-v,i}}^{1/2}
        B
        \,\-{diag}\tp{\mu^\tau_{G,v}}^{1/2}
        }_{2}^2 \leq \frac{1-\epsilon}{\Delta},
    \end{align*}
  \end{itemize}
  where we treat marginal probabilities $\mu^\tau_{G,v}$ and $\mu^\tau_{G-v,i}$ as vectors in $\mathbb{R}^q$ and all norms above are  defined in Euclidean space.
\end{condition}

\begin{theorem}\label{thm:general}
  Suppose $G$ is an $n$-vertex simple graph with maximum degree $\Delta$ and girth at least $5$.
  If $\mu_G$ has local $(\alpha,\epsilon)$-spectral contraction and $\Delta = \Omega_{\alpha,\epsilon}(1) \geq 2\alpha$ and the Glauber dynamics on $\mu_G$ is irreducible, then the Glauber dynamics on $\mu_G$ has spectral gap $\Omega_{\alpha,\epsilon}(1/n)$.
\end{theorem}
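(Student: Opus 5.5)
We will prove \Cref{thm:general} by a local-to-global argument in the spirit of spectral independence, but phrased directly at the level of Glauber dynamics on stars. The key object is the \emph{star decomposition}: for each $v\in V$ the closed neighborhood $N[v]$ induces a star (no edges among $N(v)$, since girth $\geq 4$). Moreover, girth $\geq 5$ makes any two neighbors of $v$ have no common neighbor other than $v$. Hence, after pinning $\tau$ on $V\setminus N[v]$, the conditional law $\mu^\tau$ on $N[v]$ is a Gibbs measure on a star. In that measure the leaves $i\in N(v)$ carry effective external fields, and these fields are exactly encoded by the marginals $\mu^\tau_{G-v,i}$.

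The global step is to compare the Dirichlet form of Glauber dynamics with the variance through a block-factorization of variance over stars:
\[
  \Var{f}\leq C\sum_{v\in V}\E{\Var[N[v]]{f}}\Big/\Delta .
\]
We would derive this from a local-to-global theorem of Garland/Kaufman--Oppenheim type applied to the "star-block" walk. The local condition needed is that the pinned distribution on each star has a pairwise-correlation (influence) operator of spectral radius at most $1-\epsilon'$ in the appropriate weighted $L^2$. Concretely, for a star with center $v$, the correlation between $v$ and the leaves factors through the matrix $B$. The weighted operator
\[
  \-{diag}\tp{\textstyle\sum_i \mu_{G-v,i}}^{1/2}B\,\-{diag}(\mu_{G,v})^{1/2}
\]
is then precisely the linearization of the center--leaf influence. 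The $\epsilon$-contraction hypothesis bounds its squared norm by $(1-\epsilon)/\Delta$, summed over $\Delta$ leaves. The $\alpha$-bounded marginals control the second-order error terms, where leaves interact via the center: these are $O(\alpha^2/\Delta)$ and are absorbed once $\Delta=\Omega_{\alpha,\epsilon}(1)$. Girth $5$ is exactly what makes leaf fields independent of one another given $\tau$, so each star is a genuine star and no short cycles spoil the recursion.

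The local step is then to show that Glauber dynamics on the star itself, with these effective fields, has spectral gap $\Omega_{\alpha,\epsilon}(1/\Delta)$, i.e.\ $\Var[N[v]]{f}\leq O(\Delta)\,\mathcal E_{N[v]}(f,f)$. Here we would use a Fourier analysis: expand $f$ on $[q]^{N(v)}\times[q]$ in the product orthonormal basis adapted to the leaf marginals. Conditional on the center's color $c$, the leaves are independent, with each leaf distribution a perturbation of its marginal by a row of $B$ of size $\leq\alpha/\Delta$. The Glauber Dirichlet form is then nearly diagonal in degree, with off-diagonal coupling between degrees $k$ and $k\pm1$ controlled by the contraction operator above. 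Bounding that coupling by $\sqrt{1-\epsilon}$ gives gap $\gtrsim\epsilon/\Delta$ on each level.

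Combining the two steps gives $\Var{f}\leq O_{\alpha,\epsilon}(n)\,\mathcal E(f,f)$ by summing local Dirichlet forms, since each site update is counted $O(\Delta)$ times, once per star containing it. Irreducibility ensures the pinned star measures are well defined and connected. The main obstacle is the global step: making the local-to-global inequality tolerate errors of order $\alpha^2/\Delta$ uniformly over all pinnings. We expect to need an induction over pinned subsets $V\setminus N[v]$ that keeps the contraction constant from degrading, using the requirement $\Delta\geq2\alpha$ to keep center marginals bounded away from degenerate.
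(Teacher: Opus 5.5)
There is a genuine gap, and it is the step you flag yourself as the main obstacle. The inequality $\operatorname{Var}(f)\le\frac{C}{\Delta}\sum_{v}\mathbb{E}\operatorname{Var}_{N[v]}(f)$ with $C=O_{\alpha,\epsilon}(1)$ is an optimal-constant block factorization over stars; for a product measure on a $\Delta$-regular graph it holds with $C=\Delta/(\Delta+1)$. It is also stronger than the theorem you are proving. Since $\mathbb{E}\operatorname{Var}_{N[u]}(f)\ge\mathbb{E}\operatorname{Var}_u(f)$, a Glauber gap of order $1/n$ only yields it with $C=O(\Delta)$.

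You do not specify the complex, its links, or why the link conditions of a Garland/Kaufman--Oppenheim argument would reduce to stars pinned outside $N[v]$. Trickle-down for spin systems involves links under pinnings of arbitrary subsets, and getting an $O(1)$ factorization constant amounts to an $O(1)$ spectral-independence bound. The hypotheses do not give that directly; the paper only recovers it a posteriori, and only for bounded $\Delta$. The ``induction over pinned subsets'' is a hope, not an argument.

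Two further symptoms point the same way. First, girth $5$ is not what makes the pinned $N[v]$ a genuine star; triangle-freeness already suffices. Second, your local step asks for too little. A star gap of order $1/\Delta$ with an unspecified constant needs no contraction at all. In the paper's notation, bounded marginals alone give $r=\|(P_v)_{00}\|=O_\alpha(1/\Delta)$ and $\|(P_v)_{10}\|^2\le r(1-r)$, which already yields a gap of that order. So in your plan the whole $\epsilon$-contraction hypothesis would have to be carried by the unproved global step.

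The missing idea is the paper's Bochner localization, which needs no block factorization. The gap is at least $\gamma$ iff $\|\mathcal{L}f\|^2\ge\gamma\langle f,\mathcal{L}f\rangle$ for $f\perp\mathbf{1}$. Expand $\|\mathcal{L}f\|^2$:
\begin{itemize}
\item cross terms $\langle\mathcal{L}_uf,\mathcal{L}_wf\rangle$ with $\mathrm{dist}(u,w)\ge2$ are nonnegative, since the projections commute, and those at distance $\ge3$ are dropped;
\item by girth $5$, each edge term is split between the stars at its two endpoints;
\item each distance-$2$ term is charged to the star at the unique common neighbor.
\end{itemize}
This gives $\|\mathcal{L}f\|^2\ge\sum_vT_v$, with each $T_v$ built from $\mathcal{L}_v$ and $\mathcal{L}_{N(v)}$ only.

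Now condition on $X_{V\setminus N[v]}$ and apply the Bochner identity to the weighted star dynamics $\mathcal{L}_v+\frac{1+\eta_v}{2}\mathcal{L}_{N(v)}$, assumed to have gap at least $(1-\eta_v)/2$. This gives $T_v\ge\frac12\langle f,\mathcal{L}_vf\rangle-\frac{\eta_v}{2}\langle f,\mathcal{L}_{N(v)}f\rangle$. Summing over $v$ gives global gap $\ge\epsilon/2$ whenever $\sum_{u\in N(v)}\eta_u\le1-\epsilon$.

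This budget forces a sharp local estimate, $\eta_v\le(1-\epsilon/2)/\Delta$, where the constant in front of $1/\Delta$ matters. The paper obtains it from the Hoeffding splitting $K_0\oplus K_1\oplus K_2$ and two Schur complements. These reduce it to $\|(P_v)_{00}\|=O_\alpha(1/\Delta)$ and $\|(P_v)_{10}\|^2\le(1-\epsilon+o(1))/\Delta$, and the latter is exactly where $\epsilon$-contraction is used. A generic $\Omega(1/\Delta)$ star gap cannot be plugged into this scheme. Without some localization identity of this kind, your proposal has no route from the star hypotheses to the global gap.
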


The proof of \Cref{thm:general} is deferred to \Cref{sec:general}.
To obtain \Cref{thm:general}, we introduce a local-to-global theorem (\Cref{thm:local-to-global}) via the Bochner-type identity (\Cref{prop:Bochner-identity}), which reduces the spectral gap of Glauber dynamics on the whole graph (global) to the spectral gap of Glauber dynamics on stars (local).
We then analyze the spectral gap of Glauber dynamics on stars via a sophisticated Fourier analysis (\Cref{poincare-via-proj-norms}).

\begin{remark}
  The abstract framework (\Cref{thm:local-to-global} and \Cref{poincare-via-proj-norms}) is sufficiently general to also apply to antiferromagnetic
  two-spin systems, including the hard-core and Ising models, in a fixed-gap
  $\Delta$-ary uniqueness regime.  Earlier rapid-mixing results used Weitz's
  self-avoiding-walk tree~\cite{Weitz2006} to reach the corresponding
  uniqueness regime for the $(\Delta-1)$-ary tree recursion; see, for example,
  \cite{ALO2020,ChenLiuVigodaContraction2021,ChenFengYinZhang2024,ChenChenYinZhang2025}.  For fixed positive
  uniqueness slack, the distinction between the $\Delta$-ary and
  $(\Delta-1)$-ary regimes is asymptotically negligible as $\Delta\to\infty$.
  It may also be of independent interest that our method entirely avoids the
  self-avoiding-walk tree and thereby sidesteps the obstacles to extending
  Weitz-style reductions to multispin systems identified
  by~\cite{LiuManiPernice2024}.
\end{remark}


\subsection{Discussions about experiments with AI}
This work is mainly inspired by the recent work of G\"obel, Jenssen,
Michelen, Pappik, Perkins, and Schiller~\cite{GoebelEtAl2026}, which gives a
direct Bochner--Bakry--\'Emery proof of the spectral-gap criterion of Chen,
Chen, Chen, Yin, and Zhang~\cite{ChenEtAl2025RandomRegular}.  The original
proof of this criterion proceeds through a trickle-down theorem for field
dynamics~\cite{ChenEtAl2025RandomRegular}.

We first asked GPT-5.6 Sol Ultra to use the Bochner identity to simplify the
proofs of several results obtained by trickle-down methods, including
Oppenheim's trickle-down theorem~\cite{Oppenheim2018,AlevLau2020}, spectral independence via the pairwise
spectral-influence matrix~\cite{LeakeOveisGharan2025}, and the matrix
trickle-down theorem for sampling
edge-colorings~\cite{ALG2021}\footnote{We omit the Bochner-type proofs of
these results because these results are known before and not related to the main result of this paper.}.

With these proofs, we observe that the Bochner identity is useful for
reducing a global spectral-gap estimate to estimates on local gadgets:
\begin{itemize}
\item for Oppenheim's trickle-down theorem, the local gadgets are
  codimension-$2$ links;
\item for spectral independence via pairwise spectral influence, the local
  gadgets are two-part links;
\item for edge-coloring, the local gadgets are endpoint stars, which are
  cliques in the corresponding line graph.
\end{itemize}
These experiments also suggest that GPT is particularly effective at
analyzing such local gadgets.

It was then natural to ask whether this method works for proper vertex
colorings of graphs of girth at least $5$, where stars are again natural
local gadgets.
We brought this question to GPT-5.6 Sol Ultra, which proposed an affirmative
proof strategy.  The human authors then verified the argument, generalized
the proof with further assistance from GPT, and streamlined the paper.
The authors also use GPT-5.6 to improve the exposition and
identify typographical errors. The author takes full responsibility for the contents of this paper.

\begin{remark}[Discussion of the current obstruction]
  The existence of spanning $4$-cycles breaks both the local-to-global argument in \Cref{sec:local-to-global} and the local analysis in \Cref{sec:fourier-analysis-on-star}.
  We asked GPT-5.6 Sol Ultra to work on triangle-free graphs with the Bochner identity, but it did not find a proof after $\geq 20$ hours.

  In fact, spanning $4$-cycles seem to be the only remaining hard gadgets for coloring.
  Given the present manuscript, GPT-5.6 Sol Ultra claims that it has found a proof that generalizes \Cref{thm:main} to all graphs without spanning $4$-cycles.
  The proof follows a high-level plan similar to the one presented in this manuscript. For the local expansion, it uses the fact that, without spanning $4$-cycles, the induced subgraph on the neighbors of any vertex is a matching.
  However, this makes the analysis of the local gadget significantly more sophisticated.
  We omit this result from this manuscript since we did not find any new ideas in it.
\end{remark}

\subsection{Related work}

\paragraph{Rapid mixing for the anti-ferromagnetic Potts model.}
The zero-temperature case $\beta=0$ is the coloring model surveyed above.
At positive temperature, early work used Dobrushin-type and spatial-mixing
arguments to establish rapid Glauber mixing, with subsequent refinements
for bounded-degree graphs and lattices
\cite{SalasSokal1997,GoldbergEtAl2006Potts}.  Later work developed
correlation-decay and zero-free approaches to approximate counting
\cite{YinZhang2015SpatialPotts,BencsBerrekkalRegts2026Counting,
LiuSinclairSrivastava2025Potts}, clarified uniqueness and spatial mixing on
trees \cite{GalanisGoldbergYang2018,deBoerBuysRegts2023,BencsEtAl2023Potts,
CLMM2023,BencsBerrekkalRegts2025}, and obtained polynomial-time sampling on
random regular and sparse random graphs
\cite{BlancaEtAl2020Potts,YinZhang2016Potts,Efthymiou2022Potts}.
Complementary hardness and slow-mixing results describe the obstructions
below the natural scale $q=(1-\beta)\Delta$ and at low temperature
\cite{GalanisStefankovicVigoda2015,LiLiuYang2026Potts}.

\paragraph{Trickle-down methods.}
Beginning with Garland's method, local-to-global spectral arguments
propagate expansion from links to high-dimensional walks; this lineage now
includes local spectral expansion, spectral and entropic independence, and
scalar, matrix, partite, poset, localization, and $\mathcal C$-Lorentzian
trickle-down theorems
\cite{Garland1973,BallmannSwiatkowski1997,Zuk2003,
ParzanchevskiRosenthal2017,KaufmanMass2017,DinurKaufman2017,
Parzanchevski2017,Oppenheim2018,KaufmanOppenheim2020,Oppenheim2020,
AnariEtAl2019,CryanGuoMousa2021,AlevLau2020,ALO2020,GuoMousa2020,
FGYZ2022,JPV2022,ChenLiuVigoda2021,ALG2021,AnariEtAl2022,
KaufmanTessler2026,
AbdolazimiOveisGharan2023,GotlibKaufman2023,AnariKoehlerVuong2024, ChenEtAl2025RandomRegular,
LeakeLindbergOveisGharan2025,LeakeOveisGharan2025}.  Our local-to-global
theorem follows the same philosophy, but it uses an exact operator expansion
and the girth-five geometry to collect every interacting pair of updates
inside a closed star.

\paragraph{Bochner-type inequalities.}
Beginning with Bochner's identity and Bakry--\'Emery theory, comparisons
between a Dirichlet form and its iterated form have been used to derive
Poincar\'e and entropy-decay inequalities
\cite{Bochner1946,BakryEmery1985,BGL2014}.  Discrete versions were
developed for finite Markov chains and interacting-particle systems
\cite{BCDP2006,CaputoPosta2007,Caputo2008,CDP2009,DaiPraPosta2013,
Maas2011,ErbarMaas2012,Mielke2013,EMT2015,FathiMaas2016,EHMT2017,
ErbarFathi2018}, with more recent extensions involving Beckner and
curvature-dimension inequalities and coupling-based curvature bounds
\cite{JungelYue2017,WeberZacher2021,Pedrotti2025}.  Related
squared-generator expansions have also been applied to spectral-gap
estimates for hard-core and anti-ferromagnetic two-spin models
\cite{KondratievKunaOhlerich2013,GoebelEtAl2026,
GuoZhang2026Planar}.  Our Bochner identity
follows this line, using the girth-five geometry to localize pairwise
update terms on closed stars.

\section{Preliminaries}
\subsection{Notation}
\label{subsec:notation}
For a nonnegative integer $m$, write $[m]:=\{1,\ldots,m\}$, with
$[0]:=\varnothing$.  Let
$G=(V,E)$ be a finite simple graph.  We use
\begin{align*}
  n&:=|V|,
  &N(v)&:=\{u\in V:\{u,v\}\in E\},
  &N[v]&:=N(v)\cup\{v\},\\
  \deg_G(v)&:=|N(v)|,
  &\Delta&:=\max_{v\in V}\deg_G(v).
\end{align*}
Thus $N(v)$ and $N[v]$ are respectively the open and closed neighborhoods
of $v$.  We write $\dist_G(u,v)$ for graph distance, and the \emph{girth}
of $G$ is the length of its shortest cycle, with girth $\infty$ for a
forest.

A \emph{spin system} on $G$ with $q$ spins, where $q$ is a positive integer, is a
probability distribution $\mu$ whose support is
$\Omega\subseteq[q]^V$, where $[q]^V$ is the set of maps from $V$ to
$[q]$.  We write
$X=(X_v)_{v\in V}\sim\mu$ for a random configuration.  For
$B\subseteq V$, write $X_B:=(X_v)_{v\in B}$ and, for a deterministic
configuration $x$, write $x_B:=(x_v)_{v\in B}$.

For $B\subseteq V$, a \emph{pinning outside $B$} is an assignment
$\tau\in[q]^{V\setminus B}$.  It is \emph{feasible} if
$\Pr[\mu]{X_{V\setminus B}=\tau}>0$.  For a feasible pinning, we write
$\mu_B^\tau$ for the conditional law of $X_B$ given
$X_{V\setminus B}=\tau$.

For any probability distribution $\nu$ whose finite support is $\Omega$, we
write $\Pr[\nu]{A}$, $\E[\nu]{Y}$, and
$\operatorname{Var}_\nu(Y)$ for the probability of an event $A$ and the
expectation and variance of a random variable $Y$ under $\nu$; in
particular,
\begin{align*}
  \operatorname{Var}_\nu(Y)
  :=\E[\nu]{\bigl(Y-\E[\nu]{Y}\bigr)^2}.
\end{align*}
Conditional variance means variance under the indicated conditional law.
For functions on $\Omega$, define
\begin{align*}
  L^2(\nu)&:=\bigl(\mathbb R^\Omega,\inner{\cdot}{\cdot}_\nu\bigr),
  &\inner{f}{g}_\nu&:=\sum_{x\in\Omega}\nu(x)f(x)g(x).
\end{align*}
We write $\*1_A$ for the indicator of an event $A$ and $\*1$ for the
constant-one function.  The centered subspace is
\begin{align*}
  \*1^\perp
  :=\{f\in L^2(\nu):\inner{f}{\*1}_\nu=0\}
  =\{f\in L^2(\nu):\E[\nu]{f}=0\}.
\end{align*}
We use Roman letters such as $u,v,w,i,j$ for vertices and Fraktur letters
such as $\mathfrak a,\mathfrak b,\mathfrak c$ for colors.

Subscripts in $O_\delta(\cdot)$, $\Omega_\delta(\cdot)$, and
$o_\delta(\cdot)$ allow the implicit constants, thresholds, or rates of
convergence to depend on $\delta$.  In particular,
$\Delta=\Omega_\delta(1)$ means that $\Delta$ is sufficiently large in
terms of $\delta$.

For a finite set $S$, $\binom{S}{2}$ denotes the set of its two-element
subsets.

\subsection{General inner product space}
Throughout this subsection, all vector spaces are finite-dimensional.  An
inner-product space is a real vector space $H$ equipped with a symmetric
bilinear form $\inner{x}{y}_H$ such that $\inner{x}{x}_H>0$ for every
$x\neq0$.  It induces the norm $\norm{x}_H:=\sqrt{\inner{x}{x}_H}$.
For two vectors $f,g\in H$, we say $f\perp g$ if $\inner{f}{g}_H=0$.
For subspaces $K_0,K_1\subseteq H$, the notation $K_0\perp K_1$ means
that every vector in $K_0$ is orthogonal to every vector in $K_1$.  The
orthogonal complement of $K\subseteq H$ is
$K^\perp:=\{x\in H:x\perp y\text{ for every }y\in K\}$.
\begin{remark}
  For the rest of this note, once the inner-product space is clear from
  context, we use $\inner{\cdot}{\cdot}$ for its inner product and
  $\norm{\cdot}$ for the corresponding norm.
\end{remark}

For a subspace $K\subseteq H$, write $\operatorname{proj}_K$ for the
orthogonal projection onto $K$.  For a linear operator $A:H\to H$, write
$A|_K$ for its restriction to $K$, and write
\begin{align*}
  \Ker(A)&:=\{x\in H:Ax=0\},
  &\Ran(A)&:=\{Ax:x\in H\}.
\end{align*}
We write $\mathrm{Id}$ or $\mathbb I$ for the identity operator when its
space is clear.  A subscript indicates the space on which the identity
acts; for example, $\mathbb I_i$ is the identity on a subspace indexed by
$i$.

\begin{definition}
  Let $A, A^*:H\to H$ be linear operators.
  We say $A^*$ is an adjoint operator for $A$ if, for every $f,g\in H$,
  \begin{align*}
    \inner{Af}{g} = \inner{f}{A^*g}.
  \end{align*}
\end{definition}

\begin{proposition}
  Let $A:H\to H$ be a linear operator.
  Then $A^*$ exists and is unique.
\end{proposition}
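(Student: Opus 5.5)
We prove existence by writing $A^*$ down explicitly in coordinates adapted to the inner product, and uniqueness by nondegeneracy of $\inner{\cdot}{\cdot}$. Since $H$ is finite-dimensional and $\inner{\cdot}{\cdot}$ is positive definite, Gram--Schmidt applied to any basis gives an orthonormal basis $e_1,\dots,e_d$ of $H$. Every $g\in H$ then expands as $g=\sum_k \inner{g}{e_k} e_k$: the difference $g-\sum_k\inner{g}{e_k}e_k$ is orthogonal to every $e_j$, hence to all of $H$, hence to itself, so it is zero.

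For existence, define the linear map
\begin{align*}
  A^*g := \sum_{k=1}^d \inner{g}{Ae_k}\, e_k .
\end{align*}
Linearity in $g$ is immediate from bilinearity. To verify the adjoint relation, expand $f=\sum_k \inner{f}{e_k}e_k$. By linearity of $A$ and bilinearity,
\begin{align*}
  \inner{Af}{g}=\sum_k \inner{f}{e_k}\inner{Ae_k}{g}.
\end{align*}
By orthonormality, $\inner{f}{A^*g}$ equals the same sum, using symmetry $\inner{g}{Ae_k}=\inner{Ae_k}{g}$. Equivalently, in matrix form $A^*$ is represented by the transpose of the matrix of $A$ in the basis $(e_k)$.

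For uniqueness, suppose $A^*_1$ and $A^*_2$ both satisfy the defining identity. Then $\inner{f}{A_1^*g-A_2^*g}=0$ for all $f,g$. Taking $f=A_1^*g-A_2^*g$ and using $\inner{x}{x}>0$ for $x\neq 0$ forces $A_1^*g=A_2^*g$ for every $g$.

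No step is a genuine obstacle. The only point requiring care is that existence uses finite-dimensionality, through the orthonormal basis; this is exactly the standing assumption of the subsection. Alternatively, one could obtain $A^*g$ via the Riesz representation of the linear functional $f\mapsto\inner{Af}{g}$, which rests on the same finite-dimensional fact.
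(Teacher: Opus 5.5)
Your proof is correct: the explicit formula $A^*g=\sum_k\inner{g}{Ae_k}e_k$ in an orthonormal basis gives existence, and positive definiteness of the inner product gives uniqueness. The paper omits this proof as standard. Your argument, or equivalently the Riesz-representation route you mention, is the standard proof it has in mind.
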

The proof is standard and is omitted.

We will also use the operator analogue of positive semidefinite matrices.
\begin{definition}
  For a linear operator $A:H\to H$:
  \begin{itemize}
  \item $A$ is \emph{self-adjoint} if $A=A^*$;
  \item $A\succeq0$ means $A=A^*$ and
    $\inner{x}{Ax}_H\geq0$ for every $x\in H$;
  \item $A\succ0$ means $A=A^*$ and
    $\inner{x}{Ax}_H>0$ for every $x\neq0$;
  \item $A\succeq B$ means $A-B\succeq0$;
  \item $A\preceq B$ means $B\succeq A$;
  \item $\norm{A}:=\max_{\norm{x}_H=1}\norm{Ax}_H$ is the operator norm.
  \end{itemize}
  Thus $\succeq$ is an order on self-adjoint operators, not an entry-wise
  inequality.  All spectra below consist of eigenvalues; multiplicities are
  stated explicitly when relevant.
\end{definition}

\begin{fact}
  For every linear operator $A:H\to H$, we have
  $\norm{AA^*} = \norm{A^*A} = \norm{A}^2$.
\end{fact}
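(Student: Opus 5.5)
The plan is to prove the two equalities $\norm{A^*A}=\norm{A}^2$ and $\norm{AA^*}=\norm{A^*}^2$ separately, and then to show $\norm{A^*}=\norm{A}$. Throughout, I use that $H$ is finite-dimensional, so the maximum defining $\norm{\cdot}$ is attained on the compact unit sphere. I also use the defining identity $\inner{Af}{g}=\inner{f}{A^*g}$ together with $(A^*)^*=A$. This last identity follows from uniqueness of adjoints: symmetry of the inner product gives $\inner{A^*f}{g}=\inner{g}{A^*f}=\inner{Ag}{f}=\inner{f}{Ag}$, so $A$ itself is an adjoint for $A^*$.

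\emph{First equality.} For every unit vector $x$,
\[
\norm{Ax}^2=\inner{Ax}{Ax}=\inner{x}{A^*Ax}\le \norm{x}\,\norm{A^*Ax}\le \norm{A^*A}.
\]
The first inequality is Cauchy--Schwarz, which holds in any real inner-product space. Taking the maximum over $x$ gives $\norm{A}^2\le\norm{A^*A}$.

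\emph{Norm of the adjoint.} For unit vectors $x,y$, Cauchy--Schwarz gives $\abs{\inner{A^*x}{y}}=\abs{\inner{x}{Ay}}\le\norm{A}$. Choosing $y=A^*x/\norm{A^*x}$ when $A^*x\neq 0$ yields $\norm{A^*x}\le\norm{A}$, so $\norm{A^*}\le\norm{A}$. Applying the same bound to $A^*$ and using $(A^*)^*=A$ gives the reverse inequality, hence $\norm{A^*}=\norm{A}$.

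\emph{Closing the loop.} Submultiplicativity, $\norm{BC}\le\norm{B}\norm{C}$, follows directly from the definition. It gives $\norm{A^*A}\le\norm{A^*}\norm{A}=\norm{A}^2$, which combined with the first inequality proves $\norm{A^*A}=\norm{A}^2$. Replacing $A$ by $A^*$ and using $(A^*)^*=A$ then gives $\norm{AA^*}=\norm{A^*}^2=\norm{A}^2$.

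The argument has no real obstacle. The only points needing care are justifying $(A^*)^*=A$ from the uniqueness proposition stated earlier, and noting that Cauchy--Schwarz holds for the abstract inner product $\inner{\cdot}{\cdot}_H$.
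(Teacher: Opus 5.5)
Your proof is correct and complete. The paper states this as a standard fact without proof, and your argument is the standard one: Cauchy--Schwarz gives $\norm{A}^2\le\norm{A^*A}$; submultiplicativity together with $\norm{A^*}=\norm{A}$ gives the reverse inequality; and $(A^*)^*=A$ handles $AA^*$. You could even drop the separate adjoint-norm step, since for $A\neq0$ the chain $\norm{A}^2\le\norm{A^*A}\le\norm{A^*}\norm{A}$ already yields $\norm{A}\le\norm{A^*}$, and symmetry gives the reverse.
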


\subsection{Schur complement}
Let $H=H_0\oplus H_1$ be an orthogonal decomposition: every $x\in H$ has a
unique representation $x=x_0+x_1$, with $x_i\in H_i$ and $H_0\perp H_1$.
For a linear operator $A:H\to H$, its blocks are
\[
  A_{ij}:=\operatorname{proj}_{H_i}A|_{H_j}:H_j\longrightarrow H_i.
\]
If $A=A^*$, then
\[
  A=
  \begin{pmatrix}
    A_{00}&A_{01}\\
    A_{10}&A_{11}
  \end{pmatrix},
  \qquad
  A_{00}=A_{00}^*,\quad A_{11}=A_{11}^*,\quad A_{10}=A_{01}^*.
\]

\begin{definition}
Assume $A_{00}$ is invertible.
The \emph{Schur complement} of $A_{00}$ in $A$ is
\begin{align} \label{eq:abstract-Schur-complement}
  A/A_{00}
  :=A_{11}-A_{10}A_{00}^{-1}A_{01}.
\end{align}
\end{definition}

The following result is the main motivation behind the Schur complement.
\begin{proposition} \label{prop:Schur-complement-elimination}
  If $A=A^*$ and $A_{00}\succ0$, then
  \begin{align} \label{eq:Schur-positivity-equivalence}
    A\succeq0
    \quad\Longleftrightarrow\quad
    A/A_{00}\succeq0.
  \end{align}
\end{proposition}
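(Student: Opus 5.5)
The plan is to prove \Cref{prop:Schur-complement-elimination} by a block Gaussian elimination, a congruence $A = L^* D L$ with $L$ invertible and $D$ block diagonal. Write $S:=A/A_{00}$ and define the operator $L:H\to H$ in block form by
\begin{align*}
  L:=\begin{pmatrix}
    \mathbb I_0 & A_{00}^{-1}A_{01}\\
    0 & \mathbb I_1
  \end{pmatrix},
  \qquad
  D:=\begin{pmatrix}
    A_{00} & 0\\
    0 & S
  \end{pmatrix}.
\end{align*}
Using $A_{10}=A_{01}^*$ and $A_{00}=A_{00}^*$ (so $(A_{00}^{-1})^*=A_{00}^{-1}$), one computes blockwise that $L^*DL=A$. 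Concretely, the $(0,0)$ block is $A_{00}$, the off-diagonal blocks are $A_{01}$ and $A_{10}$, and the $(1,1)$ block is $A_{10}A_{00}^{-1}A_{01}+S=A_{11}$. The operator $L$ is invertible, with inverse obtained by negating the off-diagonal block. Also $S=S^*$, since $A_{11}$ and $A_{10}A_{00}^{-1}A_{01}=A_{01}^*A_{00}^{-1}A_{01}$ are both self-adjoint.

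The equivalence then follows from the fact that positive semidefiniteness is preserved under congruence by an invertible operator: $\inner{x}{Ax}=\inner{Lx}{DLx}$, and $x\mapsto Lx$ is a bijection of $H$. Hence $A\succeq 0$ if and only if $D\succeq 0$. Since $D$ is block diagonal on the orthogonal decomposition, $\inner{y}{Dy}=\inner{y_0}{A_{00}y_0}+\inner{y_1}{Sy_1}$.

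For the direction $\Leftarrow$, use $A_{00}\succ0$ and $S\succeq0$ to get $D\succeq0$. For the direction $\Rightarrow$, test $D$ on vectors $y=y_1\in H_1$, which gives $\inner{y_1}{Sy_1}\ge0$. Alternatively, one can argue directly by plugging $x=-A_{00}^{-1}A_{01}y_1+y_1$ into $\inner{x}{Ax}$, which evaluates to $\inner{y_1}{Sy_1}$.

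The only step needing care is the bookkeeping of adjoints of the block maps between different subspaces. In particular, one must check that $(A_{00}^{-1}A_{01})^*=A_{10}A_{00}^{-1}$ as maps $H_0\to H_1$, which follows from the definition $A_{ij}=\operatorname{proj}_{H_i}A|_{H_j}$ and the self-adjointness of $A$. I expect no real obstacle beyond this.
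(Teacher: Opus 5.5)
Your proof is correct. It reaches the result by a different, though closely related, route from the paper's.

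The paper argues variationally. For fixed $z\in H_1$ it minimizes the quadratic $h\mapsto\inner{h+z}{A(h+z)}$ over $h\in H_0$, and uses $A_{00}\succ0$ to find the unique minimizer $h=-A_{00}^{-1}A_{01}z$ from the first-order condition $A_{00}h+A_{01}z=0$. Substituting gives the minimum value $\inner{z}{(A/A_{00})z}$, and $A\succeq0$ is then equivalent to this minimum being nonnegative for every $z$.

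Your congruence $A=L^*DL$ is the operator form of completing the square:
\[
\inner{x}{Ax}=\inner{x_0+A_{00}^{-1}A_{01}x_1}{A_{00}\bigl(x_0+A_{00}^{-1}A_{01}x_1\bigr)}+\inner{x_1}{(A/A_{00})x_1}.
\]
This holds for every $x$, so it yields the paper's minimizer and minimum value together, with no convexity or first-order argument. The first term is nonnegative and vanishes exactly at $x_0=-A_{00}^{-1}A_{01}x_1$. The factorization also gives slightly more for free. For example, $A\succ0$ if and only if $A/A_{00}\succ0$, and by Sylvester's law of inertia $A$ and $A/A_{00}$ have the same number of negative eigenvalues.

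The paper's formulation, in turn, directly exhibits the interpretation in the remark after the proposition: $A/A_{00}$ is the effective operator on $H_1$ once the $H_0$-component has been optimized out. Your adjoint bookkeeping, $(A_{00}^{-1}A_{01})^*=A_{10}A_{00}^{-1}$ as a map $H_0\to H_1$, is correct and is the only point that needs care.
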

\begin{proof}
Note that $A/A_{00}: H_1 \to H_1$ by definition.
Fix $z\in H_1$ and minimize
\[
  \langle h+z,A(h+z)\rangle
  \qquad\text{over }h\in H_0.
\]
Since $A_{00}\succ0$, the unique minimizer is characterized by
\[
  \operatorname{proj}_{H_0}A(h+z)
  =A_{00}h+A_{01}z=0,
\]
and hence $h=-A_{00}^{-1}A_{01}z$. Substitution gives the
variational identity
\[
  \min_{h\in H_0}\langle h+z,A(h+z)\rangle
  =\langle z,(A/A_{00})z\rangle .
\]
Because every vector in $H$ has the form $h+z$, it follows that
\begin{align*}
  A\succeq0
  &\quad\Longleftrightarrow\quad
  \min_{h\in H_0}\langle h+z,A(h+z)\rangle\geq0
  \ \text{for every }z\in H_1 \\
  &\quad\Longleftrightarrow\quad
  A/A_{00}\succeq0 . \qedhere
\end{align*}
\end{proof}

\begin{remark}
The Schur complement $A/A_{00}$ is the effective operator on $H_1$
obtained after choosing the $H_0$-component optimally.
\Cref{prop:Schur-complement-elimination}
therefore reduces positivity of an operator on $H_0\oplus H_1$ to
positivity of an operator on $H_1$. In our applications, $H_1$ has
additional structure that makes the reduced inequality easier to analyze,
even when the explicit expression for $A/A_{00}$ is more complicated.
\end{remark}

\subsection{Spectral gap, mixing time, and a Bochner identity}
Let $\Omega$ be a finite state space, let $\mu$ be a probability
distribution with full support on $\Omega$, and let $P$ be the transition
operator of a Markov chain that is reversible with respect to $\mu$.
Thus $P$ is self-adjoint on $L^2(\mu)$ and $P\*1=\*1$.  We assume that
$P\succeq0$, as will be the case for the Glauber dynamics below, and
define its discrete-time Laplacian by
\begin{align*}
  \+L:=\mathrm{Id}-P.
\end{align*}

Let
\begin{align*}
  1=\lambda_1(P)\geq\lambda_2(P)\geq\cdots\geq0
\end{align*}
be the eigenvalues of $P$, counted with multiplicity.  The
\emph{spectral gap} of $P$ is $1-\lambda_2(P)$.  It is positive exactly
when the chain is irreducible.  The ratio
\begin{align*}
  \frac{\inner{f}{\+L f}_\mu}{\inner{f}{f}_\mu}
\end{align*}
is the Rayleigh quotient of $\+L$ at $f$.  By the Rayleigh--Ritz
variational principle,
\begin{align*}
  1-\lambda_2(P)
  =\inf_{\substack{f\neq0\\f\perp\*1}}
    \frac{\inner{f}{\+L f}_\mu}{\inner{f}{f}_\mu}.
\end{align*}
Equivalently, $1-\lambda_2(P)$ is the largest constant $\gamma$ for
which the \emph{Poincar{\'e} inequality}
\begin{align*}
  \gamma\inner{f}{f}_\mu\leq\inner{f}{\+L f}_\mu,
  \qquad f\perp\*1,
\end{align*}
holds.
The main tool of this note is the following Bochner-type identity.
\begin{proposition} \label{prop:Bochner-identity}
  If $P$ is irreducible, then we have
  \begin{align*}
    \gamma
    :=\inf_{\substack{f\neq0\\f\perp\*1}}
      \frac{\inner{f}{\+L f}_\mu}{\inner{f}{f}_\mu}
    =\inf_{\substack{f\neq0\\f\perp\*1}}
      \frac{\inner{\+L f}{\+L f}_\mu}{\inner{f}{\+L f}_\mu}.
  \end{align*}
\end{proposition}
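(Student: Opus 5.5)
We will prove the identity by diagonalizing $\+L$ in an orthonormal eigenbasis of $L^2(\mu)$ and comparing two infima of weighted averages of eigenvalues. Since $P$ is self-adjoint on $L^2(\mu)$ with $0\preceq P\preceq \mathrm{Id}$, so is $\+L=\mathrm{Id}-P$, with $0\preceq\+L\preceq\mathrm{Id}$. Choose an orthonormal eigenbasis $\phi_1=\*1,\phi_2,\dots,\phi_{|\Omega|}$ of $\+L$ with eigenvalues $0=\ell_1\le \ell_2\le\cdots$, where $\ell_k=1-\lambda_k(P)\in[0,1]$. Irreducibility gives $\ell_2>0$, since the spectral gap is positive exactly when the chain is irreducible. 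Hence $\+L$ restricted to $\*1^\perp$ is positive definite, and $\Ker(\+L)=\mathrm{span}(\*1)$.

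For $0\neq f\perp\*1$, write $f=\sum_{k\ge2}c_k\phi_k$. Then
\[
  \inner{f}{f}=\sum_{k\ge 2} c_k^2,\qquad
  \inner{f}{\+Lf}=\sum_{k\ge 2} \ell_k c_k^2,\qquad
  \inner{\+Lf}{\+Lf}=\sum_{k\ge 2} \ell_k^2 c_k^2 .
\]
All three quantities are positive, so both Rayleigh-type quotients are well defined on $\*1^\perp\setminus\{0\}$.

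\textbf{First quotient.} It is a weighted average of $\{\ell_k\}_{k\ge2}$ with weights proportional to $c_k^2$, so its infimum is $\ell_2=\gamma$.

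\textbf{Second quotient.} It equals $\sum \ell_k\cdot(\ell_kc_k^2)/\sum \ell_k c_k^2$. This is again a weighted average of $\{\ell_k\}_{k\ge2}$, now with nonnegative weights $w_k=\ell_kc_k^2$, not all zero because every $\ell_k>0$ for $k\ge2$. So it is at least $\ell_2$. Taking $f=\phi_2$ attains the value $\ell_2^2/\ell_2=\ell_2$. Hence both infima equal $\ell_2=\gamma$.

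An alternative basis-free route substitutes $g=\+L^{1/2}f$, which ranges over all of $\*1^\perp\setminus\{0\}$ as $f$ does, since $\+L^{1/2}$ is a bijection on $\*1^\perp$. This turns the second quotient into $\inner{g}{\+Lg}/\inner{g}{g}$. The spectral proof above is simpler to write, however.

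The only delicate point is ensuring the denominator $\inner{f}{\+Lf}$ never vanishes on $\*1^\perp\setminus\{0\}$, which is exactly where irreducibility is used. Everything else is routine finite-dimensional spectral theory.
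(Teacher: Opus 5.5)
Your proof is correct and is essentially the paper's argument. The paper works on $\mathbf{1}^\perp$, where irreducibility makes $\mathcal{L}\succ 0$, and shows $\mathcal{L}\succeq\gamma\,\mathrm{Id}$ if and only if $\mathcal{L}^2\succeq\gamma\mathcal{L}$ by conjugating with $\mathcal{L}^{-1/2}$ (and with $\mathcal{L}^{1/2}$ for the converse). That is exactly your ``alternative'' substitution $g=\mathcal{L}^{1/2}f$, and your eigenbasis computation is the coordinate form of the same functional-calculus argument; like the paper's, it uses only $\mathcal{L}\succeq0$ with kernel spanned by $\mathbf{1}$, so it also covers the weighted Laplacians to which the paper later applies the identity.
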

\begin{proof}
  We restrict to the inner-product space
  $(\*1^\perp,\inner{\cdot}{\cdot}_\mu)$.
  The left-hand side means
  $\+L\succeq\gamma\mathrm{Id}$, whereas the right-hand side means
  $\+L^2\succeq\gamma\+L$.  Since $P$ is irreducible,
  $\+L\succ0$ on $\*1^\perp$, and hence $\+L$ is invertible there.  On
  this subspace, $\+L^{1/2}$ denotes the unique positive-definite square
  root of $\+L$, and $\+L^{-1/2}$ denotes its inverse.  Conjugating
  $\+L^2\succeq\gamma\+L$ by $\+L^{-1/2}$ gives
  $\+L\succeq\gamma\mathrm{Id}$, and the converse follows by conjugating
  with $\+L^{1/2}$.
\end{proof}

For an irreducible chain, its mixing time is
\begin{align*}
  t_{\mathrm{mix}}(\varepsilon)
  :=\min\left\{t\in\mathbb Z_{\geq0}:
    \max_{x\in\Omega}\DTV{P^t(x,\cdot)}{\mu}
    \leq\varepsilon\right\},
\end{align*}
where for two probability
distributions $\nu_1$ and $\nu_2$ on $\Omega$, their total variation
distance is
\[
  \DTV{\nu_1}{\nu_2}
  :=\frac12\sum_{\sigma\in\Omega}
    |\nu_1(\sigma)-\nu_2(\sigma)|.
\]
If its spectral gap is $\gamma>0$, then, for
$0<\varepsilon\leq1/2$, positive semidefiniteness of $P$ and $L^2$
contraction give the standard worst-start bound
\cite[Theorem~12.4]{LevinPeres2017}
\begin{align}\label{eq:spgap-implies-mixing-time}
  t_{\mathrm{mix}}(\varepsilon)
  \leq\left\lceil\frac1\gamma\left(
    \log\frac1{2\varepsilon}
    +\frac12\log\frac1{\min_{x\in\Omega}\mu(x)}
  \right)\right\rceil.
\end{align}

\subsection{Glauber dynamics and its Laplacians}
We now specialize the preceding discussion to heat-bath Glauber
dynamics.
Let $V$ be a finite coordinate set with $n=|V|$, let $\mu$ be a
distribution whose support is $\Omega\subseteq[q]^V$, and let
$X\sim\mu$.  For each $v\in V$, define the single-site heat-bath operator
$P_v$ point-wise by
\begin{align} \label{def-Pi}
  (P_vf)(x)
  :=\E[\mu]{f(X)\mid
    X_{V\setminus\{v\}}=x_{V\setminus\{v\}}},
  \qquad x\in\Omega.
\end{align}
Since the state space is finite, $P_v$ can be viewed as a matrix in
$\mathbb{R}^{\Omega\times\Omega}$.  Define the single-site Laplacian by
\begin{align} \label{def-Li}
  \+L_v:=\mathrm{Id}-P_v.
\end{align}
For any $B\subseteq V$, write
\begin{align} \label{def-LB}
  \+L_B:=\sum_{v\in B}\+L_v.
\end{align}
All these operators act on the inner-product space $L^2(\mu)$ defined in
\Cref{subsec:notation}.

\begin{fact}
  The operators $P_v$ and $\+L_v$ are self-adjoint orthogonal
  projections in $L^2(\mu)$.
\end{fact}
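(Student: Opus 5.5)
The plan is to check directly that $P_v$ is idempotent and self-adjoint in $L^2(\mu)$. The result for $\+L_v=\mathrm{Id}-P_v$ then follows formally, since the complement of an orthogonal projection is again one: $(\mathrm{Id}-P_v)^2=\mathrm{Id}-2P_v+P_v^2=\mathrm{Id}-P_v$, and $(\mathrm{Id}-P_v)^*=\mathrm{Id}-P_v^*$.

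First, idempotence. By \eqref{def-Pi}, $(P_vf)(x)$ depends on $x$ only through $x_{V\setminus\{v\}}$. Consequently, applying $P_v$ a second time averages a function that is already constant on each fiber $\{y\in\Omega: y_{V\setminus\{v\}}=x_{V\setminus\{v\}}\}$. The conditional expectation of such a function is the function itself, so $P_v^2=P_v$.

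Next, self-adjointness. Partition $\Omega$ into these fibers and write $\mu$ as a mixture of its conditional laws on the fibers. For $f,g$ we compute $\inner{P_vf}{g}_\mu=\E[\mu]{(P_vf)(X)\,g(X)}$. Conditioning on $X_{V\setminus\{v\}}$, and using that $P_vf$ is measurable with respect to $X_{V\setminus\{v\}}$, this equals $\E[\mu]{(P_vf)(X)\,(P_vg)(X)}$, which is symmetric in $f$ and $g$. Hence $\inner{P_vf}{g}_\mu=\inner{f}{P_vg}_\mu$.

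An equivalent route is to note that $P_v$ is the $L^2(\mu)$-orthogonal projection onto the subspace of functions depending only on $x_{V\setminus\{v\}}$, by the standard characterization of conditional expectation. There is no real obstacle in either version. The only point to be careful about is that $\Omega$ is the support of $\mu$, so every fiber through a point of $\Omega$ has positive mass and the conditional expectation in \eqref{def-Pi} is well defined.
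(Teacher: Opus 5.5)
Your proof is correct. The paper states this as a standard fact without giving a proof, and your argument is the usual justification: $P_v$ is the conditional expectation onto functions of $x_{V\setminus\{v\}}$, so it is idempotent and, by the tower property, self-adjoint in $L^2(\mu)$, with $\mathcal{L}_v=\mathrm{Id}-P_v$ inheriting both properties. Your remark that every fiber through a point of $\Omega$ has positive mass, so the conditional expectation is well defined, is also correct.
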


The transition operator of the discrete-time Glauber dynamics is
\begin{align*}
  P_{\mathrm{GD}}:=\frac1n\sum_{v\in V}P_v.
\end{align*}
Define
\begin{align} \label{def-L}
  \+L:=\sum_{v\in V}\+L_v.
\end{align}
Then
\begin{align*}
  P_{\mathrm{GD}}=\mathrm{Id}-\frac1n\+L,
\end{align*}
so the Laplacian of the discrete-time Glauber dynamics is $\+L/n$.
Since $P_{\mathrm{GD}}$ is an average of orthogonal projections, it is
self-adjoint and positive semidefinite, and the preceding discussion
applies.  In particular, the spectral gap of $P_{\mathrm{GD}}$ is
$1/n$ times the smallest eigenvalue of $\+L$ on $\*1^\perp$.

In the continuous-time Glauber dynamics, every coordinate is updated at
rate one.  Its generator is $-\+L$, so $\+L$ is its Laplacian and its
spectral gap is $n$ times that of the discrete-time Glauber dynamics.

Given nonnegative update rates $a_v$, the Laplacian of the corresponding
weighted continuous-time dynamics is $\sum_{v\in V}a_v\+L_v$.  We call
this dynamics irreducible when the kernel of its Laplacian consists only
of constant functions.  The proof of \Cref{prop:Bochner-identity}
applies verbatim to every such irreducible weighted dynamics.
\section{A local-to-global principle for girth-5 spin systems} \label{sec:local-to-global}
Let $G=(V,E)$ be a graph with girth at least $5$, and use the Gibbs and
pinning conventions from \Cref{subsec:notation}.

\begin{definition}
Given a distribution on configurations on the vertex set $N[v]$, whose
induced subgraph is a star, and a parameter $0\leq\eta_v<1$, the
\emph{weighted continuous-time Glauber
dynamics with parameter $\eta_v$} has Laplacian
\begin{align*}
  \+L_v+\frac{1+\eta_v}{2}\+L_{N(v)}.
\end{align*}
Here the single-site Laplacians are formed with respect to the given
distribution: the center is updated at rate one, and each leaf is updated
at rate $(1+\eta_v)/2$.
\end{definition}

Throughout this paper, we assume that every Gibbs distribution satisfies the following conditional independence property.
\begin{definition}\label{def:cond-indep}
  A Gibbs distribution $\mu$ on $G$ is \emph{conditionally independent} if
  \begin{align*}
    X_u\indep X_v\mid X_{V\setminus\{u,v\}}
    \qquad\text{whenever }\dist_G(u,v)\geq2,
  \end{align*}
where we use $\indep$ to denote independence.
\end{definition}

\begin{theorem} \label{thm:local-to-global}
Let $G=(V,E)$ be a graph with girth at least $5$, and let $\mu$ be a conditionally independent Gibbs
distribution on $G$ whose Glauber dynamics is irreducible.  Suppose that
the parameters $0\leq\eta_v<1$ satisfy the following conditions:
\begin{itemize}
\item for every $v\in V$ and every feasible pinning $\tau$ outside $N[v]$, the
weighted continuous-time Glauber dynamics for $\mu_{N[v]}^\tau$ with
parameter $\eta_v$ has spectral gap at least $(1-\eta_v)/2$;
\item there exists $\epsilon>0$ such that
$\sum_{u\in N(v)}\eta_u\leq1-\epsilon$ for every $v\in V$.
\end{itemize}
Then the continuous-time Glauber dynamics for $\mu$ has spectral gap at
least $\epsilon/2$.
\end{theorem}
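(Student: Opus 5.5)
The plan is to verify the Bochner form of the Poincaré inequality from \Cref{prop:Bochner-identity}: it suffices to show
\[
  \norm{\+L f}_\mu^2\;\geq\;\frac{\epsilon}{2}\,\inner{f}{\+L f}_\mu
  \qquad\text{for all } f.
\]
The inequality is trivial on constants, so it can be checked for all $f$, not only $f\perp\*1$. The idea is to expand $\norm{\+L f}^2$ into pairwise terms and to dominate it from below by a weighted sum of local Bochner inequalities, one for each closed star $N[v]$.

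\textbf{Step 1: global expansion.} Write
\[
  \norm{\+L f}^2=\sum_u\norm{\+L_uf}^2+2\sum_{\{u,w\}}\inner{\+L_uf}{\+L_wf},
\]
and split the pairs by $\dist_G(u,w)\in\{1,2,\geq3\}$. For $\dist_G(u,w)\geq2$, conditional independence (\Cref{def:cond-indep}) gives $P_uP_w=P_wP_u$. Hence $\+L_u$ and $\+L_w$ are commuting projections, and
\[
  \inner{\+L_uf}{\+L_wf}=\norm{\+L_u\+L_wf}^2\geq0 .
\]
The pairs at distance $\geq3$ are dropped. Because $G$ has girth at least $5$, every pair at distance $2$ has a \emph{unique} common neighbour, so each such pair is charged to exactly one star. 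Every edge $\{u,w\}$ lies in exactly the two stars $N[u]$ and $N[w]$. Also, each $N[v]$ induces a star, since $G$ is triangle-free.

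\textbf{Step 2: local Bochner inequality.} Fix $v$ and write $a_v:=(1+\eta_v)/2$. For a pinning $\tau$ outside $N[v]$, the single-site operators of $\mu^\tau_{N[v]}$ agree with the global $\+L_u$, $u\in N[v]$, restricted to the fibre $\{X_{V\setminus N[v]}=\tau\}$. Applying \Cref{prop:Bochner-identity} (weighted version, valid on all functions) to the hypothesised gap $(1-\eta_v)/2$ on each fibre, and averaging over $\tau$, should give
\[
  \norm{\+L_vf+a_v\+L_{N(v)}f}^2\;\geq\;\frac{1-\eta_v}{2}\Bigl(\inner{f}{\+L_vf}+a_v\sum_{i\in N(v)}\inner{f}{\+L_if}\Bigr).
\]
Expanding the left side gives four kinds of terms:
\begin{itemize}
\item $\norm{\+L_vf}^2$;
\item $2a_v\inner{\+L_vf}{\+L_if}$ for each edge $vi$;
\item $a_v^2\norm{\+L_if}^2$ for each leaf $i$;
\item $2a_v^2\inner{\+L_if}{\+L_jf}$ for each leaf pair $i,j$, which are at distance $2$.
\end{itemize}

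\textbf{Step 3: weighted summation and bookkeeping.} Multiply the star-$v$ inequality by $w_v:=1/(1+\eta_v)$, so that $w_va_v=\tfrac12$, and sum over $v$.
\begin{itemize}
\item Each edge then receives total coefficient $2(\tfrac12+\tfrac12)=2$, matching Step 1 exactly.
\item Each distance-$2$ pair receives $2w_va_v^2=a_v\leq 1<2$. The deficit is a nonnegative multiple of a nonnegative term, so it is absorbed.
\item For diagonal terms, use $\norm{\+L_uf}^2=\inner{f}{\+L_uf}$ and move everything to the $\inner{f}{\+L_uf}$ side. The coefficient of $\inner{f}{\+L_uf}$ in the resulting lower bound for $\norm{\+L f}^2$ should come out as
\[
  1-w_u+\frac{w_u(1-\eta_u)}{2}+\sum_{v\in N(u)}\Bigl(\frac{1-\eta_v}{4}-\frac{1+\eta_v}{4}\Bigr)
  =\frac12-\frac12\sum_{v\in N(u)}\eta_v\geq\frac{\epsilon}{2}
\]
by the second hypothesis.
\end{itemize}
Then \Cref{prop:Bochner-identity} yields the claimed spectral gap $\epsilon/2$.

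I expect the main obstacle to be Step 2 together with the exact coefficient matching in Step 3. One must check that the fibrewise star operators really coincide with the global single-site operators. One must also check that the local Bochner inequality may be used without centring on each fibre, since both sides vanish on fibrewise constants. Finally, the choice of weights $w_v$ must make the edge terms, whose sign is not controlled, match exactly; only terms of known sign may be discarded.
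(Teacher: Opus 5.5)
Your proposal is correct and is essentially the paper's proof. It uses the same global Bochner expansion with distance-$\geq 3$ pairs discarded, the same fibrewise local Bochner inequality weighted by $1/(1+\eta_v)$, and the same discarded deficit $(3-\eta_v)/2$ on each distance-$2$ pair, which is exactly the paper's dropped term $\frac{3-\eta_v}{4}(\mathcal{L}_{N(v)}^2-\mathcal{L}_{N(v)})$. Your term-by-term coefficient count reproduces the paper's final bound $\frac12\sum_u\bigl(1-\sum_{v\in N(u)}\eta_v\bigr)\langle f,\mathcal{L}_u f\rangle$.
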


\begin{proof}
Fix $f\in L^2(\mu)$.  Expanding the square, with all pairs below
unordered, gives
\begin{align*}
  \norm{\+L f}^2
  &=\sum_{v\in V}\norm{\+L_vf}^2
    +2\sum_{\{u,w\}\in\binom{V}{2}}
      \inner{\+L_u f}{\+L_w f}.
\end{align*}
If $\dist_G(u,w)\geq2$, conditional independence implies that the two
update projections commute.  Since $\+L_u$ and $\+L_w$ are commuting
orthogonal projections,
\begin{align*}
  \inner{\+L_u f}{\+L_w f}
  =\norm{\+L_u\+L_wf}^2\geq0.
\end{align*}
We therefore discard the terms with distance at least three from $\norm{\+Lf}^2$.
In other words,
\begin{align*}
  \norm{\+L f}^2
  &\geq\sum_{v\in V}\norm{\+L_vf}^2
    +2\sum_{\{u,w\}\in\binom{V}{2} \atop 1 \leq \mathrm{dist}_G(u,w) \leq 2}
    \inner{\+L_u f}{\+L_w f}.
\end{align*}

Because $G$ has no triangles, the vertices in $N(v)$ are pairwise
nonadjacent.  Thus
\begin{align*}
  \+L_{N(v)}^2-\+L_{N(v)}
  =2\sum_{\set{u,w}\in\binom{N(v)}{2}}\+L_u\+L_w\succeq0.
\end{align*}
Because $G$ has no $4$-cycles, every pair at distance two has a unique
common neighbor.
After dropping vertex pairs with distance at least three, a direct double-counting argument now gives
\begin{align*}
  \norm{\+Lf}^2
  \geq\sum_{v\in V}\underbrace{\left(
    \inner{f}{\+L_v f}
    +\inner{\+L_vf}{\+L_{N(v)}f}
    +\inner{f}{(\+L_{N(v)}^2-\+L_{N(v)})f}
  \right)}_{=:T_v}.
\end{align*}

A direct expansion shows that
\begin{align*}
  &T_v-\frac1{1+\eta_v}
  \inner{f}{\left(\+L_v+\frac{1+\eta_v}{2}\+L_{N(v)}\right)^2f}\\
  &\qquad=\inner{f}{\left(
    \frac{\eta_v}{1+\eta_v}\+L_v
    +\frac{3-\eta_v}{4}\+L_{N(v)}^2
    -\+L_{N(v)}
  \right)f}.
\end{align*}
Condition on $X_{V\setminus N[v]}$.  For each feasible pinning $\tau$, apply
the weighted spectral-gap assumption and \Cref{prop:Bochner-identity} in
$L^2(\mu_{N[v]}^\tau)$, and then average over $\tau$.  Conditioning on each such pinning,
the restriction of every global single-site Laplacian with index in $N[v]$
is the corresponding single-site Laplacian for $\mu_{N[v]}^\tau$.  The
same proof applies because the local Laplacian is a nonnegative weighted
sum of single-site Laplacians.  This gives
\begin{align*}
  &\frac1{1+\eta_v}
  \inner{f}{\left(\+L_v+\frac{1+\eta_v}{2}\+L_{N(v)}\right)^2f}\\
  &\qquad\geq\frac{1-\eta_v}{2(1+\eta_v)}
  \inner{f}{\left(\+L_v+\frac{1+\eta_v}{2}\+L_{N(v)}\right)f}.
\end{align*}
Combining the last two displays, and using
$\+L_{N(v)}^2-\+L_{N(v)}\succeq0$, yields
\begin{align*}
  T_v
  &\geq\inner{f}{\left(
    \frac12\+L_v-\frac{\eta_v}{2}\+L_{N(v)}
    +\frac{3-\eta_v}{4}
      (\+L_{N(v)}^2-\+L_{N(v)})
  \right)f}\\
  &\geq\inner{f}{\left(
    \frac12\+L_v-\frac{\eta_v}{2}\+L_{N(v)}
  \right)f}.
\end{align*}
Summing over $v$ and collecting the coefficient of each single-vertex
energy gives
\begin{align*}
  \norm{\+Lf}^2
  &\geq\frac12\sum_{v\in V}
    \left(1-\sum_{u\in N(v)}\eta_u\right)
    \inner{f}{\+L_vf}\\
  &\geq\frac\epsilon2\inner{f}{\+Lf}.
\end{align*}
The Bochner identity and irreducibility finish the proof.
\end{proof}

\section{A Fourier analysis for continuous-time Glauber dynamics on a star} \label{sec:fourier-analysis-on-star}
Let $G = (V, E)$ be a star.
For convenience, let $v$ be the center of the star and let
$N(v)=\set{1,\dots,d}$.
Let $\nu$ be a distribution on $[q]^V$, and let $X\sim\nu$.  Assume that,
conditional on $X_v$, the leaves $(X_i)_{i\in N(v)}$ are independent.
Let $P_u$ and $\+L_u$ denote the heat-bath projection
and single-site Laplacian in $L^2(\nu)$ for every $u\in V$, as defined in
\Cref{def-Pi,def-Li}.
In this section, we will work 
with the inner-product space
\begin{align*}
  (\*1^\perp, \inner{\cdot}{\cdot}_\nu).
\end{align*}

We give a general method to bound the spectral gap of
weighted continuous-time Glauber dynamics with parameter $\eta$.
Conditional independence of the leaves implies that their heat-bath
projections and single-site Laplacians commute.  Hence the order of the
products in the following definition is immaterial.

\begin{definition} \label{def:Hoeffding-expand}
Let $f \in \*1^\perp$ be a function.
Its \emph{orthogonal Hoeffding expansion} is
\begin{align*}
  f = \sum_{\Lambda \subseteq [d]} f_\Lambda, \quad \text{where} \quad f_\Lambda := \tp{\prod_{i\in\Lambda}\+L_i}\tp{\prod_{i\in[d]\setminus\Lambda}P_i}f.
\end{align*}
Here an empty product of operators is the identity.
We call $|\Lambda|$ the \emph{degree} of the component $f_\Lambda$.
\end{definition}

\begin{remark}
    The decomposition in \Cref{def:Hoeffding-expand} is well-defined since
    \begin{align*}
        \mathbb{I} = \prod_i \mathbb{I} = \prod_i\tp{\+L_i + P_i} = \sum_{\Lambda \subseteq [d]} \tp{\prod_{i\in\Lambda}\+L_i}\tp{\prod_{i\in[d]\setminus\Lambda}P_i},
    \end{align*}
    where in the last equation, we use the fact that, for all $i\neq j$, conditional independence of the leaves given the center implies that
  $\+L_i,P_i,\+L_j,P_j$ all commute.
\end{remark}

We collect some properties of the orthogonal Hoeffding expansion.
\begin{proposition} \label{prop:Hoeffding-expand}
  The following hold:
  \begin{itemize}
  \item for all distinct $A,B\subseteq[d]$, we have $f_A\perp f_B$;
  \item $\+L_i f_\Lambda = \*1_{\set{i\in\Lambda}}f_\Lambda$ and
    $\+L_{N(v)} f_\Lambda = \abs{\Lambda} f_\Lambda$.
  \end{itemize}
\end{proposition}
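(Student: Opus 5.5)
The plan is to work entirely at the level of commuting orthogonal projections. By the fact stated earlier, each $P_i$ and $\+L_i=\mathrm{Id}-P_i$ is a self-adjoint orthogonal projection in $L^2(\nu)$. By conditional independence of the leaves given the center (as in the remark after \Cref{def:Hoeffding-expand}), the operators $\+L_i,P_i,\+L_j,P_j$ pairwise commute for $i\neq j$. For $\Lambda\subseteq[d]$, set
\[
  Q_\Lambda:=\tp{\prod_{i\in\Lambda}\+L_i}\tp{\prod_{i\in[d]\setminus\Lambda}P_i},
\]
so that $f_\Lambda=Q_\Lambda f$. The key observation is that each $Q_\Lambda$ is a product of commuting self-adjoint projections, so it is itself a self-adjoint orthogonal projection. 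Both claims will then follow from algebraic identities among the $Q_\Lambda$.

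For orthogonality, take distinct $A,B$ and pick $i\in A\triangle B$, say $i\in A\setminus B$. Then $Q_A$ contains the factor $\+L_i$ and $Q_B$ contains the factor $P_i$. Using self-adjointness and commutativity,
\[
  \inner{f_A}{f_B}=\inner{f}{Q_AQ_Bf}.
\]
Moving $\+L_i$ and $P_i$ next to each other in $Q_AQ_B$ produces the factor $\+L_iP_i=P_i-P_i^2=0$, since $P_i$ is a projection. Hence $\inner{f_A}{f_B}=0$.

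For the eigen-relations, fix $i$ and consider $\+L_iQ_\Lambda$, commuting $\+L_i$ to the position of its own index. There are two cases.
\begin{itemize}
\item If $i\in\Lambda$, the factor becomes $\+L_i^2=\+L_i$, so $\+L_iQ_\Lambda=Q_\Lambda$.
\item If $i\notin\Lambda$, the factor becomes $\+L_iP_i=0$, so $\+L_iQ_\Lambda=0$.
\end{itemize}
Thus $\+L_if_\Lambda=\*1_{\set{i\in\Lambda}}f_\Lambda$. Summing this over $i\in N(v)=[d]$ and using $\+L_{N(v)}=\sum_{i\in[d]}\+L_i$ from \Cref{def-LB} gives $\+L_{N(v)}f_\Lambda=\abs{\Lambda}f_\Lambda$.

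The only point that needs care is the commutation of the leaf projections, and this is exactly what the conditional independence hypothesis supplies. With the center unpinned, the leaves are conditionally independent given $X_v$ and pairwise nonadjacent. Consequently, for $i\neq j$, resampling $i$ and then $j$ given all other coordinates is the same operation as resampling $j$ and then $i$: both amount to drawing $(X_i,X_j)$ from their conditional product law. So $P_iP_j=P_jP_i$. This fact was already used in the well-definedness remark, so I will simply invoke it, and everything else is routine.
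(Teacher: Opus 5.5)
Your proof is correct and matches the paper's argument. The paper also derives both bullets from $\mathcal{L}_iP_i=0$, idempotence of the projections, and the pairwise commutation of $\mathcal{L}_i,P_i,\mathcal{L}_j,P_j$ that follows from conditional independence of the leaves. You only write out the case analysis that the paper summarizes as ``follow directly.''
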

\begin{proof}
  For all $i\in[d]$, we have $\+L_iP_i=P_i-P_i^2=0$, since $P_i$ is
  an orthogonal projection.  Also, for all $i\neq j$, conditional
  independence of the leaves given the center implies that
  $\+L_i,P_i,\+L_j,P_j$ all commute.
  The two bullets follow directly.
\end{proof}

We partition $\*1^\perp$ into three orthogonal subspaces:
\begin{align*}
  \*1^\perp = K_0 \oplus K_1 \oplus K_2,
\end{align*}
where $K_0$ contains the degree-zero component $f_\emptyset$, $K_1$
contains the degree-one components $\sum_i f_{\set{i}}$, and $K_2$
contains the remaining higher-order components.
By \Cref{prop:Hoeffding-expand}, we can formally define them as
\begin{align*}
  K_0 &:= \Ker (\+L_{N(v)}), \\
  K_1 &:= \Ker (\+L_{N(v)} - \mathbb{I}), \\
  K_2 &:= \bigoplus_{k=2}^d
    \Ker (\+L_{N(v)}-k\mathbb{I}).
\end{align*}
When $d<2$, the direct sum defining $K_2$ is the zero subspace.
For convenience, let $K_+ := K_1 \oplus K_2$.
For $i\in\set{0,1,2,+}$, let $\mathbb I_i$ be the identity operator on
$K_i$.
For the rest of this section, we regard every operator as a block operator
with block indices $\set{0,1,2}$.
An index $+$ refers to the combined subspace $K_+=K_1\oplus K_2$.

\begin{definition} \label{block-matrix-via-Ki}
  For a linear map $A:\*1^\perp\to\*1^\perp$ and
  $i,j\in\set{0,1,2,+}$, define its projected block by
  \begin{align*}
    A_{ij} := \operatorname{proj}_{K_i} A \vert_{K_j}
    : K_j \longrightarrow K_i.
  \end{align*}
\end{definition}

\begin{theorem} \label{poincare-via-proj-norms}
  Suppose that $r := \norm{(P_v)_{00}} < 1/5$ and
  \[\eta := \frac{\norm{(P_v)_{10}}^2}{(1-r)(1-4r)} < 1.\]
  Then the following Poincar{\'e} inequality holds with $\eta$.
  \begin{align} \label{target:spgap-weighted-GD}
    \frac{1-\eta}{2}\inner{f}{f} \leq \inner{f}{\tp{\+L_v + \frac{1+\eta}{2}\+L_{N(v)}} f}, \quad f \in \*1^\perp.
  \end{align}
\end{theorem}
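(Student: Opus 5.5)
The plan is to rewrite \eqref{target:spgap-weighted-GD} as positive semidefiniteness of one block operator on $K_0\oplus K_1\oplus K_2$. Then I would use \Cref{prop:Schur-complement-elimination} to reduce it to scalar inequalities involving $r=\norm{(P_v)_{00}}$ and $\norm{(P_v)_{10}}$.

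\textbf{Step 1: the block operator.} Put $c:=(1+\eta)/2$ and substitute $\+L_v=\mathbb I-P_v$. The target is $M\succeq0$ on $\*1^\perp$, where
\[
  M:=c\,\mathbb I+c\,\+L_{N(v)}-P_v .
\]
By \Cref{prop:Hoeffding-expand}, $\+L_{N(v)}$ is block diagonal: it is $0$ on $K_0$, $\mathbb I_1$ on $K_1$, and at least $2\mathbb I_2$ on $K_2$. So $M$ equals $c\mathbb I_0$, $2c\,\mathbb I_1=(1+\eta)\mathbb I_1$, and at least $3c\,\mathbb I_2$ on the three diagonal blocks, minus the full block matrix of $P_v$.

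\textbf{Step 2: the projection identity.} The one structural input beyond this is that $P_v$ is an orthogonal projection, so $P_v=P_v^2$ and $0\preceq P_v\preceq\mathbb I$. Taking the $(0,0)$ block of $P_v=P_v^2$ gives
\[
  (P_v)_{00}-(P_v)_{00}^2=(P_v)_{+0}^*(P_v)_{+0}.
\]
Since $r<1/2$, this yields $\norm{(P_v)_{10}}^2+\norm{(P_v)_{20}}^2\leq r(1-r)$ in the appropriate operator sense. More usefully, for $z\in K_+$ we have $\inner{z}{P_vz}=\norm{P_vz}^2$. I would use this to control the $K_2$ couplings by Cauchy--Schwarz, since $K_2$ carries the large diagonal weight $3c\geq 3/2$.

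\textbf{Step 3: Schur elimination.} Write $x=h+z$ with $h\in K_0\oplus K_2$ and $z\in K_1$. First eliminate the $K_2$ part. Its diagonal block is at least $3c-1\geq1/2$, which absorbs the couplings $(P_v)_{20}$ and $(P_v)_{21}$ via $P_v\preceq\mathbb I$ and the identity in Step 2. Next eliminate $K_0$, whose block is $c\,\mathbb I_0-(P_v)_{00}\succeq(c-r)\mathbb I_0\succ0$ because $r<1/5$. What remains is a Schur complement on $K_1$ of the form
\[
  2c\,\mathbb I_1-(P_v)_{11}-(P_v)_{10}\,(\text{effective }K_0\text{ block})^{-1}(P_v)_{01}-(\text{$K_2$ correction}).
\]
Bounding $(P_v)_{11}$ by $\mathbb I_1$ minus the mass of $P_v$ leaking to $K_0$, the condition reduces to a scalar inequality roughly of the form
\[
  \eta\;\geq\;\frac{\norm{(P_v)_{10}}^2}{(1-r)(1-4r)} .
\]
This is exactly the hypothesis, and the factors $1-r$ and $1-4r$ should come from the $K_0$ pivot and the $K_2$ correction respectively.

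\textbf{Main obstacle.} I expect the hardest step to be the $K_1$ diagonal block. There the naive bound $(P_v)_{11}\preceq\mathbb I_1$ leaves only $2c-1=\eta$, which is too little slack to pay for the $K_0$ coupling. One must instead exploit $P_v^2=P_v$ to show that when $(P_v)_{01}$ is large, $(P_v)_{11}$ is correspondingly smaller than $\mathbb I_1$, so the two effects trade off. Getting the precise constants $(1-r)(1-4r)$ out of this trade-off, rather than a lossier bound, is where the real work is. I would first try to parametrize by $t=\norm{(P_v)_{10}z}$ for a unit $z\in K_1$ and optimize a $2\times2$ (or $3\times3$) scalar quadratic form.
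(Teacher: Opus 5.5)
There is a genuine gap: the proposal stops exactly where the proof has to happen. You correctly diagnose the obstacle. With $(P_v)_{11}\preceq\mathbb I_1$ the $K_1$ slack is only $2c-1=\eta$. Paying for the $K_0$ coupling out of the pivot $c\,\mathbb I_0-(P_v)_{00}\succeq(c-r)\mathbb I_0$ then costs a factor $1/(c-r)\approx 2$, which gives only $\eta\gtrsim 2\norm{(P_v)_{10}}^2$. That is fatal downstream, since $\norm{(P_v)_{10}}^2\approx(1-\epsilon)/\Delta$ while \Cref{thm:local-to-global} needs $\sum_{u\in N(v)}\eta_u\leq 1-\epsilon$.

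However, you give no mechanism for the trade-off, and the elimination order you propose would break as written. Pivoting on $K_2$ with $M_{22}\succeq(3c-1)\mathbb I_2$ and treating $(P_v)_{21}$ as a perturbation fails, because nothing controls $(P_v)_{21}$. The Step 2 identity gives $\norm{(P_v)_{+0}}^2\leq r(1-r)$, which only controls the column out of $K_0$. For instance, nothing you use rules out $(P_v)_{+0}=0$ with $P_v$ acting on a two-dimensional piece of $K_1\oplus K_2$ as the rank-one projection onto $(\sqrt p,\sqrt{1-p})$. Then $\eta=0$, and your reduced $K_1$ block is bounded below only by $2c-p-p(1-p)/(3c-1)=(1-p)(1-2p)<0$ at $p=3/4$, although the true form is nonnegative. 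The block $(P_v)_{++}$ has to be handled jointly with the $K_0$ coupling, not entry by entry.

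The missing idea in the paper is to eliminate $K_0$ first while keeping $\mathcal L_v$ intact. Write $G=\mathcal L_v+\frac{1+\eta}{2}\mathcal L_{N(v)}-s\,\mathbb I$ with $s=\frac{1-\eta}{2}$, so that $G_{00}=(\mathcal L_v)_{00}-s\,\mathbb I_0$. The resolvent identity $X^{-1}-(X-s)^{-1}=-sX^{-1}(X-s)^{-1}$ then gives
\[
G/G_{00}=\mathcal L_v/(\mathcal L_v)_{00}-s\,\mathbb I_++\tfrac{1+\eta}{2}(\mathcal L_{N(v)})_{++}-s\,(P_v)_{+0}(\mathcal L_v)_{00}^{-1}\bigl((\mathcal L_v)_{00}-s\,\mathbb I_0\bigr)^{-1}(P_v)_{0+}.
\]
Here $\mathcal L_v/(\mathcal L_v)_{00}\succeq0$ by \Cref{prop:Schur-complement-elimination} applied to $\mathcal L_v\succeq0$. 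This single fact absorbs $(P_v)_{11}$, $(P_v)_{12}$, $(P_v)_{22}$ together with the unshifted part of the $K_0$ coupling, which is precisely the trade-off you were looking for. Only the shift-induced feedback survives. Its coefficient is $s/\bigl((1-r)(\frac{1+\eta}{2}-r)\bigr)\leq 1/\bigl((1-r)(1-2r)\bigr)\approx 1$ instead of $2$. Hence $G/G_{00}\succeq\mathrm{diag}(\eta\,\mathbb I_1,\tfrac12\mathbb I_2)-F$ with $F=\frac{(P_v)_{+0}(P_v)_{0+}}{(1-r)(1-2r)}$, and your Step 2 identity gives $\norm{F}\leq r/(1-2r)$. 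Only then is $K_2$ eliminated, now against the small coupling $F_{21}$. Using $\norm{F_{12}}^2\leq\norm{F}\norm{F_{11}}$ one gets $N/N_{22}\succeq\bigl(\eta-\norm{F_{11}}/(1-2\norm{F})\bigr)\mathbb I_1$. Finally, $1-2\norm{F}\geq(1-4r)/(1-2r)$ yields exactly the constant $(1-r)(1-4r)$.
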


The proof of \Cref{poincare-via-proj-norms} is given in
\Cref{subsec:star-schur-complement-proof}.

\subsection{A Schur-complement proof}
\label{subsec:star-schur-complement-proof}
We now prove \Cref{poincare-via-proj-norms}.
Recall that we work in the inner-product space $(\*1^\perp, \inner{\cdot}{\cdot}_\nu)$.
Then \eqref{target:spgap-weighted-GD} can be rewritten as a matrix inequality.
It suffices to show the following
operator inequality on $(\*1^\perp,\inner{\cdot}{\cdot})$:
\begin{align*}
  G := \+L_v + \frac{1+\eta}{2}\+L_{N(v)} - \frac{1-\eta}{2} \mathbb{I} \succeq 0.
\end{align*}
By \Cref{block-matrix-via-Ki},
$G$ can be written as
\begin{align*}
  G &= \begin{bmatrix}
    (\+L_v)_{00} - \frac{1-\eta}{2}\mathbb{I}_0 & (\+L_v)_{0+} \\
    (\+L_v)_{+0} & (\+L_v)_{++} - \frac{1-\eta}{2}\mathbb{I}_+ + \frac{1+\eta}{2} (\+L_{N(v)})_{++}
  \end{bmatrix}
\end{align*}
where $\+L_{N(v)}$ vanishes on $K_0$ and preserves $K_+$ by definition.

\begin{proof}[Proof of \Cref{poincare-via-proj-norms}]
Recall that $0\leq\eta<1$.
Since $(\+L_v)_{00}=\mathbb{I}_0-(P_v)_{00}$, we have
\begin{align*}
  G_{00}=(\+L_v)_{00}-\frac{1-\eta}{2}\mathbb{I}_0
  =\frac{1+\eta}{2}\mathbb{I}_0-(P_v)_{00}\succ0.
\end{align*}
Thus \Cref{prop:Schur-complement-elimination} gives
\begin{align*}
  G \succeq 0 \quad \Longleftrightarrow \quad G/G_{00} \succeq 0.
\end{align*}
Hence it remains to prove $G/G_{00}\succeq0$.  A direct calculation gives
\begin{align*}
  G/G_{00}
  &= G_{++} - G_{+0}G_{00}^{-1}G_{0+} \\
  &= (\+L_v)_{++} - \frac{1-\eta}{2}\mathbb{I}_+
     + \frac{1+\eta}{2} (\+L_{N(v)})_{++} \\
  &\quad - (\+L_v)_{+0}
    \tp{(\+L_v)_{00} - \frac{1-\eta}{2}\mathbb{I}_0}^{-1}
    (\+L_v)_{0+} \\
  &= \+L_v/(\+L_v)_{00} - \frac{1-\eta}{2}\mathbb{I}_+ + \frac{1+\eta}{2} (\+L_{N(v)})_{++} \\
  &\quad + (\+L_v)_{+0}\left((\+L_v)_{00}^{-1}
    - \tp{(\+L_v)_{00} - \frac{1-\eta}{2}\mathbb{I}_0}^{-1}\right)
    (\+L_v)_{0+} \\
  &= \+L_v/(\+L_v)_{00} - \frac{1-\eta}{2}\mathbb{I}_+ + \frac{1+\eta}{2} (\+L_{N(v)})_{++} \\
  &\quad - \frac{1-\eta}{2} (\+L_v)_{+0}(\+L_v)_{00}^{-1}
    \tp{(\+L_v)_{00} - \frac{1-\eta}{2}\mathbb{I}_0}^{-1}
    (\+L_v)_{0+}.
\end{align*}
Since $\+L_v\succeq0$ and $(\+L_v)_{00}\succ0$, another application of
\Cref{prop:Schur-complement-elimination} gives
$\+L_v/(\+L_v)_{00}\succeq0$.  Moreover,
\begin{align*}
  (\+L_v)_{+0}(\+L_v)_{00}^{-1}
  \tp{(\+L_v)_{00} - \frac{1-\eta}{2}\mathbb{I}_0}^{-1}
  (\+L_v)_{0+}
  &\preceq
  \frac{(P_v)_{+0}(P_v)_{0+}}
  {(1-r)(\frac{1+\eta}{2}-r)}.
\end{align*}
Here $(\+L_v)_{0+}=-(P_v)_{0+}$ and
$(\+L_v)_{+0}=-(P_v)_{+0}$, so the signs cancel.  Also,
\Cref{prop:Hoeffding-expand} gives
\begin{align*}
  (\+L_{N(v)})_{++} \succeq
  \begin{bmatrix}
    \mathbb{I}_1 & 0 \\
    0 & 2 \mathbb{I}_2
  \end{bmatrix}.
\end{align*}
Combining these bounds and dropping $\eta$ from the feedback coefficient
gives
\begin{align*}
  G/G_{00} \succeq 
  \begin{bmatrix}
    \eta \mathbb{I}_1 & 0 \\
    0 & \frac{1}{2} \mathbb{I}_2
  \end{bmatrix} - \underbrace{\frac{1}{(1-r)(1-2r)}
  \begin{bmatrix}
    (P_v)_{10} \\ (P_v)_{20} 
  \end{bmatrix}
  \begin{bmatrix}
    (P_v)_{01} & (P_v)_{02} 
  \end{bmatrix} }_{=:F}.
\end{align*}
Denote the right-hand side by $N$.  We regard $F$ and $N$ as block
operators on $K_+=K_1\oplus K_2$, and
write $F_{ij}$ and $N_{ij}$ for their blocks, where $i,j\in\{1,2\}$.
We first claim that
\begin{align} \label{eq:bound-||F||}
  0 \preceq F \preceq \frac{r}{1-2r}\mathbb{I}_+.
\end{align}
Since $r<1/5$, \eqref{eq:bound-||F||} implies
$N_{22}\succeq(1/2-r/(1-2r))\mathbb{I}_2\succ0$.
  After applying the Schur complement with $K_2$ as the eliminated subspace, it
remains to show $N/N_{22}\succeq0$, where
$N/N_{22}:=N_{11}-N_{12}N_{22}^{-1}N_{21}$.  We have
\begin{align*}
  N/N_{22}
  &= \eta \mathbb{I}_1 - F_{11} - F_{12}N_{22}^{-1}F_{21} \\
  &\succeq \eta\mathbb{I}_1 - \norm{F_{11}}\mathbb{I}_1 - \frac{\norm{F_{12}}^2}{\frac{1}{2} - \norm{F}}\mathbb{I}_1.
\end{align*}
Since $F\succeq 0$, we have $\norm{F_{12}}^2 \leq \norm{F}\norm{F_{11}}$. This gives
\begin{align*}
  N/N_{22}
  &\succeq \eta\mathbb{I}_1 - \norm{F_{11}}\mathbb{I}_1 - \frac{\norm{F_{11}}\norm{F}}{\frac{1}{2} - \norm{F}}\mathbb{I}_1 \\
  &= \eta\mathbb{I}_1
  - \frac{\norm{(P_v)_{10}}^2}{(1-r)(1-2r)}
    \frac{1}{1-2\norm{F}}\mathbb{I}_1 \\ 
  &\succeq \left(\eta
    - \frac{\norm{(P_v)_{10}}^2}{(1-r)(1-4r)}\right)
    \mathbb{I}_1
  \succeq0.
\end{align*}
The last two steps use \eqref{eq:bound-||F||} and the choice of $\eta$.

It remains to prove \eqref{eq:bound-||F||}.  By definition,
\begin{align*}
  \norm{F} \leq
  \frac{\norm{(P_v)_{+0}}^2}{(1-r)(1-2r)}.
\end{align*}
Since $P_v^2=P_v$,
\begin{align*}
  (P_v)_{00}=(P_v)_{00}^2+(P_v)_{0+}(P_v)_{+0}.
\end{align*}
Since $0\preceq(P_v)_{00}\preceq\mathbb I_0$ and
$\norm{(P_v)_{00}}=r<1/2$, the spectrum of $(P_v)_{00}$ lies in $[0,r]$.
Therefore
\begin{align*}
  \norm{(P_v)_{+0}}^2
  =\norm{(P_v)_{00}-(P_v)_{00}^2}
  \leq r(1-r).
\end{align*}
Combining the last two bounds proves \eqref{eq:bound-||F||} and finishes
the proof.
\end{proof}

\section{Rapid mixing via local spectral contraction}\label{sec:general}
In this section, we will prove \Cref{thm:general}.
Let $G = (V,E)$ be a graph with maximum degree $\Delta$ and girth at least $5$.
Recall that $A \in [0,1]^{q\times q}$ is the symmetric interaction matrix and $\*\lambda \in \mathbb{R}_{\geq 0}^q$ is the external field.
The Gibbs distribution $\mu_G$ with interaction matrix $A$ and external field $\*\lambda$ on $G$ is defined by
\begin{align*}
  \mu_G(\sigma) \propto \prod_{uv \in E} A_{\sigma_u,\sigma_v} \prod_{v\in V} \lambda_{\sigma_v}, \quad \forall \sigma \in [q]^V.
\end{align*}

Fix $v \in V$ and a feasible pinning $\tau \in [q]^{V\setminus N[v]}$.
Then, for convenience, let
\begin{align} \label{eq:def-nu-pi}
  \nu &:= \mu^\tau_{G,N[v]},
  \quad
  \nu_v := \mu^\tau_{G,v},
  \quad \text{and} \quad
  p_i := \mu^\tau_{G-v, i}, \quad \forall i\in N(v).
\end{align}
We also identify $N(v)$ with $[d]$, where $d = \abs{N(v)}$.

It follows directly that \Cref{cond-local-spectral-contraction} implies the following condition on $\nu, (p_i)_{i=1}^d$.

\begin{condition}\label{cond:alpha-eps-Delta-contraction}
  There are $\alpha \geq 1$ and $\epsilon \in (0,1]$ such that, with
  \[B := \*1\*1^\intercal - A,\]
  the following conditions hold for $\nu_v$ and $(p_i)_{i=1}^d$:
  \begin{itemize}
  \item ($\alpha$-marginal bound) for every $i\in[d]$ and
    $\!c\in[q]$,
    \begin{align*}
      (Bp_i)(\!c)\leq \frac{\alpha}{\Delta}, \quad (B\nu_v)(\!c)\leq \frac{\alpha}{\Delta};
    \end{align*}
  \item ($\epsilon$-contraction) we have
    \begin{align*}
      \norm{
      \-{diag}\tp{\sum_{i=1}^d p_i}^{1/2}
      B
      \-{diag}\tp{\nu_v}^{1/2}
      }_{2\to2}^2 \leq \frac{1-\epsilon}{\Delta}.
    \end{align*}
  \end{itemize}
\end{condition}

The main result of this section is the following lemma.

\begin{lemma}\label{lem:block-estimate-via-contraction}
  Suppose \Cref{cond:alpha-eps-Delta-contraction} holds and $\Delta\geq2\alpha$.
  Then, on the inner-product space $(\*1^\perp, \inner{\cdot}{\cdot}_{\nu})$, we have
  \begin{align}
    \label{eq:general-P00}
    \norm{(P_v)_{00}} &\leq O_{\alpha}\tp{1/\Delta}, \\
    \label{eq:general-P10}
    \norm{(P_v)_{10}}^2 &\leq \tp{1 + O_\alpha(1/\Delta)} \cdot \frac{1-\epsilon}{\Delta} + O_\alpha\tp{\frac{\sqrt{\log(2d)}}{\Delta^{3/2}}}.
  \end{align}
\end{lemma}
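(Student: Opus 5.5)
The plan is to make both blocks explicit as operators on functions of the center's spin, and then to reduce everything to the single-leaf covariance operator that appears in the $\epsilon$-contraction condition.

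\textbf{Identifying $K_0$ and $K_1$.} Conditional on $X_v$, the leaves are independent. So $P_i$ simply resamples $X_i$ from its conditional law given $X_v$, which is $p_i(\cdot\mid \mathfrak c)\propto p_i(\cdot)A_{\mathfrak c,\cdot}$. Hence $K_0=\Ker(\+L_{N(v)})$ consists exactly of the centered functions $g(X_v)$. Likewise $K_1=\bigoplus_i K_1^{(i)}$, where $K_1^{(i)}$ is the set of functions $h_i(X_v,X_i)$ with $\E[\nu]{h_i\mid X_v}=0$; these summands are mutually orthogonal.

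For $g\in K_0$ with $\norm{g}=1$, we have $P_vg=\E[\nu]{g(X_v)\mid X_{N(v)}}=:\Phi(X_{N(v)})$. This gives
\begin{align*}
\inner{g}{(P_v)_{00}g}&=\operatorname{Var}_\nu\bigl(\Phi(X_{N(v)})\bigr)-\operatorname{Var}\bigl(\text{non-}X_v\text{-measurable part}\bigr)\leq \operatorname{Var}_\nu(\Phi),\\
(P_v)_{10}g&=\sum_i h_i,\qquad h_i:=\E[\nu]{\Phi\mid X_v,X_i}-\E[\nu]{\Phi\mid X_v}.
\end{align*}

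\textbf{Computing the first-order part.} Write the posterior of the center as
\[
\nu(\mathfrak c\mid x_{N(v)})=\nu_v(\mathfrak c)\prod_i \frac{A_{\mathfrak c,x_i}}{(Ap_i)(\mathfrak c)}\Big/Z(x).
\]
Each ratio can be written as $1-\beta_i(\mathfrak c,x_i)$. The denominators satisfy $(Ap_i)(\mathfrak c)=1-(Bp_i)(\mathfrak c)\geq 1-\alpha/\Delta\geq 1/2$, which uses $\Delta\ge 2\alpha$. I will linearize $\Phi$ in the perturbations $\beta_i$.

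The first-order term is
\[
\Phi^{(1)}=-\sum_i\sum_{\mathfrak c}\nu_v(\mathfrak c)g(\mathfrak c)\,\frac{B_{\mathfrak c,X_i}}{(Ap_i)(\mathfrak c)}.
\]
Here the centering of $g$ removes the constant parts. Its degree-one projection has squared norm
\[
\bigl(1+O_\alpha(1/\Delta)\bigr)\sum_i\sum_x p_i(x)\Bigl(\sum_{\mathfrak c}B_{x\mathfrak c}\nu_v(\mathfrak c)g(\mathfrak c)\Bigr)^2 .
\]
The factor $1+O_\alpha(1/\Delta)$ absorbs both the replacement of the law of $X_i$ given $X_v$ by $p_i$ and the denominators $(Ap_i)(\mathfrak c)$. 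Substituting $\phi=\mathrm{diag}(\nu_v)^{1/2}g$, which is a unit vector, this sum is exactly $\norm{\mathrm{diag}(\sum_ip_i)^{1/2}B\,\mathrm{diag}(\nu_v)^{1/2}\phi}^2\le(1-\epsilon)/\Delta$. That is the main term of \eqref{eq:general-P10}.

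\textbf{Bounding \eqref{eq:general-P00}.} For $\norm{(P_v)_{00}}$ I will not linearize. Instead I bound $\operatorname{Var}_\nu(\Phi)$ by Efron--Stein over the independent-given-center leaves, or equivalently by the Dobrushin influence of a single leaf on the center's posterior. Changing $x_i$ changes the posterior in total variation by at most $O(\max_{\mathfrak c}\beta_i)$, and its $L^2$ contribution is controlled by the $\alpha$-marginal bounds $(Bp_i)(\mathfrak c),(B\nu_v)(\mathfrak c)\le\alpha/\Delta$. Summing $d\le\Delta$ such terms of size $O_\alpha(1/\Delta^2)$ gives $O_\alpha(1/\Delta)$.

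\textbf{Main obstacle.} The hard step is the remainder $\Phi-\Phi^{(1)}$ in the bound for $(P_v)_{10}$. Since $h_i$ conditions on $(X_v,X_i)$ only, the other leaves are averaged out. The error then comes from replacing $\prod_{j\ne i}(1-\beta_j)/Z$ by its mean. This random product is a sum of $d$ independent bounded terms. Its fluctuation over colors $\mathfrak c$ must be controlled uniformly, and I expect this uniformity to be the source of the $\sqrt{\log(2d)}$ factor.

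I would first try a Hoeffding/Bernstein bound on $\sum_{j\ne i}\log(1-\beta_j(\mathfrak c,X_j))$. Each summand has mean $O(\alpha/\Delta)$ and variance $O(\alpha/\Delta^2)$, so the fluctuations are of order $\sqrt{\log(2d)/\Delta}$ with high probability; I would add a maximal inequality over the relevant $d$ leaf-indexed terms. Multiplying this relative error by the first-order size $O(1/\Delta)$ of $\norm{h_i}^2$ summed over leaves gives an additive error of $O_\alpha(\sqrt{\log(2d)}/\Delta^{3/2})$. On the low-probability complement I would use the trivial bound $\norm{h_i}\le\norm{g}$.
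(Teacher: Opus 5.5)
Your identification of $K_0$ and $K_1$ is correct. Your first-order term for $(P_v)_{10}$ is also correct: it is the linearized version of the paper's exact formula in \Cref{prop:master-form}, and it lands on the contraction quantity as in \Cref{cor:related-to-contraction}.

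\textbf{Gap in the $(P_v)_{10}$ remainder.} The concentration you plan to prove is false in the regime the lemma targets. For a fixed color $\mathfrak c$, the summands $\log(1-\beta_j(\mathfrak c,X_j))=\log A_{\mathfrak c,X_j}-\log(Ap_j)(\mathfrak c)$ do not have variance $O(\alpha/\Delta^2)$. For colorings they equal $-\infty$ with probability about $p_j(\mathfrak c)$ under $\nu^{v\mathfrak c'}$ with $\mathfrak c'\neq\mathfrak c$. For Potts with small $\beta$ their variance is of order $p_j(\mathfrak c)\log^2\beta$. Consequently the single-color ratio $L^{\mathfrak c}=\prod_j(1-\beta_j(\mathfrak c,X_j))$ has $\operatorname{Var}_\rho(L^{\mathfrak c})=\prod_j(1+\Psi_j(\mathfrak c,\mathfrak c))-1$, which can be of order one. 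For colorings, $L^{\mathfrak c}$ vanishes exactly when some leaf has color $\mathfrak c$, an event of probability about $1-e^{-\sum_jp_j(\mathfrak c)}$, which is a constant when $d\approx\Delta$. Even the conditional mean $R_i(\mathfrak c,\mathfrak c)$ of the leave-one-out product is then bounded away from $1$. So there is no uniform-in-color concentration. A union bound over colors would in any case cost $\log q$, which cannot appear in this $q$-free bound.

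Only the $\nu_v$-mixture $Z=L=\sum_{\mathfrak c}\nu_v(\mathfrak c)L^{\mathfrak c}$ concentrates. Its bounded differences are $O(\alpha/\Delta)$ because $(B\nu_v)(\cdot)\leq\alpha/\Delta$, and its conditional means are $1+O(d/\Delta^2)$ because $\norm{\abs{\Psi_i}\nu_v}_\infty=O(\Delta^{-2})$. The paper therefore splits $\norm{(P_v)_{10}h}\leq\norm{\operatorname{proj}_{K_1}(P_vhL)}+\norm{P_vh(1-L)}$.
\begin{itemize}
\item The first term has no normalizer. The other leaves integrate out exactly into the deterministic factors $R_i(\mathfrak c,\mathfrak d)$. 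These are controlled only on $\nu_v$-average, $\sum_{\mathfrak c}\nu_v(\mathfrak c)R_i(\mathfrak c,\mathfrak d)^2\leq1+O(1/\Delta)$, combined with Cauchy--Schwarz.
\item The second term uses McDiarmid for $L$ alone, with $t^2\asymp1/\Delta+d\log(2d)/\Delta^2$ chosen so the tail is at most $\Delta^{-3}$. This gives $O(\log(2d)/\Delta^2)$.
\end{itemize}
The $\sqrt{\log(2d)}/\Delta^{3/2}$ in the statement is the cross term between these two pieces, not a cost of uniformity over colors.

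\textbf{Gap in the $(P_v)_{00}$ bound.} Your argument misses the same ingredient. Under $\nu$ the leaves are only conditionally independent, so Efron--Stein does not apply to $\operatorname{Var}_\nu(\Phi)$. (Incidentally, $\operatorname{Var}_\nu(\Phi)$ equals $\inner{g}{(P_v)_{00}g}$ exactly.) Applied fiberwise given $X_v$, Efron--Stein controls only $\E[\nu]{\operatorname{Var}(\Phi\mid X_v)}$. The missing term $\operatorname{Var}(\E[\nu]{\Phi\mid X_v})=\norm{(P_v)_{00}g}^2$ is the very quantity you want to bound. For the top eigenvector you therefore get only $r(1-r)=O(1/\Delta)$, which does not exclude $r$ near $1$.

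The Dobrushin heuristic does not help either. $\beta_i$ can equal $1$, and the actual change in the posterior depends on how concentrated that posterior is, which is not uniformly controlled.

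The paper instead changes measure to the product law $\rho=\bigotimes_ip_i$. There $F=P_vh\cdot L=\sum_{\mathfrak c}\nu_v(\mathfrak c)h(\mathfrak c)L^{\mathfrak c}$ is normalizer-free and $\rho$-centered. Efron--Stein under $\rho$ then gives $\E[\rho]{F^2}=O_\alpha(d/\Delta^2)\norm{h}^2$. Finally $\norm{P_vh}^2=\E[\rho]{F^2/L}$ is split on $\set{L\geq1/2}$ and $\set{L<1/2}$. The second event is handled by conditional Jensen and the same concentration of $L$ under each $\nu^{v\mathfrak c}$.
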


We are now ready to prove \Cref{thm:general}.
\begin{proof}[Proof of \Cref{thm:general}]
For every $v\in V$, set
\begin{align*}
  \eta_v:=\frac{1-\epsilon/2}{\Delta}.
\end{align*}
We verify the two hypotheses of \Cref{thm:local-to-global} for these
parameters.

Fix $v\in V$ and a feasible pinning $\tau$ outside $N[v]$, and let
$d=\abs{N(v)}$.  If $d=0$, then the weighted dynamics on $N[v]=\set{v}$
has Laplacian $\+L_v$ and satisfies the Poincar{\'e} inequality with
constant one, which is at least $(1-\eta_v)/2$.  We may therefore assume
that $d\geq1$.

Use the notation in \eqref{eq:def-nu-pi}.  Since $G$ has girth at least
$5$, the induced graph on $N[v]$ is a star, and, conditional on the
center, its leaves are independent.  Moreover,
\Cref{cond-local-spectral-contraction} implies
\Cref{cond:alpha-eps-Delta-contraction}.  Hence
\Cref{lem:block-estimate-via-contraction} gives
\begin{align*}
  r:=\norm{(P_v)_{00}}
  &=O_\alpha\tp{\frac1\Delta},
  \\
  \norm{(P_v)_{10}}^2
  &\leq
  \tp{1+O_\alpha\tp{\frac1\Delta}}
  \frac{1-\epsilon}{\Delta}
  +O_\alpha\tp{
    \frac{\sqrt{\log(2d)}}{\Delta^{3/2}}
  }.
\end{align*}
In particular, by taking the lower bound on $\Delta$ sufficiently large
in terms of $\alpha$ and $\epsilon$, we have $r<1/5$ and
\begin{align*}
  \widehat\eta
  &:=\frac{\norm{(P_v)_{10}}^2}{(1-r)(1-4r)}
  \\
  &\leq
  \frac{1-\epsilon}{\Delta}
  +O_\alpha\tp{
    \frac1{\Delta^2}
    +\frac{\sqrt{\log(2\Delta)}}{\Delta^{3/2}}
  }
  \leq
  \frac{1-\epsilon/2}{\Delta}
  =\eta_v<1,
\end{align*}
where we used $d\leq\Delta$ and $r=O_\alpha(1/\Delta)$ in the first
inequality.

Applying \Cref{poincare-via-proj-norms} with $\widehat\eta$ shows that,
for every $f\perp\*1$ in $L^2(\nu)$,
\begin{align*}
  \frac{1-\widehat\eta}{2}\norm{f}^2
  \leq
  \inner{f}{\tp{
    \+L_v+\frac{1+\widehat\eta}{2}\+L_{N(v)}
  }f}.
\end{align*}
Since $\widehat\eta\leq\eta_v$ and $\+L_{N(v)}\succeq0$, this implies
\begin{align*}
  \frac{1-\eta_v}{2}\norm{f}^2
  \leq
  \inner{f}{\tp{
    \+L_v+\frac{1+\eta_v}{2}\+L_{N(v)}
  }f}.
\end{align*}
Thus the required weighted spectral-gap estimate holds uniformly over
all feasible pinnings $\tau$.

For every $v\in V$, the maximum-degree assumption gives
\begin{align*}
  \sum_{u\in N(v)}\eta_u
  \leq
  \Delta\cdot\frac{1-\epsilon/2}{\Delta}
  =1-\frac\epsilon2.
\end{align*}
The factorization defining $\mu_G$ also implies the conditional
independence property in \Cref{def:cond-indep}.  Therefore
\Cref{thm:local-to-global}, applied with global reserve $\epsilon/2$,
shows that the continuous-time Glauber Laplacian $\+L$ has spectral gap
at least $\epsilon/4$.

Finally, the Laplacian of the discrete-time Glauber dynamics is
$\+L/n$.  Its spectral gap is consequently at least
$\epsilon/(4n)$, which proves the theorem.
\end{proof}

In the remainder of this section, we prove \Cref{lem:block-estimate-via-contraction}.
In particular, \eqref{eq:general-P00} and \eqref{eq:general-P10} will be proved in \Cref{sec:general-P00} and \Cref{sec:general-P10}, respectively.

We will work in the inner product space
\begin{align*}
  (\*1^\perp, \inner{\cdot}{\cdot}_\nu).
\end{align*}
We will first introduce some basic definitions and facts.
By the definition of $\nu$, we have
%
\begin{align} \label{eq:def-local-mu}
  \nu(v\!c, \sigma) \propto \lambda_{\!c}\prod_{i}A_{\!c,\sigma_i}p_i(\sigma_i), \quad \forall \!c \in [q], \sigma \in [q]^d.
\end{align}
In particular, this implies the \emph{tree recursion}:
\begin{align*}
  \nu(v\!c) = g_{\!c}(\*p) := \frac{\lambda_{\!c} \prod_i (Ap_i)(\!c)}{\sum_{\!b} \lambda_{\!b} \prod_i (Ap_i)(\!b)}.
\end{align*}
If the center $v$ is fixed to color $\!c$, then the posterior distribution on leaf $i$ is
\begin{align*}
  \nu^{v\!c}(i\!a) := \frac{A_{\!c,\!a}p_i(\!a)}{(Ap_i)(\!c)}, \quad \forall \!a \in [q].
\end{align*}

\begin{definition}\label{def:L}
  For the product distribution $\rho = \otimes_i p_i$ on the leaves, define the following function on $\-{supp}(\rho)$:
  \begin{align*}
    L_\rho(\tau) := \frac{\nu(\tau)}{\rho(\tau)}, \quad \forall \tau \in \-{supp}(\rho).
  \end{align*}
  When the context is clear, we drop the subscript $\rho$ from $L_\rho$ and directly use $L$.

  Moreover, we define similar ratios for conditional distributions:
  \begin{align*}
    \forall \!c,i, \quad L^{\!c}_i(\!a) &:= \frac{\nu^{v\!c}(i\!a)}{\rho_i(\!a)} = \frac{A_{\!c,\!a}}{(Ap_i)_{\!c}}, \quad \!a\in [q],\\
    \forall \!c, \quad L^{\!c}(\tau) &:= \prod_i L_i^{\!c}(\tau_i) = \prod_{j=1}^d \frac{A_{\!c,\tau_j}}{(Ap_j)(\!c)}, \quad \forall \tau \in \mathrm{supp}(\rho).
  \end{align*}
  For a fixed $i$, we measure the influence from $v$ to $i$ by $\Psi_i \in \mathbb{R}^{q\times q}$:
\begin{align*}
  \Psi_{i}(\!c,\!d) := \mathrm{Cov}_{p_i}(L^{\!c}_i,L^{\!d}_i).
\end{align*}
\end{definition}


\begin{proposition} \label{prop:Psi-i}
  For each fixed $i$,
  \begin{align*}
    \Psi_i = \mathrm{diag}(\*1-Bp_i)^{-1}B\tp{\mathrm{diag}(p_i) - p_ip_i^\intercal}B \, \mathrm{diag}(\*1-Bp_i)^{-1}.
  \end{align*}
\end{proposition}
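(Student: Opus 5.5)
This is a direct covariance computation, entry by entry. Since $A=\*1\*1^\intercal-B$, we have $(Ap_i)(\!c)=\sum_{\!a}p_i(\!a)-(Bp_i)(\!c)=1-(Bp_i)(\!c)$, because $p_i$ is a probability vector. Thus the normalizer in $L_i^{\!c}$ is exactly the $\!c$-th diagonal entry of $\-{diag}(\*1-Bp_i)$. We may invert this whenever $(Ap_i)(\!c)>0$; under the $\alpha$-marginal bound with $\Delta\geq2\alpha$, we have $(Bp_i)(\!c)\leq1/2$, so it is invertible.

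Write $L_i^{\!c}(\!a)=A_{\!c,\!a}/(1-(Bp_i)(\!c))$ and view $L_i^{\!c}$ as the $\!c$-th row of $D^{-1}A$, where $D:=\-{diag}(\*1-Bp_i)$. For a random color $\!a\sim p_i$, the covariance of the functions $\!a\mapsto M_{\!c,\!a}$ and $\!a\mapsto M_{\!d,\!a}$ equals
\[
\bigl(M(\-{diag}(p_i)-p_ip_i^\intercal)M^\intercal\bigr)_{\!c,\!d}.
\]
This holds because $\-{diag}(p_i)-p_ip_i^\intercal$ is the covariance matrix of the indicator vector $e_{\!a}$. Applying it with $M=D^{-1}A$ and using the symmetry of $A$ gives
\[
\Psi_i=D^{-1}A\,(\-{diag}(p_i)-p_ip_i^\intercal)\,A\,D^{-1}.
\]

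It remains to replace $A$ by $B$. Since $A=\*1\*1^\intercal-B$ and $(\-{diag}(p_i)-p_ip_i^\intercal)\*1=p_i-p_i=0$, the rank-one part $\*1\*1^\intercal$ is annihilated on both sides. Hence $A\,C\,A=B\,C\,B$ for $C=\-{diag}(p_i)-p_ip_i^\intercal$. Equivalently, covariance is invariant under subtracting the constant $1/(Ap_i)(\!c)$ from $L_i^{\!c}$; the rows of $D^{-1}A$ and of $-D^{-1}B$ differ only by such constants. Substituting gives the claimed formula.

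There is no real obstacle here. The only care needed is the sign bookkeeping, since the two minus signs cancel in the product, and checking well-definedness of $D^{-1}$. If one drops the $\alpha$-marginal assumption, the formula should be read on the colors $\!c$ with $(Ap_i)(\!c)>0$.
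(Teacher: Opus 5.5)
Your proof is correct and is essentially the paper's argument. The paper computes $\Psi_i(\mathfrak{c},\mathfrak{d})=\mathbb{E}_{p_i}[L_i^{\mathfrak{c}}L_i^{\mathfrak{d}}]-1$ entry by entry and expands $A=\mathbf{1}\mathbf{1}^\intercal-B$ by hand. You write the same computation in matrix form as $D^{-1}ACAD^{-1}$ and remove the rank-one part using $(\mathrm{diag}(p_i)-p_ip_i^\intercal)\mathbf{1}=0$, which is the same device the paper uses elsewhere.
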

\begin{proof}
  Fix $\!c,\!d$. By definition, we have
  \begin{align*}
    \Psi_i(\!c,\!d)
    &= \E[p_i]{L^{\!c}_i L^{\!d}_i} - 1 
    =   \frac{\sum_{\!b}p_i(\!b)A_{\!c,\!b}A_{\!d,\!b}}{(Ap_i)(\!c) (Ap_i)(\!d)} - 1 \\
    \tag{$B = \*1\*1^\intercal - A$}
    &=   \frac{\sum_{\!b}p_i(\!b)B_{\!c,\!b}B_{\!d,\!b} -(Bp_i)(\!c)(Bp_i)(\!d)}{(1-(Bp_i)(\!c))(1 - (Bp_i)(\!d))}.
  \end{align*}
  This finishes the proof.
\end{proof}

\begin{corollary} \label{cor:Psi-related-bounds}
  Under \Cref{cond:alpha-eps-Delta-contraction}, if $\Delta\geq2\alpha$, then
  \begin{itemize}
  \item for every $i$, $\!c,\!d$, $\abs{\Psi_i(\!c,\!d)} = O_\alpha(\Delta^{-1})$;
  \item for every $i$, $\norm{\abs{\Psi_i} \nu_v}_\infty = O_{\alpha}(\Delta^{-2})$,
  \end{itemize}
  where $\norm{\cdot}_\infty$ is the Euclidean norm, and $\abs{\cdot}$ means taking entrywise absolute values.
\end{corollary}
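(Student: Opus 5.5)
The plan is to work directly from the explicit formula in \Cref{prop:Psi-i}, entry by entry. Write
\[
  \Psi_i(\!c,\!d)=\frac{\sum_{\!b}p_i(\!b)B_{\!c,\!b}B_{\!d,\!b}-(Bp_i)(\!c)(Bp_i)(\!d)}{(1-(Bp_i)(\!c))(1-(Bp_i)(\!d))}.
\]
The denominators are handled by the $\alpha$-marginal bound together with $\Delta\geq2\alpha$. These give $(Bp_i)(\!c)\leq\alpha/\Delta\leq1/2$, so each factor $1-(Bp_i)(\cdot)$ lies in $[1/2,1]$ and the denominator is at least $1/4$.

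For the first bullet, use that $B$ has entries in $[0,1]$. This gives
\[
  0\leq\sum_{\!b}p_i(\!b)B_{\!c,\!b}B_{\!d,\!b}\leq\sum_{\!b}p_i(\!b)B_{\!c,\!b}=(Bp_i)(\!c)\leq\alpha/\Delta,
\]
and also $0\leq(Bp_i)(\!c)(Bp_i)(\!d)\leq\alpha^2/\Delta^2\leq\alpha/\Delta$. The numerator is therefore at most $\alpha/\Delta$ in absolute value. Dividing by the denominator bound $1/4$ gives $\abs{\Psi_i(\!c,\!d)}\leq4\alpha/\Delta=O_\alpha(\Delta^{-1})$.

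For the second bullet, fix $\!c$ and bound $\sum_{\!d}\abs{\Psi_i(\!c,\!d)}\nu_v(\!d)$. Applying the triangle inequality to the numerator and using the denominator bound, this sum is at most
\[
  4\sum_{\!d}\nu_v(\!d)\Bigl(\sum_{\!b}p_i(\!b)B_{\!c,\!b}B_{\!d,\!b}+(Bp_i)(\!c)(Bp_i)(\!d)\Bigr).
\]
For the first term, swap the sums and use the symmetry of $B$. This turns it into $\sum_{\!b}p_i(\!b)B_{\!c,\!b}(B\nu_v)(\!b)$. Since $(B\nu_v)(\!b)\leq\alpha/\Delta$, this is at most $(\alpha/\Delta)(Bp_i)(\!c)\leq\alpha^2/\Delta^2$. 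For the second term, $\sum_{\!d}\nu_v(\!d)(Bp_i)(\!d)=\langle p_i,B\nu_v\rangle\leq\alpha/\Delta$, again by symmetry of $B$. Multiplying by $(Bp_i)(\!c)\leq\alpha/\Delta$ gives a bound of $\alpha^2/\Delta^2$. Altogether each coordinate of $\abs{\Psi_i}\nu_v$ is at most $8\alpha^2/\Delta^2$, which gives the $\norm{\cdot}_\infty$ bound $O_\alpha(\Delta^{-2})$.

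The only step that needs any care is the second bullet. A naive entrywise bound gives only $O(1/\Delta)$ there. The extra factor of $1/\Delta$ comes from moving $B$ onto $\nu_v$ through the symmetry of $B$ and then invoking the marginal bound on $B\nu_v$. Note that the sign cancellation in the covariance is not needed: the triangle inequality on the two numerator terms suffices.
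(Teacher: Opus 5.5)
Your proof is correct and is essentially the paper's argument: control the denominators via the $\alpha$-marginal bound, bound the numerator using $0\le B\le 1$ entrywise, and for the second bullet swap the sums and use the symmetry of $B$ to extract the factor $(B\nu_v)(\mathfrak{b})\le\alpha/\Delta$. The only cosmetic difference is that the paper bounds the product term $(Bp_i)(\mathfrak{c})(Bp_i)(\mathfrak{d})$ pointwise by $\alpha^2/\Delta^2$, rather than routing it through $\sum_{\mathfrak{d}}\nu_v(\mathfrak{d})(Bp_i)(\mathfrak{d})\le\alpha/\Delta$.
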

\begin{proof}
  For the first bound, by the $\alpha$-marginal bound, \Cref{prop:Psi-i}, and $B \leq 1$ entry-wise, 
  \begin{align*}
    \abs{\Psi_i(\!c,\!d)}
    &\leq (1-\alpha/\Delta)^{-2}\tp{(Bp_i)(\!c) + \frac{\alpha^2}{\Delta^2}} \\
    &\leq (1-\alpha/\Delta)^{-2}\tp{\frac{\alpha}{\Delta} + \frac{\alpha^2}{\Delta^2}} = O_\alpha(\Delta^{-1}).
  \end{align*}
  Similarly, for the second bound, fix $\!c$. By the $\alpha$-marginal bound and \Cref{prop:Psi-i}, we have
  \begin{align*}
    (\abs{\Psi_i} \nu_v)(\!c)
    &= \sum_{\!d} \nu(v\!d)\abs{\Psi_i(\!c,\!d)} \\
    &\leq (1-\alpha/\Delta)^{-2}\sum_{\!d} \nu(v\!d)\tp{\sum_{\!b}p_i(\!b)B_{\!c,\!b}B_{\!d,\!b} + \frac{\alpha^2}{\Delta^2}} \\
    &=(1-\alpha/\Delta)^{-2} \sum_{\!b}p_i(\!b)B_{\!c,\!b} (B\nu_v)(\!b) + O_\alpha(\Delta^{-2}) \\
    &\leq (1-\alpha/\Delta)^{-2} \cdot \frac{\alpha}{\Delta} (Bp_i)(\!c) + O_\alpha(\Delta^{-2}) = O_\alpha(\Delta^{-2}). \qedhere
  \end{align*}
\end{proof}

\begin{fact}\label{fact:var-matrix-form}
  Let $\pi$ be a distribution on $[q]$.
  Let $f:[q] \to \mathbb{R}$ be a function.
  Then
  \begin{align*}
    \Var[\pi]{f} = f^\intercal \tp{\mathrm{diag}(\pi) - \pi\pi^\intercal} f,
  \end{align*}
  where we treat $\pi$ and $f$ as vectors in $\mathbb{R}^q$.
\end{fact}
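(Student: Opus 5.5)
The identity is a one-line computation once $\Var[\pi]{f}$ is written out, so I will expand both sides and compare. Treat $\pi$ and $f$ as vectors in $\mathbb{R}^q$ and write $\pi\odot f$ for the entrywise product.

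First, I compute the quadratic form of the diagonal part:
\begin{align*}
  f^\intercal \mathrm{diag}(\pi) f = \sum_{\!c\in[q]} \pi(\!c) f(\!c)^2 = \E[\pi]{f^2}.
\end{align*}
Next, I compute the rank-one part:
\begin{align*}
  f^\intercal \pi\pi^\intercal f = (\pi^\intercal f)^2 = \tp{\sum_{\!c}\pi(\!c)f(\!c)}^2 = \E[\pi]{f}^2.
\end{align*}
Subtracting the second display from the first gives
\begin{align*}
  f^\intercal\tp{\mathrm{diag}(\pi)-\pi\pi^\intercal}f = \E[\pi]{f^2}-\E[\pi]{f}^2 .
\end{align*}

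It remains to match this with the paper's definition $\Var[\pi]{f}=\E[\pi]{(f-\E[\pi]{f})^2}$. Expanding the square and using linearity of expectation,
\begin{align*}
  \E[\pi]{\tp{f-\E[\pi]{f}}^2} = \E[\pi]{f^2} - 2\E[\pi]{f}^2 + \E[\pi]{f}^2 = \E[\pi]{f^2}-\E[\pi]{f}^2,
\end{align*}
which is the same quantity, proving the fact.

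There is no real obstacle. The only point to check is that everything is defined on all of $[q]$ even where $\pi$ vanishes. This causes no trouble, because such coordinates contribute zero to every sum above.
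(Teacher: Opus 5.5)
Your computation is correct: the diagonal part gives $\mathbb{E}_\pi[f^2]$, the rank-one part gives $(\mathbb{E}_\pi[f])^2$, and their difference is the variance under the paper's definition. The paper states this as a standard fact without proof, and your direct expansion is exactly the routine verification it relies on.
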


\begin{proposition}\label{prop:normalizer-conditional-tail}
Suppose $1\leq d\leq\Delta$ and the $\alpha$-marginal bound in
\Cref{cond:alpha-eps-Delta-contraction} holds.  Let
$\rho=\bigotimes_{i=1}^d p_i$ and $L=L_\rho$.
If $\Delta\geq2\alpha$, then there
are constants $C_{\alpha},c_{\alpha}>0$ such that, for
every $t\geq2C_{\alpha}d/\Delta^2$,
\begin{align}
  \sup_{\!r:\,\nu(v\!r)>0}
  \Pr[\nu^{v\!r}]{\abs{L-1}>t}
  \leq
  2\exp\tp{-c_{\alpha}\frac{\Delta^2}{d}t^2}
  \leq
  2\exp\tp{-c_{\alpha}\Delta t^2}.
  \label{eq:normalizer-conditional-tail}
\end{align}
\end{proposition}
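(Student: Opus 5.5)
We write $L$ explicitly and reduce to a sum of independent bounded terms under $\nu^{v\!r}$. From \eqref{eq:def-local-mu}, marginalizing the center gives
\begin{align*}
  L(\tau)=\frac{\nu(\tau)}{\rho(\tau)}=\sum_{\!c}\nu(v\!c)\,L^{\!c}(\tau),
  \qquad
  L^{\!c}(\tau)=\prod_{j=1}^d L^{\!c}_j(\tau_j),
\end{align*}
since $\nu(v\!c,\tau)=\nu(v\!c)\prod_j \nu^{v\!c}(j\tau_j)$. Under $\nu^{v\!r}$ the leaves are independent with laws $\nu^{v\!r}(j\cdot)$. So it suffices to control, uniformly in $\!c$, the deviation of $\log L^{\!c}(\tau)=\sum_j \log L^{\!c}_j(\tau_j)$. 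Then we average over $\!c$ with weights $\nu_v$.

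\textbf{Step 1: single-leaf bounds.} We have $L^{\!c}_j(\!a)=A_{\!c,\!a}/(1-(Bp_j)(\!c))\in[0,(1-\alpha/\Delta)^{-1}]$. The factor can vanish, however, so one cannot take logs naively. We therefore work instead with $L^{\!c}_j-1=-\bigl(B_{\!c,\!a}-(Bp_j)(\!c)\bigr)/(1-(Bp_j)(\!c))$. Its mean under $p_j$ is $0$. Under $\nu^{v\!r}$ its mean is $\Psi_j(\!c,\!r)$, which is $O_\alpha(1/\Delta)$ by \Cref{cor:Psi-related-bounds}.

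\textbf{Step 2: expand $L-1$ through the center.} We write $L-1$ as
\begin{align*}
  L-1=\sum_{\!c}\nu(v\!c)\Bigl(\prod_j\bigl(1+h^{\!c}_j(\tau_j)\bigr)-1\Bigr),
  \qquad
  h^{\!c}_j:=L^{\!c}_j-1 .
\end{align*}
The key point is that after averaging over $\!c$ with $\nu_v$, the first-order term $\sum_j\sum_{\!c}\nu(v\!c)h^{\!c}_j(\tau_j)$ is $-\sum_j (B\nu_v)(\tau_j)+(\text{small})$ up to $(1+O(\alpha/\Delta))$ factors. By the $\alpha$-marginal bound this is deterministic-size $O(\alpha/\Delta)$ per leaf.

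That is too large to sum naively, so we center it. The variable $Y_j:=\sum_{\!c}\nu(v\!c)h^{\!c}_j(\tau_j)$ is bounded by $O_\alpha(1/\Delta)$. Its mean under $\nu^{v\!r}$ is $\sum_{\!c}\nu(v\!c)\Psi_j(\!c,\!r)$, which is $O_\alpha(\Delta^{-2})$ by the second bullet of \Cref{cor:Psi-related-bounds}. Hoeffding's inequality for the $d$ independent $Y_j$ with range $O_\alpha(1/\Delta)$ then gives
\begin{align*}
  \Pr[\nu^{v\!r}]{\Bigl|\sum_j Y_j\Bigr|>C_\alpha d/\Delta^2+t/2}\le 2\exp\bigl(-c\,\Delta^2t^2/d\bigr).
\end{align*}
This produces the threshold $t\ge 2C_\alpha d/\Delta^2$.

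\textbf{Step 3: higher-order terms.} The higher-order terms of the product are the main obstacle. For a fixed $\!c$, the product $\prod_j(1+h^{\!c}_j)$ is not small: $\sum_j h^{\!c}_j$ has mean $O(d/\Delta)$ and fluctuations of order $\sqrt d/\Delta$ under $\nu^{v\!r}$. The higher-order terms are nonetheless handled by averaging over $\!c$.

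I would try to bound the remainder $R=L-1-\sum_jY_j$ via a martingale or Doob decomposition, revealing leaves one at a time. Write $M_k=\E[\nu^{v\!r}]{L\mid\tau_1,\dots,\tau_k}$. The increments are $\sum_{\!c}\nu(v\!c\mid\tau_{<k})\,h^{\!c}_k(\tau_k)\cdot(\text{ratio})$, where $\nu(v\cdot\mid\tau_{<k})$ is the center posterior after $k-1$ leaves. This posterior stays close to $\nu_v$, so the bound $\norm{B\,\cdot}_\infty\le O_\alpha(1/\Delta)$ persists for it. Given that, each increment is bounded by $O_\alpha(1/\Delta)$ and has drift $O_\alpha(\Delta^{-2})$ via the $\abs{\Psi_k}$ estimate. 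Azuma's inequality then yields the same sub-Gaussian tail with variance proxy $d/\Delta^2$.

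The delicate step is keeping the posterior's $B$-marginal bound under control along the filtration. I would do this by a stopping-time argument: stop once $|L_k-1|$ exceeds a constant, note that the posterior equals $\nu_v$ reweighted by $L^{\!c}_{\le k}/L_{\le k}$, and absorb the stopped event into the tail. Finally, $d\le\Delta$ gives the second inequality.
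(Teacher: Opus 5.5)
Your outline is correct, and its engine is the same as the paper's: concentration of $L$ under the product law $\nu^{v\mathfrak r}$ with per-leaf increments $O_\alpha(1/\Delta)$, plus a mean shift $O_\alpha(d/\Delta^2)$ from the bound on $\abs{\Psi_i}\nu_v$. However, you make it harder than it is, because the ``delicate step'' in your Step 3 does not arise.

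Each factor satisfies $L^{\mathfrak c}_j\le(1-\alpha/\Delta)^{-1}$ and $d\le\Delta$, so $0\le L^{\mathfrak c}\le(1-\alpha/\Delta)^{-d}\le e^{2\alpha}$ for \emph{every} leaf configuration. Suppose $x,x'$ differ only at leaf $i$, whose color changes from $\mathfrak a$ to $\mathfrak b$. Then $\abs{L(x)-L(x')}\le e^{2\alpha}\sum_{\mathfrak c}\nu(v\mathfrak c)(B_{\mathfrak c,\mathfrak a}+B_{\mathfrak c,\mathfrak b})=e^{2\alpha}\bigl((B\nu_v)(\mathfrak a)+(B\nu_v)(\mathfrak b)\bigr)\le 2\alpha e^{2\alpha}/\Delta$. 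The factor $1/\Delta$ comes from averaging against the fixed prior $\nu_v$, never against a posterior.

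The paper then applies McDiarmid's inequality directly, which is exactly Azuma for your Doob martingale $M_k$. It computes the mean exactly as $\mathbb E_{\nu^{v\mathfrak r}}[L]=\sum_{\mathfrak c}\nu(v\mathfrak c)\prod_i\bigl(1+\Psi_i(\mathfrak r,\mathfrak c)\bigr)$, so $\abs{\mathbb E_{\nu^{v\mathfrak r}}[L]-1}\le C_\alpha d/\Delta^2$ by \Cref{cor:Psi-related-bounds}. The hypothesis $t\ge 2C_\alpha d/\Delta^2$ then makes $\abs{L-1}>t$ imply $\abs{L-\mathbb E_{\nu^{v\mathfrak r}}[L]}>t/2$.

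In your own formulation, the weight multiplying $h^{\mathfrak c}_k(\tau_k)-\Psi_k(\mathfrak r,\mathfrak c)$ in $M_k-M_{k-1}$ is $\nu(v\mathfrak c)\prod_{j<k}L^{\mathfrak c}_j(\tau_j)\prod_{j>k}\bigl(1+\Psi_j(\mathfrak r,\mathfrak c)\bigr)$. This is at most $e^{2\alpha}\nu(v\mathfrak c)$ entrywise, since $1+\Psi_j(\mathfrak r,\mathfrak c)=\mathbb E_{\nu^{v\mathfrak r}_j}[L^{\mathfrak c}_j]\le(1-\alpha/\Delta)^{-1}$. So there is no need to divide by $L_{<k}$ and track the posterior. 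Your Step 2 (Hoeffding for the first-order part) is therefore subsumed. The stopping-time argument can be made to work (stopping when $L_k<1/2$ suffices, and $L\le e^{2\alpha}$ converts the resulting tail to the stated form at the cost of constants), but it buys nothing.

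One small inconsistency: the Doob martingale $M_k$ has zero drift by construction. The $O_\alpha(\Delta^{-2})$ per-step drift you describe belongs to the forward process $\sum_{\mathfrak c}\nu(v\mathfrak c)\prod_{j\le k}L^{\mathfrak c}_j$. In the Doob formulation it must instead appear as the separate estimate on $\abs{M_0-1}$.
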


\begin{proof}[Proof of \Cref{prop:normalizer-conditional-tail}]
Center colors of zero $\nu$-mass are omitted below.
For every $i$ and $\!c$, by the $\alpha$-marginal bound and $\Delta \geq 2\alpha$,
\begin{align}
  1-\frac{\alpha}{\Delta}
  \leq
  (Ap_i)(\!c)
  =1-(Bp_i)(\!c)
  \leq 1,
  \qquad
  \frac{\alpha}{\Delta}\leq\frac12.
  \label{eq:normalizer-z-bounds}
\end{align}
Moreover, \eqref{eq:normalizer-z-bounds} and
$0\leq A_{\!c,\!a}\leq1$ give
\begin{align}
  0\leq L^{\!c},L
  \leq
  \tp{1-\frac{\alpha}{\Delta}}^{-d}
  \leq
  \exp\tp{\frac{2\alpha d}{\Delta}}
  \leq
  \exp(2\alpha).
  \label{eq:normalizer-uniform-bound}
\end{align}

Suppose that two leaf configurations differ only at coordinate $i$,
where the leaf color changes from $\!a$ to $\!b$.  Then
\begin{align*}
  \abs{L^{\!c}(x)-L^{\!c}(x')}
  &\leq
  \tp{1-\frac{\alpha}{\Delta}}^{-d}
  \abs{A_{\!c,\!a}-A_{\!c,\!b}}
  \\
  &\leq
  \tp{1-\frac{\alpha}{\Delta}}^{-d}
  \tp{B_{\!c,\!a}+B_{\!c,\!b}}.
\end{align*}
Thus the symmetry of $B$ and the center interaction bound imply
\begin{align}
  \abs{L(x)-L(x')}
  &\leq
  \tp{1-\frac{\alpha}{\Delta}}^{-d}
  \sum_{\!c}\nu(v\!c)
  \tp{B_{\!c,\!a}+B_{\!c,\!b}}
  \notag\\
  \notag
  &= \tp{1-\frac{\alpha}{\Delta}}^{-d}\tp{(B\nu_v)(\!a) + (B\nu_v)(\!b)}\\
  &\leq
  \frac{2\alpha\exp(2\alpha)}{\Delta}.
  \label{eq:normalizer-bounded-difference}
\end{align}
We claim that there is $C_\alpha > 0$ such that for every $\!d$,
\begin{align}
  \abs{\E[\nu^{v\!d}]{L}-1}
  \leq
  \frac{C_{\alpha}d}{\Delta^2}.
  \label{eq:normalizer-conditional-mean-target}
\end{align}
Since the leaves are independent under every $\nu^{v\!d}$, \Cref{prop:normalizer-conditional-tail} now follows from \eqref{eq:normalizer-bounded-difference},
\eqref{eq:normalizer-conditional-mean-target}, and McDiarmid's inequality;
the second inequality in \eqref{eq:normalizer-conditional-tail} uses
$d\leq\Delta$.

It only remains to prove \eqref{eq:normalizer-conditional-mean-target}.
Note that 
\begin{align*}
  \E[\nu^{v\!d}]{L}
  &= \sum_{x} \nu^{v\!d}(x) \sum_{\!c} \nu(v\!c) L^{\!c}(x) \\
  &= \sum_{\!c}\nu(v\!c) \sum_{x} \rho(x) L^{\!d}(x) L^{\!c}(x) \\
  \tag{independent}
  &= \sum_{\!c}\nu(v\!c) \prod_{i} (1 + \Psi_i(\!d,\!c)).
\end{align*}
This implies that
\begin{align*}
  \abs{\E[\nu^{v\!d}]{L}-1}
  \tag{triangle ineq.}
  &\leq \sum_{\!c}\nu(v\!c)\abs{\prod_{i} (1 + \Psi_i(\!d,\!c))-1)} \\
  &\leq \sum_{\!c}\nu(v\!c)\exp\tp{\sum_i \abs{\Psi_i(\!d,\!c)}} \sum_i \abs{\Psi_i(\!d,\!c)} \\
  \tag{\Cref{cor:Psi-related-bounds}, 1st bullet}
  &= \sum_i(\abs{\Psi_i}\nu_v)(\!d) \cdot O_\alpha(1) \\
  \tag{\Cref{cor:Psi-related-bounds}, 2nd bullet}
  &= \frac{O_\alpha(1)d}{\Delta^2}.
\end{align*}
Here the second inequality uses the following bound:
\begin{align*}
  \abs{\prod_i(1+x_i)-1}
  \leq
  \exp\tp{\sum_i\abs{x_i}}
  \sum_i\abs{x_i}.
\end{align*}
This finishes the proof.
\end{proof}


\subsection{Bound for $\norm{(P_v)_{00}}$} \label{sec:general-P00}
In this subsection, we prove \eqref{eq:general-P00}.
We will need the following result.
\begin{observation} \label{obs:PvhL-on-i}
  For $h \in K_0$, let $F := P_v h \cdot L$ be a function on the leaves.
  Fix $i\in[d]$ and $x \in \mathrm{supp}(\rho)$. Then
  \begin{align*}
    F(x_{\sim i}, x_i) = (Aw_{i,x_{\sim i}})(x_i),
  \end{align*}
  where the vector $w_{i,x_{\sim i}} \in \mathbb{R}^q$ is defined as
  \begin{align*}
    w_{i,x_{\sim i}}(\!c)
    :=
    \frac{\nu_v(\!c)h(\!c)}{(Ap_i)(\!c)}
    \prod_{j\ne i}L^{\!c}_j,\quad \forall \!c.
  \end{align*}
\end{observation}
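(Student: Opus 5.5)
The plan is a direct computation. First identify $h\in K_0$ as a function of the center alone, then write $P_vh$ explicitly via Bayes' rule, and finally multiply by $L$ so that the leaf-marginal denominator cancels.

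\emph{Step 1 ($K_0$ consists of center-only functions).} Let $h\in K_0=\Ker(\+L_{N(v)})$. Since $\+L_{N(v)}=\sum_i\+L_i$ is a sum of positive semidefinite operators, $\inner{h}{\+L_{N(v)}h}=0$ forces $\+L_ih=0$, that is, $P_ih=h$ for every $i\in[d]$. Hence $h=\bigl(\prod_{i\in[d]}P_i\bigr)h$. The leaves are conditionally independent given the center, so successively resampling every leaf from its conditional law yields $\bigl(\prod_iP_i\bigr)h(x)=\E[\nu]{h\mid X_v=x_v}$. Therefore $h(x)=h(x_v)$, and we write $h(\!c)$ for its value when the center has color $\!c$.

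\emph{Step 2 (formula for $P_vh$).} For a leaf configuration $x\in\mathrm{supp}(\rho)$, the definition \eqref{def-Pi} gives
\[
  (P_vh)(x)=\sum_{\!c}\frac{\nu(v\!c,x)}{\nu(x)}\,h(\!c),
\]
where $\nu(x)$ denotes the leaf marginal. Multiplying by $L(x)=\nu(x)/\rho(x)$ cancels $\nu(x)$, so
\[
  F(x)=\sum_{\!c}\frac{\nu(v\!c,x)}{\rho(x)}\,h(\!c).
\]
If $\nu(x)=0$, the same formula remains valid as a definition of $F$ (both sides vanish), so there is no issue on $\mathrm{supp}(\rho)$.

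\emph{Step 3 (factorization).} Using \eqref{eq:def-local-mu} together with the tree recursion, we have $\nu(v\!c,x)=\nu_v(\!c)\,\nu^{v\!c}(x)=\nu_v(\!c)\prod_j A_{\!c,x_j}p_j(x_j)/(Ap_j)(\!c)$. Dividing by $\rho(x)=\prod_jp_j(x_j)$ gives $\nu_v(\!c)\prod_jL^{\!c}_j(x_j)$. Isolating the factor $j=i$, namely $L^{\!c}_i(x_i)=A_{\!c,x_i}/(Ap_i)(\!c)$, we obtain
\[
  F(x)=\sum_{\!c}A_{x_i,\!c}\,w_{i,x_{\sim i}}(\!c)=(Aw_{i,x_{\sim i}})(x_i).
\]
Here we used the symmetry of $A$, and $w_{i,x_{\sim i}}$ is exactly the vector defined in the statement.

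The only point requiring care is Step 1, the identification of $K_0$ with center-measurable functions. Everything else is bookkeeping with the explicit product form of $\nu$.
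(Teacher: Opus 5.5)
Your proof is correct and follows the paper's argument. The paper writes $F(x)=\sum_{\mathfrak c}\nu_v(\mathfrak c)h(\mathfrak c)L^{\mathfrak c}(x)$ by Bayes' rule and peels off the factor $A_{x_i,\mathfrak c}/(Ap_i)(\mathfrak c)$ using the symmetry of $A$. Your Steps 1--2 make explicit what the paper leaves implicit at this point: that $h\in K_0$ depends only on the center (which the paper verifies inside the proof of \Cref{prop:master-form}), and that the leaf marginal cancels against $L$, including the harmless case $L(x)=0$.
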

\begin{proof}
  We have
  \begin{align*}
    F(x)
    := P_vh(x)L(x) &=\sum_{\!c}\nu_v(\!c)h(\!c)L^{\!c}(x) \\
    &= \sum_{\!c} A_{x_i,\!c} \cdot \frac{\nu(v\!c)h(\!c)}{(Ap_i)(\!c)} \prod_{j\neq i} L^{\!c}_j. \qedhere
  \end{align*}
\end{proof}

\begin{proof}[Proof of \eqref{eq:general-P00}]
If $d=0$, then $P_vh=\E[\nu]{h}=0$ for every $h\in K_0$, and the
claim is immediate.  We henceforth assume that $d\geq1$.
Since $\norm{(P_v)_{00}}\leq1$, values of $\Delta$ bounded in terms of
$\alpha$ can be absorbed into the implicit constant.  We may therefore
assume throughout that $\Delta$ is sufficiently large in terms of
$\alpha$; in particular, \Cref{prop:normalizer-conditional-tail}
applies with $t=1/2$.

Because $P_v$ is an orthogonal projection,
\begin{align*}
  \inner{h}{(P_v)_{00}h}
  =\inner{h}{P_vh}
  =\norm{P_vh}^2.
\end{align*}
Split this last norm according to the events $L\geq1/2$ and $L<1/2$.
On the first event, we claim the following bound
\begin{align} \label{eq:general-P00-second-moment-target}
  \E[\rho]{(P_vh \cdot L)^2} \leq C_\alpha\frac{d}{\Delta^2} \norm{h}^2.
\end{align}
This implies that
\begin{align*} 
  \E[\nu]{
    \*1_{\set{L\geq1/2}}(P_vh)^2
  }
  =
  \E[\rho]{
    \*1_{\set{L\geq1/2}}\frac{(P_vh \cdot L)^2}{L}
  }
  \leq
  C_\alpha\frac{d}{\Delta^2}\norm{h}^2.
\end{align*}
On the complementary event, conditional Jensen gives
$(P_vh)^2\leq P_v(h^2)$.  Therefore,
\Cref{prop:normalizer-conditional-tail}, applied with $t=1/2$, gives
\begin{align*}
  \E[\nu]{
    \*1_{\set{L<1/2}}(P_vh)^2
  }
  &\leq
  \sum_{\!c}\nu(v\!c) h(\!c)^2
  \Pr[\nu^{v\!c}]{L<1/2}
  \\
  &\leq
  2\exp\tp{-c_\alpha\frac{\Delta^2}{d}}
  \norm{h}^2.
\end{align*}
Combining the two events and taking the supremum over nonzero
$h\in K_0$, we have
\begin{align*}
  \norm{(P_v)_{00}}
  \leq
  C_\alpha\frac{d}{\Delta^2}
  +2\exp\tp{-c_\alpha\frac{\Delta^2}{d}}
  =O_\alpha\tp{\frac1\Delta},
\end{align*}
where the last equality uses $d\leq\Delta$.  This proves
\eqref{eq:general-P00}.

It only remains to prove \eqref{eq:general-P00-second-moment-target}.
Fix $h\in K_0$.  Since $h$ depends only on the center, we write
$h(\!c)$ for its value at the center color $\!c$.  Set
\begin{align*}
  \rho&:=\bigotimes_{i=1}^d p_i,
\end{align*}
Center colors of zero $\nu$-mass are omitted below.  The
$\alpha$-marginal bound in
\Cref{cond:alpha-eps-Delta-contraction} gives
\begin{align}
  1-\frac{\alpha}{\Delta}
  \leq
  (Ap_i)(\!c)
  =1-(Bp_i)(\!c)
  \leq1.
  \label{eq:general-P00-z-bound}
\end{align}
For $x\in\-{supp}(\rho)$, recall
\begin{align*}
  F(x)
  &:={}
  P_vh(x)L(x)
  =\sum_{\!c}\nu_v(\!c)h(\!c)L^{\!c}(x).
\end{align*}
Since $h$ is centered,
\begin{align*}
  \E[\rho]{F}
  =\sum_{\!c}\nu_v(\!c)h(\!c)
  =0.
\end{align*}
Tensorization of variance under $\rho$ therefore yields
\begin{align}
  \E[\rho]{F^2}
  \leq
  \sum_{i=1}^d
  \E[\rho]{
    \Var[\rho]{F\mid X_{\sim i}}
  }.
  \label{eq:general-P00-tensorization}
\end{align}

By \Cref{obs:PvhL-on-i}, we have
\begin{align*}
  F(x_{\sim i}, x_i) =(Aw_{i,x_{\sim i}})(x_i), \quad \text{where} \quad
  w_{i,x_{\sim i}}(\!c)
  :=
  \frac{\nu_v(\!c)h(\!c)}{(Ap_i)(\!c)}
  \prod_{j\ne i}L^{\!c}_j,\quad \forall \!c.
\end{align*}
Applying
\Cref{fact:var-matrix-form} with distribution $p_i$ and function
$Aw_{i,x_{\sim i}}$, we obtain
\begin{align}
  \nonumber
  &\Var[\rho]{F\mid X_{\sim i}=x_{\sim i}}
  \\
  \nonumber
  &\qquad=
  w_{i,x_{\sim i}}^\intercal
  A\tp{\-{diag}(p_i)-p_ip_i^\intercal}A
  w_{i,x_{\sim i}}
  \\
  \nonumber
  &\qquad=
  w_{i,x_{\sim i}}^\intercal
  B\tp{\-{diag}(p_i)-p_ip_i^\intercal}B
  w_{i,x_{\sim i}}
  \\
  &\qquad\leq
  w_{i,x_{\sim i}}^\intercal
  B\-{diag}(p_i)B
  w_{i,x_{\sim i}}.
  \label{eq:general-P00-tensorization-B}  
\end{align}
Indeed, the middle identity uses
$A=\*1\*1^\intercal-B$ and
$\tp{\-{diag}(p_i)-p_ip_i^\intercal}\*1=0$.

Note that \eqref{eq:general-P00-z-bound} and $0\leq A\leq1$ give
\begin{align}
  \abs{L^{\!c}(x)}
  \leq
  \tp{1-\frac{\alpha}{\Delta}}^{-d}
  \leq\exp(2\alpha).
  \label{eq:general-P00-likelihood-bound}
\end{align}
Let $y(\!c):=\nu_v(\!c)\abs{h(\!c)}$.  By
\eqref{eq:general-P00-likelihood-bound},
\begin{align*}
  \abs{w_{i,x_{\sim i}}(\!c)}
  \leq\exp(2\alpha)y(\!c).
\end{align*}
Since $B\-{diag}(p_i)B$ has nonnegative entries,
\eqref{eq:general-P00-tensorization-B} and \eqref{eq:general-P00-tensorization} give
\begin{align}
  \E[\rho]{F^2}
  \nonumber
  &\leq
  \exp(4\alpha)
  y^\intercal
  B\-{diag}\tp{\sum_{i=1}^d p_i}B
  y \\
  \nonumber
  &=
  \exp(4\alpha)
  \abs{h}^\intercal\mathrm{diag}(\nu_v)
  B\-{diag}\tp{\sum_{i=1}^d p_i}B
  \mathrm{diag}(\nu_v)\abs{h} \\
  &\leq 
    \exp(4\alpha) \cdot
    \norm{
    \-{diag}(s)^{1/2}
    B
    \-{diag}(\nu_v)^{1/2}
    }_{2}^2 \norm{h}^2,
  \label{eq:general-P00-second-moment-preliminary}
\end{align}
where the $2$-norm is the Euclidean norm.

Set $s:=\sum_{i=1}^d p_i$.  The $\alpha$-marginal bound implies
\begin{align*}
  (Bs)(\!c)
  \leq\frac{\alpha d}{\Delta},
  \qquad
  (B\nu_v)(\!c)
  \leq\frac{\alpha}{\Delta},
  \quad \forall \!c \in [q].
\end{align*}
Thus, for every $x\in\mathbb R^q$, weighted
Cauchy--Schwarz and the symmetry of $B$ give
\begin{align*}
  &\norm{
    \-{diag}(s)^{1/2}
    B
    \-{diag}(\nu_v)^{1/2}x
  }_2^2
  \\
  &\qquad=
  \sum_{\!a}s(\!a)
  \tp{
    \sum_{\!c}B_{\!a,\!c}
    \sqrt{\nu_v(\!c)}x(\!c)
  }^2
  \\
  &\qquad\leq
  \sum_{\!a}s(\!a)(B\nu_v)(\!a)
  \sum_{\!c}B_{\!a,\!c}x(\!c)^2
  \\
  &\qquad\leq
  \frac{\alpha}{\Delta}
  \sum_{\!c}(Bs)(\!c)x(\!c)^2
  \leq
  \frac{\alpha^2d}{\Delta^2}\norm{x}_2^2.
\end{align*}
Hence
\eqref{eq:general-P00-second-moment-preliminary} yields
\begin{align}
  \E[\rho]{F^2}
  \leq
  C_\alpha\frac{d}{\Delta^2}\norm{h}^2,
  \label{eq:general-P00-second-moment}
\end{align}
as claimed in \eqref{eq:general-P00-second-moment-target}.
\end{proof}

\subsection{Bound for $\norm{(P_v)_{10}}$} \label{sec:general-P10}
In this subsection, we prove \eqref{eq:general-P10}.
Let $\rho=\bigotimes_{i=1}^d p_i$, and let $L=L_\rho$.
For $h \in K_0$, we have
\begin{align*}
  \norm{(P_v)_{10}h}
  &= \norm{\-{proj}_{K_1} P_v h} \\
  \tag{triangle ineq.}
  &\leq \norm{\-{proj}_{K_1} (P_v h \cdot L)} + \norm{\-{proj}_{K_1} (P_v h \cdot (1-L))} \\
  &\leq \norm{\-{proj}_{K_1} (P_v h \cdot L)} + \norm{P_v h \cdot (1-L)},
\end{align*}
where in the last line, we use the fact that $\norm{\-{proj}_{K_1}} \leq 1$.

\begin{proposition} \label{prop:master-form}
For every $i, \!c, \!d$, we define
\begin{align*}
  R_i(\!c,\!d) := \prod_{j\neq i} \tp{1 + \Psi_j(\!c,\!d)}.
\end{align*}
For every $i,\!c$, define the vector $u_{i,\!c}\in\mathbb R^q$ by setting
$u_{i,\!c}(\!d)=0$ when $\nu(v\!d)=0$; otherwise, set
\begin{align*}
  u_{i,\!c}(\!d) := \frac{h(\!d) \nu(v\!d) R_i(\!c,\!d)}{A_{\!d}p_i},
\end{align*}
where we use $A_{\!d}$ to denote the $\!d$-th row of $A$.
Then
  \begin{align*}
    \norm{\operatorname{proj}_{K_1}(P_v h \cdot L)}^2
    &= \sum_{i, \!c} \nu(v\!c) u_{i,\!c}^\intercal A \tp{ \-{diag}(\nu^{v\!c}_i)  - \nu^{v\!c}_i(\nu^{v\!c}_i)^{\intercal}} A u_{i,\!c}.
  \end{align*}
\end{proposition}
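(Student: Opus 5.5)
The identity will follow from computing $\operatorname{proj}_{K_1}$ of $g:=P_vh\cdot L$ explicitly through the Hoeffding expansion of \Cref{def:Hoeffding-expand}. Here $g$ is viewed as a function on $\-{supp}(\nu)$ that depends only on the leaf coordinates. The expansion $g=\sum_\Lambda g_\Lambda$ is valid for any $g\in L^2(\nu)$, not only for centered $g$, since the operator identity in the remark after \Cref{def:Hoeffding-expand} does not use centering. By \Cref{prop:Hoeffding-expand}, each $g_\Lambda$ is an eigenvector of $\+L_{N(v)}$ with eigenvalue $\abs{\Lambda}$, and distinct components are orthogonal. Hence $\operatorname{proj}_{K_1}g=\sum_{i}g_{\set i}$ and
\[
\norm{\operatorname{proj}_{K_1}g}^2=\sum_{i=1}^d\norm{g_{\set i}}^2,
\qquad
g_{\set i}=\+L_i\prod_{j\neq i}P_j\, g .
\]
If $g$ is not centered, its mean lies in the degree-zero part, so it does not affect the $K_1$ projection.

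Next I will compute $\prod_{j\neq i}P_jg$ pointwise. Under $\nu$, conditional on the center color $\!c$, the leaves are independent with laws $\nu^{v\!c}_j$. So $\prod_{j\ne i}P_j$ integrates out every leaf $j\neq i$ against $\nu^{v\!c}_j$, and leaves a function of $(\!c,x_i)$. Write $g(x)=\sum_{\!d}\nu(v\!d)h(\!d)\prod_j L^{\!d}_j(x_j)$, as in \Cref{obs:PvhL-on-i}. The key one-line computation is
\[
\E[x_j\sim\nu^{v\!c}_j]{L^{\!d}_j(x_j)}=\sum_{\!a}p_j(\!a)L^{\!c}_j(\!a)L^{\!d}_j(\!a)=1+\Psi_j(\!c,\!d),
\]
which uses $\nu^{v\!c}_j=p_j L^{\!c}_j$ and the identity $\E[p_j]{L^{\!c}_j}=1$, as in the proof of \Cref{prop:Psi-i}. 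Multiplying over $j\neq i$ produces $R_i(\!c,\!d)$. The remaining factor is $L^{\!d}_i(x_i)=A_{\!d,x_i}/(A_{\!d}p_i)$. Therefore
\[
\Bigl(\prod_{j\ne i}P_j g\Bigr)(\!c,x_i)=\sum_{\!d}u_{i,\!c}(\!d)A_{\!d,x_i}=(Au_{i,\!c})(x_i),
\]
using the symmetry of $A$. Center colors with $\nu(v\!d)=0$ contribute nothing, which matches the convention $u_{i,\!c}(\!d)=0$.

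Finally, I will apply $\+L_i=\mathrm{Id}-P_i$. Since the function above depends only on $(\!c,x_i)$, subtracting $P_i$ subtracts its conditional mean under $\nu^{v\!c}_i$. So $\norm{g_{\set i}}^2=\sum_{\!c}\nu(v\!c)\Var[\nu^{v\!c}_i]{Au_{i,\!c}}$. \Cref{fact:var-matrix-form} then gives the displayed quadratic form, and summing over $i$ finishes the proof.

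The main point needing care is the bookkeeping in the middle step. One must check that conditioning on all coordinates except the leaf $j$ reduces to conditioning on the center alone. This holds by the star structure and the conditional independence of the leaves given the center. One must also check that the unnormalized product $L^{\!c}$ factorizes correctly over the leaves. The rest is routine.
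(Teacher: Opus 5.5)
Your proposal is correct and follows essentially the same route as the paper: you isolate the degree-one Hoeffding components $\mathcal{L}_i\prod_{j\neq i}P_j(P_vhL)$, noting that the mean of the non-centered function drops out. You then evaluate the fiber expectation via $\mathbb{E}_{\nu^{v\mathfrak c}_j}[L^{\mathfrak d}_j]=1+\Psi_j(\mathfrak c,\mathfrak d)$ to obtain $(Au_{i,\mathfrak c})(x_i)$, and finish with \Cref{fact:var-matrix-form}. The one point you leave implicit is that $h\in K_0$ depends only on the center. The paper checks this from $\sum_i\norm{\mathcal{L}_ih}^2=\inner{h}{\mathcal{L}_{N(v)}h}=0$, but the statement's use of $h(\mathfrak d)$ already presupposes it.
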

The proof of \Cref{prop:master-form} is a tedious but straightforward calculation from the definitions and is deferred to \Cref{sec:master-form}.

\begin{proposition}\label{cor:related-to-contraction}
Suppose \Cref{cond:alpha-eps-Delta-contraction} holds.
Then, for $\Delta\geq2\alpha$,
every $h\in K_0$ satisfies
\begin{align}
  \norm{\operatorname{proj}_{K_1}(P_vhL)}^2
  &\leq
  \tp{1+O_{\alpha}\tp{\frac1\Delta}}
  \cdot \frac{1-\epsilon}{\Delta}\cdot 
  \norm{h}^2.
  \label{eq:row-bounded-B-bound}
\end{align}
\end{proposition}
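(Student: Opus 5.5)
Starting from \Cref{prop:master-form}, I bound each summand by a quadratic form in $B$. Since $\bigl(\-{diag}(\pi)-\pi\pi^\intercal\bigr)\*1=0$, we may replace $A$ by $-B$ exactly as in \eqref{eq:general-P00-tensorization-B}, and then drop the rank-one term:
\begin{align*}
  u_{i,\!c}^\intercal A\bigl(\-{diag}(\nu^{v\!c}_i)-\nu^{v\!c}_i(\nu^{v\!c}_i)^\intercal\bigr)Au_{i,\!c}
  \leq u_{i,\!c}^\intercal B\,\-{diag}(\nu^{v\!c}_i)\,B\,u_{i,\!c}.
\end{align*}
Next, $\nu^{v\!c}_i(\!a)=A_{\!c,\!a}p_i(\!a)/(Ap_i)(\!c)\leq(1-\alpha/\Delta)^{-1}p_i(\!a)$ by the $\alpha$-marginal bound, because $A\leq 1$ entrywise. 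Since $B\,\-{diag}(\cdot)\,B$ has nonnegative entries, this lets us replace $\nu^{v\!c}_i$ by $p_i$ at a cost of $1+O_\alpha(1/\Delta)$, provided we also pass to $\abs{u_{i,\!c}}$.

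The second step is to remove the dependence of $u_{i,\!c}$ on $\!c$ and $i$. Write $\abs{u_{i,\!c}(\!d)}=\abs{h(\!d)}\nu(v\!d)R_i(\!c,\!d)/(Ap_i)(\!d)$. The denominator is at least $1-\alpha/\Delta$. For the factor $R_i$, I would not bound it crudely by $\exp(\sum_j\abs{\Psi_j})=e^{O_\alpha(1)}$, since that loses a constant factor and the statement needs the sharp leading constant $1-\epsilon$. Instead I would expand
\begin{align*}
  \abs{u}^\intercal M\abs{u}=\sum_{\!d,\!d'}\abs{u(\!d)}M_{\!d\!d'}\abs{u(\!d')}
\end{align*}
and write $R_i(\!c,\!d)R_i(\!c,\!d')$ as $1$ plus an error. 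Averaging over $\!c$ with weight $\nu(v\!c)$, the error is controlled by $\sum_j(\abs{\Psi_j}\nu_v)(\!d)=O_\alpha(d/\Delta^2)=O_\alpha(1/\Delta)$ via the second bullet of \Cref{cor:Psi-related-bounds}, using the same $\abs{\prod(1+x_j)-1}\leq e^{\sum\abs{x_j}}\sum\abs{x_j}$ trick as in the proof of \eqref{eq:normalizer-conditional-mean-target}. For the cross product I would use $2ab\leq a^2+b^2$ to bring things back to single-index sums. This gives
\begin{align*}
  \sum_{\!c}\nu(v\!c)\,\abs{u_{i,\!c}}^\intercal B\,\-{diag}(p_i)\,B\,\abs{u_{i,\!c}}
  \leq\bigl(1+O_\alpha(1/\Delta)\bigr)\,y^\intercal B\,\-{diag}(p_i)\,B\,y
  +\text{(error)},
\end{align*}
where $y(\!d)=\nu(v\!d)\abs{h(\!d)}$.

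If the error term resists being absorbed multiplicatively, I would bound it separately. The diagonal-type bound $\abs{x^\intercal Mx}\leq\sum_{\!d}(M\abs{x})(\!d)\abs{x(\!d)}$ applies, and since the entries of $B\,\-{diag}(p_i)\,B\,\-{diag}(\nu_v)$ have row sums at most $\alpha^2/\Delta^2$ (by the computation at the end of \Cref{sec:general-P00}), the error is $O_\alpha(d/\Delta^2)\cdot O_\alpha(1/\Delta)\norm{h}^2$. That is lower order.

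Finally, summing over $i$ gives
\begin{align*}
  y^\intercal B\,\-{diag}(s)\,B\,y
  =\abs{h}^\intercal\-{diag}(\nu_v)^{1/2}\bigl[\-{diag}(s)^{1/2}B\,\-{diag}(\nu_v)^{1/2}\bigr]^\intercal\bigl[\cdots\bigr]\-{diag}(\nu_v)^{1/2}\abs{h},
\end{align*}
with $s=\sum_ip_i$. This is at most $\frac{1-\epsilon}{\Delta}\norm{h}^2$ by $\epsilon$-contraction, since $\norm{\-{diag}(\nu_v)^{1/2}\abs{h}}_2=\norm{h}_\nu$. Combined with the $1+O_\alpha(1/\Delta)$ factors this yields \eqref{eq:row-bounded-B-bound}.

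The main obstacle is the middle step. Handling $R_i$ so that the leading constant stays exactly $1-\epsilon$ requires exploiting the averaging over $\!c$ and $\Psi$'s small $\nu_v$-weighted row sums, rather than a uniform bound.
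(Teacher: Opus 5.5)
Your proposal is correct and follows the paper's proof step for step. You pass from $A$ to $B$ and drop the rank-one term, compare $\nu^{v\mathfrak c}_i$ with $p_i/(1-\alpha/\Delta)$, and expand the quadratic form so that the only $\mathfrak c$-dependence is $\sum_{\mathfrak c}\nu(v\mathfrak c)R_i(\mathfrak c,\mathfrak d)R_i(\mathfrak c,\mathfrak a)$. You bound this by $1+O_\alpha(1/\Delta)$ using the $\nu_v$-weighted row sums of $\abs{\Psi_j}$, and finish with $\epsilon$-contraction.

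The only difference is cosmetic: the paper proves $\sum_{\mathfrak c}\nu(v\mathfrak c)R_i(\mathfrak c,\mathfrak d)^2\leq 1+O_\alpha(1/\Delta)$ and then applies Cauchy--Schwarz, where you use $2ab\leq a^2+b^2$. Because every other factor in the double sum is nonnegative, the bound absorbs multiplicatively, so your additive fallback is never needed.
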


\begin{proposition}\label{prop:row-bounded-normalizer}
  Suppose $1\leq d\leq\Delta$,
  \Cref{cond:alpha-eps-Delta-contraction} holds, and
  $\Delta\geq2\alpha$.  Then every $h\in K_0$ satisfies
\begin{align}
  \norm{P_vh(1-L)}^2
  \leq
  O_{\alpha}\tp{
    \frac{\log(2d)}{\Delta^2}
  }
  \norm{h}^2.
  \label{eq:row-bounded-normalizer-bound}
\end{align}
\end{proposition}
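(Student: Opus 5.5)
The plan is to split $\norm{P_vh(1-L)}^2=\E[\nu]{(P_vh)^2(1-L)^2}$ according to whether $\abs{L-1}\leq t$ or $\abs{L-1}>t$, for a threshold $t$ chosen at the end. Fix $h\in K_0$. Throughout, I use the sharper intermediate form of \eqref{eq:general-P00} actually obtained in its proof, namely
\begin{align*}
  \norm{P_vh}^2=\inner{h}{(P_v)_{00}h}\leq \tp{C_\alpha\frac{d}{\Delta^2}+2\exp\tp{-c_\alpha\frac{\Delta^2}{d}}}\norm{h}^2,
\end{align*}
together with \Cref{prop:normalizer-conditional-tail} and the uniform bound \eqref{eq:normalizer-uniform-bound}, $0\leq L\leq \exp(2\alpha)$. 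Since $\abs{1-L}\leq 1+\exp(2\alpha)$ always, the trivial bound $\norm{P_vh(1-L)}^2\leq O_\alpha(1)\norm{h}^2$ lets me absorb all $\Delta$ below any threshold depending on $\alpha$ into the constant. So I assume $\Delta$ is large.

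\emph{Good event.} On $\set{\abs{L-1}\leq t}$ I simply bound $(1-L)^2\leq t^2$, which gives
\begin{align*}
  \E[\nu]{\*1_{\set{\abs{L-1}\leq t}}(P_vh)^2(1-L)^2}\leq t^2\tp{C_\alpha\frac{d}{\Delta^2}+2\exp\tp{-c_\alpha\frac{\Delta^2}{d}}}\norm{h}^2 .
\end{align*}
\emph{Bad event.} On $\set{\abs{L-1}>t}$ I use conditional Jensen, $(P_vh)^2\leq P_v(h^2)$. Since $L$ depends only on the leaves, this gives
\begin{align*}
  \E[\nu]{\*1_{\set{\abs{L-1}>t}}(P_vh)^2(1-L)^2}
  &\leq(1+\exp(2\alpha))^2\sum_{\!c}\nu(v\!c)h(\!c)^2\Pr[\nu^{v\!c}]{\abs{L-1}>t}\\
  &\leq 2(1+\exp(2\alpha))^2\exp\tp{-c_\alpha\frac{\Delta^2}{d}t^2}\norm{h}^2,
\end{align*}
where the last step is \Cref{prop:normalizer-conditional-tail}. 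That proposition requires $t\geq 2C_\alpha d/\Delta^2$.

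\emph{Choice of $t$.} I take $t^2:=\frac{2}{c_\alpha}\cdot\frac{d\log\Delta}{\Delta^2}$. For large $\Delta$ this satisfies $t\geq 2C_\alpha d/\Delta^2$, because $t\gtrsim\sqrt{d\log\Delta}/\Delta$ while $d/\Delta^2\leq\sqrt d/\Delta^{3/2}$. With this choice the bad-event term is $O_\alpha(\Delta^{-2})\norm{h}^2$, which is within $O_\alpha(\log(2d)/\Delta^2)$ since $d\geq1$. In the good-event term, the exponential piece contributes $t^2\cdot 2\exp(-c_\alpha\Delta)=O_\alpha(\Delta^{-2})$, using $d\leq\Delta$ and $t^2=O_\alpha(\log\Delta/\Delta)$. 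The main piece is $O_\alpha\tp{d^2\log\Delta/\Delta^4}$.

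\emph{Main obstacle.} The step that requires care is checking that $d^2\log\Delta/\Delta^4=O(\log(2d)/\Delta^2)$ uniformly in $1\leq d\leq\Delta$. This is exactly the place where $\log(2d)$ rather than $\log\Delta$ has to come out. I split into two cases.
\begin{itemize}
\item If $d\geq\sqrt\Delta$, then $\log\Delta\leq2\log d\leq 2\log(2d)$, and $d^2/\Delta^2\leq1$. Hence the piece is at most $2\log(2d)/\Delta^2$.
\item If $d<\sqrt\Delta$, then $d^2/\Delta^2<1/\Delta$. Hence the piece is at most $\log\Delta/\Delta^3=O(1/\Delta^2)$, which is at most $O(\log(2d)/\Delta^2)$ because $\log(2d)\geq\log 2$.
\end{itemize}
Summing the three contributions gives \eqref{eq:row-bounded-normalizer-bound}. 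If this case analysis turned out to be insufficient, the fallback would be to choose $t$ adaptively, namely $t^2\asymp d\log(2d)/\Delta^2+d\log\Delta/\Delta^2\cdot\*1_{\set{d<\sqrt\Delta}}$. I expect the simple choice above already suffices.
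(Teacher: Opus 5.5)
Your proof is correct and follows essentially the paper's argument. You use the same split on $\set{\abs{L-1}\leq t}$. On the good event you use the $O_\alpha(d/\Delta^2)$ second-moment bound from the proof of \eqref{eq:general-P00}, and on the bad event you use conditional Jensen (the paper's weighted Cauchy--Schwarz under $\rho$) together with \Cref{prop:normalizer-conditional-tail}. The remaining differences are bookkeeping. The paper changes measure to $\rho$ and uses $L\geq1/2$ on the good event, and it chooses $t^2\asymp 1/\Delta+d\log(2d)/\Delta^2$ so that $\log(2d)$ appears directly. Your choice $t^2\asymp d\log\Delta/\Delta^2$ instead needs the case split at $d=\sqrt\Delta$, which you carry out validly.
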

The proofs of \Cref{cor:related-to-contraction} and \Cref{prop:row-bounded-normalizer} are deferred to
\Cref{sec:related-to-contraction} and \Cref{sec:row-bounded-normalizer}, respectively.
Now, we are ready to prove \eqref{eq:general-P10}.

\begin{proof}[Proof of \eqref{eq:general-P10}]
  Combine \Cref{cor:related-to-contraction} and \Cref{prop:row-bounded-normalizer}.
\end{proof}

\subsection{Proof of \Cref{cor:related-to-contraction}}
\label{sec:related-to-contraction}

\begin{proof}
  Fix $h\in K_0$. Since $A=\*1\*1^\intercal-B$ and
  \[\tp{\mathrm{diag}(\nu^{v\!c}_i) - \nu^{v\!c}_i(\nu^{v\!c}_i)^\intercal}\*1=\*0,\]
 \Cref{prop:master-form} gives
\begin{align*}
  \norm{\operatorname{proj}_{K_1}(P_vhL)}^2
  &=
  \sum_{i,\!c}\nu(v\!c)
  u_{i,\!c}^\intercal B \tp{\mathrm{diag}(\nu^{v\!c}_i) - \nu^{v\!c}_i(\nu^{v\!c}_i)^\intercal} B u_{i,\!c} \\
  &\leq 
  \sum_{i,\!c}\nu(v\!c)
  \abs{u_{i,\!c}^\intercal} B\, \mathrm{diag}(\nu^{v\!c}_i) B \abs{u_{i,\!c}} \\
  &\leq 
  \tp{1 + O_\alpha(\Delta^{-1})} \sum_{i,\!c}\nu(v\!c)
  \abs{u_{i,\!c}^\intercal} B\, \mathrm{diag}(p_i) B \abs{u_{i,\!c}},
\end{align*}
where in both inequalities, we use the fact that $B$ is nonnegative and
\begin{align*}
  \nu^{v\!c}_i(\!a) &= \frac{A_{\!c,\!a}p_i(\!a)}{(Ap_i)(\!c)} 
  = \frac{A_{\!c,\!a} p_i(\!a)}{1-(Bp_i)(\!c)}
  \leq \frac{p_i(\!a)}{1-\alpha/\Delta}.
\end{align*}
The last inequality follows from $A \leq 1$ and $(Bp_i)(\!c) \leq \alpha/\Delta$.
Plugging in the definition of $u_{i,\!c}$ from \Cref{prop:master-form}, we have
\begin{align*}
  &\norm{\operatorname{proj}_{K_1}(P_vhL)}^2 \\
  &\leq \tp{1 + O_\alpha(\Delta^{-1})} \sum_{i,\!c} \nu(v\!c) \sum_{\!d,\!a} \abs{u_{i,\!c}(\!d)} (B\mathrm{diag}(p_i)B)_{\!d,\!a} \abs{u_{i,\!c}(\!a)} \\
  &= \tp{1 + O_\alpha(\Delta^{-1})} \sum_{i,\!c} \nu(v\!c) \sum_{\!d,\!a} \frac{\abs{h(\!d)\nu(v\!d)}\abs{h(\!a)\nu(v\!a)}R_i(\!c,\!d)R_i(\!c,\!a)}{(Ap_i)(\!d)(Ap_i)(\!a)} (B\mathrm{diag}(p_i)B)_{\!d,\!a} \\
  &= \tp{1 + O_\alpha(\Delta^{-1})} \sum_{i} \sum_{\!d,\!a} \frac{\abs{h(\!d)\nu(v\!d)}\abs{h(\!a)\nu(v\!a)}(B\mathrm{diag}(p_i)B)_{\!d,\!a}}{(Ap_i)(\!d)(Ap_i)(\!a)} \sum_{\!c}\nu(v\!c) R_i(\!c,\!d)R_i(\!c,\!a).
\end{align*}
We claim that for every $\!d$ and $i\in [d]$,
\begin{align} \label{eq:row-bounded-averaged-R-target}
  \sum_{\!c}\nu(v\!c)R_i(\!c,\!d)^2 \leq 1+O_\alpha(\Delta^{-1}).
\end{align}
By the Cauchy--Schwarz inequality, this implies that
\begin{align*}
  \sum_{\!c}\nu(v\!c) R_i(\!c,\!d)R_i(\!c,\!a)
  &\leq  \tp{
    \sum_{\!c}\nu(v\!c)R_i(\!c,\!d)^2
  }^{1/2}
  \tp{
    \sum_{\!c}\nu(v\!c)R_i(\!c,\!a)^2
  }^{1/2}
  \leq 1+O_\alpha(\Delta^{-1}).
\end{align*}
Together, we have
\begin{align*}
  \norm{\operatorname{proj}_{K_1}(P_vhL)}^2
  &\leq \tp{1 + O_\alpha(\Delta^{-1})} \sum_{i} \sum_{\!d,\!a} \frac{\abs{h(\!d)\nu(v\!d)}\abs{h(\!a)\nu(v\!a)}(B\mathrm{diag}(p_i)B)_{\!d,\!a}}{(Ap_i)(\!d)(Ap_i)(\!a)} \\
  &= \tp{1 + O_\alpha(\Delta^{-1})} \sum_{i} \sum_{\!d,\!a} \frac{\abs{h(\!d)\nu(v\!d)}\abs{h(\!a)\nu(v\!a)}(B\mathrm{diag}(p_i)B)_{\!d,\!a}}{(1-(Bp_i)(\!d))(1-(Bp_i)(\!a))} \\
  &\leq \tp{1 + O_\alpha(\Delta^{-1})} \sum_{i} \sum_{\!d,\!a} \abs{h(\!d)\nu(v\!d)}\abs{h(\!a)\nu(v\!a)}(B\mathrm{diag}(p_i)B)_{\!d,\!a} \\
  &= \tp{1 + O_\alpha(\Delta^{-1})} \sum_{i} \abs{h}^\intercal \mathrm{diag}(\nu_v) B\mathrm{diag}(p_i)B \mathrm{diag}(\nu_v) \abs{h} \\
  &= \tp{1 + O_\alpha(\Delta^{-1})} \abs{h}^\intercal \mathrm{diag}(\nu_v) B\mathrm{diag}\tp{\textstyle \sum_i p_i}B \mathrm{diag}(\nu_v) \abs{h} \\
  &\leq \tp{1 + O_\alpha(\Delta^{-1})} \norm{ \-{diag}\tp{\sum_{i=1}^d p_i}^{1/2} B \-{diag}\tp{\nu_v}^{1/2}}_2^2 \norm{h}^2 \\
  \tag{$\epsilon$-contraction}
  &\leq \tp{1 + O_\alpha(\Delta^{-1})} \frac{1-\epsilon}{\Delta}\cdot \norm{h}^2.
\end{align*}
This finishes the proof of \Cref{cor:related-to-contraction}.
It only remains to prove \eqref{eq:row-bounded-averaged-R-target}.
By definition, 
\begin{align*}
  R_i(\!c,\!d)^2
  &\leq \prod_j\tp{1 + \abs{\Psi_j}(\!c,\!d)}^2 
  \leq \exp\tp{\sum_j 2\abs{\Psi_j}(\!c,\!d)} \\
  \tag{$\e^x \leq 1 + x\e^x$}
  &\leq 1 + \exp\tp{\sum_j 2\abs{\Psi_j}(\!c,\!d)} \cdot \tp{\sum_j 2\abs{\Psi_j}(\!c,\!d)}.
\end{align*}
Hence
\begin{align*}
  \sum_{\!c}\nu(v\!c)R_i(\!c,\!d)^2
  &\leq 1 + O_\alpha(1)\sum_j\sum_{\!c}\nu(v\!c)\abs{\Psi_j}(\!c,\!d) \\
  &=1+O_\alpha(1)\sum_j(\abs{\Psi_j}\nu_v)(\!d) \\
  \tag{\Cref{cor:Psi-related-bounds}}
  &= 1 + O_\alpha(d/\Delta^2)
  = 1 + O_\alpha(\Delta^{-1}).
\end{align*}
This finishes the proof.
\end{proof}

\subsection{Proof of \Cref{prop:row-bounded-normalizer}}
\label{sec:row-bounded-normalizer}

\begin{proof}
Fix $h\in K_0$.  Since $h$ depends only on the center, we write its
value at color $\!c$ as $h(\!c)$.  Center colors of zero $\nu$-mass are
omitted below.  Set
\begin{align*}
  \rho:=\bigotimes_{i=1}^d p_i.
\end{align*}
By \eqref{eq:def-local-mu}, \eqref{def-Pi}, and \Cref{def:L}, the
conditional leaf law $\nu^{v\!c}$ is absolutely continuous with
respect to $\rho$.  Hence, for every $x\in\-{supp}(\rho)$,
\begin{align}
  L(x)
  &=
  \sum_{\!c}\nu(v\!c)\frac{\nu^{v\!c}(x)}{\rho(x)},
  &
  P_vh(x)L(x)
  &=
  \sum_{\!c}\nu(v\!c)h(\!c)
  \frac{\nu^{v\!c}(x)}{\rho(x)}.
  \label{eq:normalizer-likelihood-identities}
\end{align}
Changing measure from the leaf marginal of $\nu$ to $\rho$ gives the
exact identity
\begin{align}
  \norm{P_vh(1-L)}^2
  =
  \E[\rho]{
    (P_vhL)^2\frac{(1-L)^2}{L}
  }.
  \label{eq:normalizer-exact-error}
\end{align}
The integrand on the right is defined to be zero when $L=0$; this is
consistent because \eqref{eq:normalizer-likelihood-identities} then
gives $P_vhL=0$.

It remains to estimate the right-hand side of
\eqref{eq:normalizer-exact-error}.  Values of $\Delta$ bounded in terms
of $\alpha$ can be absorbed into the implicit constant in
\eqref{eq:row-bounded-normalizer-bound}: indeed,
\eqref{eq:normalizer-uniform-bound} and the contraction of $P_v$ give
\begin{align*}
  \norm{P_vh(1-L)}
  \leq
  \tp{1+\exp(2\alpha)}\norm{h}.
\end{align*}
We may therefore assume that $\Delta$ is sufficiently large in terms
of $\alpha$.  Choose
\begin{align*}
  t
  :=
  C_\alpha
  \sqrt{
    \frac1\Delta
    +\frac{d\log(2d)}{\Delta^2}
  },
\end{align*}
where the constant is sufficiently large.  Uniformly over
$1\leq d\leq\Delta$, we then have $t\leq1/2$, and $t$ satisfies the
hypothesis of \Cref{prop:normalizer-conditional-tail}.  Moreover,
\begin{align*}
  \frac{\Delta^2}{d}t^2
  =
  C_\alpha^2
  \tp{
    \frac{\Delta}{d}+\log(2d)
  }
  \geq
  C_\alpha^2\log(2\Delta),
\end{align*}
where the last inequality uses
$\Delta/d\geq\log(\Delta/d)$.  After increasing $C_\alpha$ if
necessary, \Cref{prop:normalizer-conditional-tail} gives
\begin{align}
  \sup_{\!c:\,\nu(v\!c)>0}
  \Pr[\nu^{v\!c}]{\abs{L-1}>t}
  \leq
  \frac{2}{\Delta^3}.
  \label{eq:normalizer-tail-at-t}
\end{align}

On the event $\abs{L-1}\leq t$, we have $L\geq1/2$.  Hence
\eqref{eq:general-P00-second-moment-target} and
\eqref{eq:normalizer-exact-error} give
\begin{align}
  \E[\rho]{
    \*1_{\set{\abs{L-1}\leq t}}
    (P_vhL)^2\frac{(1-L)^2}{L}
  }
  &\leq
  2t^2\E[\rho]{(P_vhL)^2}
  \notag\\
  &\leq
  C_\alpha
  \frac{\log(2d)}{\Delta^2}\norm{h}^2.
  \label{eq:normalizer-good-event}
\end{align}
For the complementary event, weighted Cauchy--Schwarz and
\eqref{eq:normalizer-likelihood-identities} give, for every
$x\in\-{supp}(\rho)$,
\begin{align*}
  (P_vh(x)L(x))^2
  &=
  \tp{
    \sum_{\!c}\nu(v\!c)h(\!c)
    \frac{\nu^{v\!c}(x)}{\rho(x)}
  }^2
  \\
  &\leq
  \tp{
    \sum_{\!c}\nu(v\!c)
    \frac{\nu^{v\!c}(x)}{\rho(x)}
  }
  \tp{
    \sum_{\!c}\nu(v\!c)h(\!c)^2
    \frac{\nu^{v\!c}(x)}{\rho(x)}
  }
  \\
  &=
  L(x)
  \sum_{\!c}\nu(v\!c)h(\!c)^2
  \frac{\nu^{v\!c}(x)}{\rho(x)}.
\end{align*}
Together with \eqref{eq:normalizer-uniform-bound} and
\eqref{eq:normalizer-tail-at-t}, this yields
\begin{align}
  &\E[\rho]{
    \*1_{\set{\abs{L-1}>t}}
    (P_vhL)^2\frac{(1-L)^2}{L}
  }
  \notag\\
  &\qquad\leq
  C_\alpha
  \sum_{x\in\-{supp}(\rho)}
  \rho(x)\*1_{\set{\abs{L(x)-1}>t}}
  \sum_{\!c}\nu(v\!c)h(\!c)^2
  \frac{\nu^{v\!c}(x)}{\rho(x)}
  \notag\\
  &\qquad=
  C_\alpha
  \sum_{\!c}\nu(v\!c)h(\!c)^2
  \sum_{x\in\-{supp}(\rho)}
  \nu^{v\!c}(x)\*1_{\set{\abs{L(x)-1}>t}}
  \notag\\
  &\qquad=
  C_\alpha
  \sum_{\!c}\nu(v\!c)h(\!c)^2
  \Pr[\nu^{v\!c}]{\abs{L-1}>t}
  \notag\\
  &\qquad\leq
  \frac{C_\alpha}{\Delta^3}\norm{h}^2
  \leq
  C_\alpha\frac{\log(2d)}{\Delta^2}\norm{h}^2.
  \label{eq:normalizer-bad-event}
\end{align}
Combining \eqref{eq:normalizer-good-event} and
\eqref{eq:normalizer-bad-event} in
\eqref{eq:normalizer-exact-error} proves
\eqref{eq:row-bounded-normalizer-bound}.
\end{proof}

\subsection{Proof of \Cref{prop:master-form}}
\label{sec:master-form}

\begin{proof}
Fix $h\in K_0$.  In this proposition, $L$ denotes $L_\rho$ for the
reference product distribution
\begin{align*}
  \rho:=\bigotimes_{j=1}^d p_j.
\end{align*}
By \eqref{eq:def-local-mu}, the leaf marginal of $\nu$ is absolutely
continuous with respect to $\rho$.  Thus every sum containing a quotient
by $p_j(\!e)$ may be restricted to $\!e\in\-{supp}(p_j)$; we use this
convention throughout the proof.  Center colors of zero $\nu$-mass do
not contribute to any of the sums below and are omitted.  The
conditional leaf laws on such center fibers may be chosen arbitrarily,
and for every $\!d$ with $\nu(v\!d)=0$, we take
$u_{i,\!c}(\!d)=0$.  For $\nu(v\!d)>0$, \eqref{eq:def-local-mu}
implies that $A_{\!d}p_j>0$ for every $j$, so all the ratios used below
are well-defined.

We first verify that $h$ depends only on the center.  Since
$h\in\Ker(\+L_{N(v)})$, \eqref{def-LB} and the fact that every
$\+L_i$ is an orthogonal projection give
\begin{align*}
  0
  =\inner{h}{\+L_{N(v)}h}
  =\sum_{i=1}^d\inner{h}{\+L_i h}
  =\sum_{i=1}^d\norm{\+L_i h}^2.
\end{align*}
Hence $\+L_i h=0$, or equivalently $P_i h=h$, for every $i$.
Conditional independence in \eqref{eq:def-local-mu} then yields
\begin{align*}
  h=\prod_{i=1}^dP_i h=\E[\nu]{h\mid X_v}.
\end{align*}
We may therefore write $h(\!d)$ for the value of $h$ when
$X_v=\!d$.

Set $F:=P_vhL$.  By \eqref{def-Pi}, both $P_vh$ and $L$ depend only on
the leaf configuration, so the same is true of $F$.  The function $F$
need not be centered, so we apply
\Cref{def:Hoeffding-expand,prop:Hoeffding-expand} to
$F-\E[\nu]{F}$.  Since $K_1\subseteq\*1^\perp$, subtracting this
constant does not change the projection onto $K_1$.  For each
$i\in[d]$, its degree-one component is
\begin{align*}
  \+L_i\prod_{j\ne i}P_jF
  &=
  \E[\nu]{F\mid X_v,X_i}-\E[\nu]{F\mid X_v}.
\end{align*}
Here the constant subtracted from $F$ disappears because
$P_j\*1=\*1$ and $\+L_i\*1=\*0$.
Indeed, conditional independence of the leaves shows that the product
of the projections with $j\ne i$ averages out precisely those leaves,
and applying $P_i$ then averages out the last leaf.  Orthogonality of
the degree-one components now gives
\begin{align}
  \norm{\operatorname{proj}_{K_1}F}^2
  &=
  \sum_{i=1}^d
  \E[\nu]{
    \tp{
      \E[\nu]{F\mid X_v,X_i}
      -\E[\nu]{F\mid X_v}
    }^2
  }
  \notag\\
  &=
  \sum_{i=1}^d\sum_{\!c}
  \nu(v\!c)
  \Var[\nu^{v\!c}_i]{
    \E[\nu]{F(X)\mid X_v=\!c,X_i}
  }.
  \label{eq:master-fiber-variance}
\end{align}

By Bayes' rule, we have
\begin{align*}
  F(x) = \sum_{\!d}\nu_v(\!d)h(\!d)L^{\!d}(x),
\end{align*}
which implies that
\begin{align*}
   \E[\nu]{F(X)\mid X_v=\!c,X_i=\!a}
   &= \sum_{\!d}\nu(v\!d)h(\!d) \E[x\sim \nu^{v\!c}]{L^{\!d}_i(\!a) \prod_{j\neq i} L^{\!d}_j(x_j)} \\
  \tag{independence}
   &= \sum_{\!d}\nu(v\!d)h(\!d) L^{\!d}_i(\!a) \prod_{j\neq i} \E[\nu^{v\!c}_j]{L^{\!d}_j} \\
   &= \sum_{\!d}\nu(v\!d)h(\!d) L^{\!d}_i(\!a) \prod_{j\neq i} \E[p_j]{L^{\!c}_j L^{\!d}_j} \\
   &= \sum_{\!d}\nu(v\!d)h(\!d) L^{\!d}_i(\!a) \prod_{j\neq i} \tp{1 + \Psi_j(\!c,\!d)} \\
   &= \sum_{\!d}\nu(v\!d)h(\!d) \frac{A_{\!a,\!d}}{(Ap_i)(\!d)} R_i(\!c,\!d) = (Au_{i,\!c})(\!a).
\end{align*}
We finish the proof by applying \Cref{fact:var-matrix-form} to $\Var[\nu^{v\!c}_i]{g}$, where
\begin{align*}
  g(\!a) &:= \E[\nu]{F(X)\mid X_v=\!c,X_i=\!a}. \qedhere
\end{align*}
\end{proof}


\section{Application to the anti-ferromagnetic Potts model}
\label{sec:Potts}
In this section, we prove \Cref{thm:Potts}.
Let $G=(V,E)$ be an $n$-vertex simple graph with maximum degree
$\Delta$ and girth at least $5$.  Fix $\delta\in(0,1)$,
$\beta\in[0,1]$, and an integer $q\geq2$ satisfying
\begin{align*}
  q\geq(1+\delta)(1-\beta)\Delta.
\end{align*}
Recall that the anti-ferromagnetic $q$-state Potts distribution
$\mu_G$ on $[q]^V$ is defined by
\begin{align*}
  \mu_G(\sigma)
  \propto
  \beta^{\abs{\set{uv\in E:\sigma_u=\sigma_v}}},
  \qquad \sigma\in[q]^V.
\end{align*}
Thus every edge whose endpoints have the same color contributes a
factor $\beta$, and there is no external field.  All conditioned laws
and marginals below are formed from $\mu_G$.

\begin{proof}[Proof of \Cref{thm:Potts}]
We finish the proof by applying \Cref{thm:general}.
To prove the spectral-gap bound, it suffices to verify
\Cref{cond-local-spectral-contraction}.
Put $\vartheta:=1-\beta$.  The interaction matrix and its deficit matrix
are
\begin{align*}
  A=\*1\*1^\intercal-\vartheta\mathbb I,
  \qquad
  B:=\*1\*1^\intercal-A=\vartheta\mathbb I.
\end{align*}
We verify \Cref{cond-local-spectral-contraction} with
\begin{align*}
  \alpha:=\frac1\delta,
  \qquad
  \epsilon:=\frac{\delta}{1+\delta}.
\end{align*}

Fix $v\in V$ and a feasible pinning $\tau$ outside $N[v]$, and put
$d:=\deg_G(v)$.  For $i\in N(v)$ and $\!a\in[q]$, let
$k_{i,\!a}$ be the number of pinned neighbors of $i$ outside $N[v]$
whose color is $\!a$.  Since $G$ has girth at least $5$, $N[v]$
induces a star.  Thus, writing
\begin{align*}
  p_i:=\mu^\tau_{G-v,i},
\end{align*}
we have
\begin{align*}
  p_i(\!a)
  =
  \frac{\beta^{k_{i,\!a}}}
       {\sum_{\!b=1}^q\beta^{k_{i,\!b}}}.
\end{align*}
When $\beta=0$, we use the convention $0^0=1$; feasibility of $\tau$
ensures that the denominator is positive.  Bernoulli's inequality and
$\sum_{\!b}k_{i,\!b}\leq\Delta-1$ give
\begin{align*}
  \sum_{\!b=1}^q\beta^{k_{i,\!b}}
  &=
  \sum_{\!b=1}^q(1-\vartheta)^{k_{i,\!b}}
  \geq
  q-\vartheta\sum_{\!b=1}^q k_{i,\!b}
  \geq
  q-\vartheta(\Delta-1).
\end{align*}
Consequently, if $\vartheta>0$, then the assumption on $q$ yields
\begin{align}
  \vartheta p_i(\!a)
  \leq
  \frac{\vartheta}{q-\vartheta(\Delta-1)}
  \leq
  \frac1{\delta\Delta+1}
  \leq
  \frac1{\delta\Delta}.
  \label{eq:Potts-leaf-interaction-bound}
\end{align}
The same conclusion is immediate when $\vartheta=0$.

For each $\!a\in[q]$, set
\begin{align*}
  w_{\!a}
  &:=%
  \prod_{i\in N(v)}\tp{1-\vartheta p_i(\!a)},
  &
  Z&:=\sum_{\!b=1}^q w_{\!b}.
\end{align*}
The tree recursion on the conditioned star gives
\begin{align*}
  \nu_v(\!a)
  :=
  \mu^\tau_{G,v}(\!a)
  =
  \frac{w_{\!a}}Z.
\end{align*}
Since $\prod_i(1-u_i)\geq1-\sum_i u_i$ for $u_i\in[0,1]$,
\begin{align*}
  Z
  \geq
  \sum_{\!a=1}^q\tp{1-\vartheta\sum_{i\in N(v)}p_i(\!a)}
  =q-\vartheta d.
\end{align*}
It follows, again trivially when $\vartheta=0$, that
\begin{align}
  \vartheta\nu_v(\!a)
  \leq
  \frac{\vartheta}{q-\vartheta d}
  \leq
  \frac1{\delta\Delta}.
  \label{eq:Potts-center-interaction-bound}
\end{align}
Because $B=\vartheta\mathbb I$, \eqref{eq:Potts-leaf-interaction-bound}
and \eqref{eq:Potts-center-interaction-bound} prove the
$\alpha$-bounded marginal condition.

It remains to verify the contraction condition.  The case
$\vartheta=0$ is immediate, so assume $\vartheta>0$.  For each
$\!a\in[q]$, put
\begin{align*}
  T_{\!a}:=\vartheta\sum_{i\in N(v)}p_i(\!a).
\end{align*}
By \eqref{eq:Potts-leaf-interaction-bound} and
$\log(1-u)\geq-u/(1-u)$ for $u\in[0,1)$,
\begin{align*}
  \log w_{\!a}
  &\geq
  -\tp{1+\frac1{\delta\Delta}}T_{\!a}.
\end{align*}
Since $\sum_{\!a}T_{\!a}=\vartheta d$, the arithmetic--geometric
mean inequality gives
\begin{align}
  Z
  \geq
  q\tp{\prod_{\!a=1}^q w_{\!a}}^{1/q}
  \geq
  q\exp\tp{
    -\tp{1+\frac1{\delta\Delta}}
    \frac{\vartheta d}{q}
  }.
  \label{eq:Potts-normalizer-lower-bound}
\end{align}
On the other hand, $w_{\!a}\leq\exp(-T_{\!a})$.  Therefore
$T_{\!a}\exp(-T_{\!a})\leq \e^{-1}$ and
\eqref{eq:Potts-normalizer-lower-bound} imply
\begin{align*}
  &\Delta\vartheta^2\nu_v(\!a)
    \sum_{i\in N(v)}p_i(\!a)
  \\
  &\qquad=
  \Delta\vartheta T_{\!a}\frac{w_{\!a}}Z
  \\
  &\qquad\leq
  \frac{\vartheta\Delta}{q}
  \exp\tp{
    \tp{1+\frac1{\delta\Delta}}
    \frac{\vartheta d}{q}-1
  }.
\end{align*}
For sufficiently large $\Delta$ in terms of $\delta$, we may assume
$\Delta\geq\delta^{-2}$.  Using $d\leq\Delta$ and
$\vartheta\Delta/q\leq1/(1+\delta)$, we then have
\begin{align*}
  \tp{1+\frac1{\delta\Delta}}
  \frac{\vartheta d}{q}
  \leq1.
\end{align*}
Hence, uniformly in $\!a$,
\begin{align*}
  \vartheta^2\nu_v(\!a)
  \sum_{i\in N(v)}p_i(\!a)
  \leq
  \frac1{(1+\delta)\Delta}
  =
  \frac{1-\epsilon}{\Delta}.
\end{align*}
Since $B=\vartheta\mathbb I$, the matrix in the contraction condition
is diagonal and therefore
\begin{align*}
  &\norm{
    \-{diag}\tp{\sum_{i\in N(v)}p_i}^{1/2}
    B
    \-{diag}(\nu_v)^{1/2}
  }_2^2
  \\
  &\qquad=
  \vartheta^2\max_{\!a\in[q]}\set{
    \nu_v(\!a)\sum_{i\in N(v)}p_i(\!a)
  }
  \leq
  \frac{1-\epsilon}{\Delta}.
\end{align*}
The estimates above are uniform over $v$ and all feasible pinnings
$\tau$.  Thus the Potts measure has local
$(\delta^{-1},\delta/(1+\delta))$-spectral contraction.

The Glauber dynamics is irreducible.  Indeed, when $\beta>0$ the Gibbs
measure has full support.  When $\beta=0$, the assumption on $q$ gives
$q\geq\Delta+2$ for sufficiently large $\Delta$ in terms of $\delta$,
and the proper-coloring Glauber dynamics is irreducible in this range.
After increasing the lower bound on $\Delta$ once more if necessary,
all the hypotheses of \Cref{thm:general} hold.  We conclude that the
discrete-time Glauber dynamics has spectral gap
$\Omega_\delta(1/n)$.

Finally, consider the smallest positive mass of a configuration.  If
$\beta=0$, the measure is uniform on a set of at most $q^n$ proper
colorings, so
\begin{align*}
  \min_{\sigma\in\-{supp}(\mu)}\mu(\sigma)\geq q^{-n}.
\end{align*}
If $0<\beta\leq1$, then every configuration has unnormalized weight at
least $\beta^{\abs{E}}\geq\beta^{n\Delta/2}$, while the partition
function is at most $q^n$.  Hence
\begin{align*}
  \min_{\sigma\in[q]^V}\mu(\sigma)
  \geq
  \beta^{n\Delta/2}q^{-n}.
\end{align*}
Substituting these bounds and the spectral-gap estimate into the
standard worst-start mixing bound
\eqref{eq:spgap-implies-mixing-time} gives, respectively,
\begin{align*}
  t_{\mathrm{mix}}(\varepsilon)
  &=
  O_\delta\tp{
    n^2\log q+n\log\frac1\varepsilon
  },
  &&\beta=0,
  \\
  t_{\mathrm{mix}}(\varepsilon)
  &=
  O_\delta\tp{
    n^2\log q+n^2\Delta\log\frac1\beta
    +n\log\frac1\varepsilon
  },
  &&0<\beta\leq1.
\end{align*}
This proves the theorem.
\end{proof}

\bibliographystyle{alpha}
\begingroup
\footnotesize
\raggedright
\bibliography{refs}
\endgroup

\end{document}